\newcommand{\rnote}[1]{}
\newcommand{\snote}[1]{}
\newcommand{\pnote}[1]{}
\begin{document}

\title{The SDP value for random two-eigenvalue CSPs}
\author{Sidhanth Mohanty\thanks{EECS Department, University of California Berkeley.  Supported by NSF grant CCF-1718695} \and Ryan O'Donnell\thanks{Computer Science Department, Carnegie Mellon University.  Supported by NSF grant CCF-1717606. This material is based upon work supported by the National Science Foundation under grant numbers listed above. Any opinions, findings and conclusions or recommendations expressed in this material are those of the author and do not necessarily reflect the views of the National Science Foundation (NSF).}
 \and Pedro Paredes${}^{\text{\textdagger}}$}
\date{\today}
\maketitle

\begin{abstract}
    We precisely determine the SDP value (equivalently, quantum value) of large random instances of certain kinds of constraint satisfaction problems, ``two-eigenvalue 2CSPs''.  We show this SDP value coincides with the spectral relaxation value, possibly indicating a computational threshold.   Our analysis extends the previously resolved cases of random regular $\twoxor$ and $\naethree$, and includes new cases such as random $\sort_4$ (equivalently, $\chsh$) and $\forrelation$ CSPs.  Our techniques include new generalizations of the nonbacktracking operator, the Ihara--Bass Formula, and the Friedman/Bordenave proof of Alon's Conjecture.
\end{abstract}

\thispagestyle{empty}
\setcounter{page}{0}
\newpage
\thispagestyle{empty}
\tableofcontents
\setcounter{page}{0}
\newpage

\section{Introduction}      \label{sec:intro}
This work is concerned with the average-case complexity of constraint satisfaction problems (CSPs).  In the theory of algorithms and complexity, the most difficult instances of a given CSP are arguably random (sparse) instances. Indeed, the assumed intractability of random CSPs underlies various cryptographic proposals for one-way functions~\cite{Gol00,JP00}, pseudorandom generators~\cite{BFKL94}, public key encryption~\cite{ABW10}, and indistinguishability obfuscation~\cite{Lin17}, as well as hardness results for learning~\cite{DS16} and optimization~\cite{Fei02}.  Random CSPs also provide a rich testbed for algorithmic and lower-bound techniques based on statistical physics~\cite{MM09} and convex relaxation hierarchies~\cite{KMOW17,RRS17}.

For a random, say, $\maxcut$ instance average degree~$d$, its optimum value is with high probability~(whp) concentrated around a certain function of~$d$.  Similarly, given a random $\threesat$ instance where each variable participates in an average of~$d$ clauses, the satisfiability status is whp determined by~$d$.  However explicitly working out the optimum/satisfiability as a function of~$d$ is usually enormously difficult; see, for example, Ding--Sly--Sun's landmark verification~\cite{DSS15} of the $\ksat$ threshold for sufficiently large~$k$, or Talagrand's proof~\cite{Tal06} of the Parisi formula for the Sherrington--Kirkpatrick model ($\maxcut$ with random Gaussian edge weights).  The latter was consequently used by Dembo--Montanari--Sen~\cite{DMS17} (see also~\cite{Sen18}) to determine that the $\maxcut$ value in a random $d$-regular graph is a $\frac12 + \frac{P^*}{\sqrt{d}}(1 \pm o_d(1))$ fraction of edges (whp), where $P^* \approx .7632$ is an analytic constant arising from Parisi's formula.


\paragraph{Computational gaps for certification.}  Turning to computational issues, there are two main algorithmic tasks associated with an $n$-variable CSP: \emph{searching} for an assignment achieving large value (hopefully near to the optimum), and \emph{certifying}  (as, e.g., convex relaxations do) that no assignment achieves some larger value.  Let's take again the example of random $d$-regular $\maxcut$, where whp we have $\OPT \approx \frac12 + \frac{P^*}{\sqrt{d}}$.  It follows from~\cite{Lyo17} there is an efficient algorithm that whp finds a cut of value at least $\frac12 + \frac{2/\pi}{\sqrt{d}}$. One might say that this provides a $\frac{2}{\pi P^*}$-approximation for the search problem,\footnote{Depending on one's taste in normalization; i.e., whether one prefers the objective function $\avg_{(u,v) \in E} (\frac12 - \frac12 x_u x_v)$ or $-\avg_{(u,v) \in E} x_u x_v$, for $x \in \{\pm 1\}^V$.} where $\frac{2}{\pi P^*} \approx .83$. On the other side, the $\maxcut$ in a $d$-regular graph $G$ is always at most $\frac12 + \frac{-\lambda_{\text{min}}(G)}{2d}$, and Friedman's proof of Alon's Conjecture~\cite{Fri08} shows that $-\lambda_{\text{min}}(G) \leq 2\sqrt{d-1} + o_n(1)$ whp; thus computing the smallest eigenvalue efficiently certifies $\OPT \lessapprox \frac12 + \frac{1}{\sqrt{d}}$.  One might say that this efficient spectral algorithm provides a $\frac{1}{P^*}$-approximation for the certification problem, where $\frac{1}{P^*} \approx 1.31$.

It is a very interesting question whether either of these approximation algorithms can be improved. On one hand, it would seem desirable to have efficient algorithms that come arbitrarily close to matching the ``true'' answer on random inputs. On the other hand, the nonexistence of such algorithms would be useful for cryptography and hardness-of-approximation and -learning results.

Speaking broadly, efficient algorithms for the search problem seem to do better than efficient algorithms for the certification problem.  For example, given a random $\threesat$ instance with clause density slightly below the satisfiability threshold of $\approx 4.2667$, there are algorithms~\cite{MPR16} that seem to efficiently find satisfying assignments whp.  On the other hand, the longstanding Feige Hypothesis~\cite{Fei02} is that efficient algorithms cannot certify unsatisfiability at any large constant clause density, and indeed there is no efficient algorithm that is known to work at density~$o(\sqrt{n})$.  Similarly, for the Sherrington--Kirkpatrick model, Montanari~\cite{montanari2018optimization} has recently given an efficient PTAS for the search problem\footnote{Modulo a widely believed analytic assumption.}, whereas the best known efficient algorithm for the certification problem is again only a $1/P^*$-approximation.  These kinds of gaps seem to be closely related to ``information-computation gaps'' and Kesten--Stigum thresholds for information recovery and planted-CSP problems.

In this work we focus on potential computational thresholds for random CSP certification/refutation problems in the sparse setting, and in particular how these thresholds depend on the ``type'' of the CSP.  For CSPs with a predicate supporting a pairwise-uniform distribution --- such as $\ksat$ or $\kxor$, $k \geq 3$ --- there is solid evidence that the computational threshold for efficient certification of unsatisfiability is very far from the actual unsatisfiability threshold.  Such CSPs are whp unsatisfiable at constant constraint density, but any polynomial-time algorithm using the powerful Sum-of-Squares (SoS) algorithm fails to refute unless the density is~$\Omega(\sqrt{n/\log n})$~\cite{KMOW17}.  But outside the pairwise-supporting case, and especially for ``$\twoxor$-like'' CSPs such as $\maxcut$ and $\naethree$ (Not-All-Equal $\threesat$), the situation is much more subtle.  For one, the potential gaps are much more narrow; e.g., in random $\naethree$, even a simple spectral algorithm efficiently refutes satisfiability at constant constraint density.  Thus one must look into the actual \emph{constants} to determine if there may be an ``information-computation'' gap.  Another concern is that evidence for computational hardness in the form of SoS lower bounds (degree~$4$ or higher) seems very hard to come by (see, e.g.,~\cite{Mon17}).

\paragraph{Prior work.}  Let us describe two prior efforts towards computational thresholds for upper-bound-certification in ``$\twoxor$-like'' random CSPs.   Montanari and Sen~\cite{MS16} (see also~\cite{banks2017lovasz}) investigated the $\maxcut$ problem in random $d$-regular graphs, where the optimum value is $\frac12 + \frac{P^*}{\sqrt{d}}$ whp (ignoring $1 \pm o_d(1)$ factors).  Friedman's Theorem implies that the basic eigenvalue bound efficiently certifies the value is at most $\frac12 + \frac{1}{\sqrt{d}}$.  By using a variant of the Gaussian Wave~\cite{Elo09,CGHV15,HV15} construction for the infinite $d$-ary tree, Montanari and Sen were able to show that even the Goemans--Williamson semidefinite programming (SDP) relaxation~\cite{DP93,GW95} is still just $\frac12 + \frac{1}{\sqrt{d}}$ whp.  This may be considered evidence that \emph{no} polynomial-time algorithm can certify upper bounds better than $\frac12 + \frac{1}{\sqrt{d}}$, as Goemans--Williamson has seemed to be the optimal polynomial-time $\maxcut$ algorithm in all previous circumstances.  Of course it would be more satisfactory to see higher-degree SoS lower bounds, but as mentioned these seem very difficult to come by.

Recently, Deshpande~et~al.~\cite{deshpande2018threshold} have given similar results for random ``$c$-constraint-regular'' $\naethree$ CSPs; i.e., random instances where each variable participates in exactly~$c$ $\naethree$ constraints.\footnote{We have changed terminology to avoid a potential future confusion; we will be associating $\naethree$ constraints with triangle graphs, so $c$-constraint-regular $\naethree$ instances will be associated to $2c$-regular graphs.} Random $c$-constraint-regular instances of $\naethree$ are easily shown to be unsatisfiable (whp) for $c \geq 8$.  Deshpande~et~al.\ identified an exact threshold result for when the natural SDP algorithm is able to certify unsatisfiability: it succeeds (whp) if $c > 13.5$ and fails (whp) if $c < 13.5$.  Indeed, they show that for $c \geq 14$ even the basic spectral algorithm certifies unsatisfiability, whereas for $c \leq 13$ even the SDP augmented with ``triangle inequalities'' fails to certify unsatisfiability. Again, this gives evidence for a gap between the threshold for unsatisfiability and the threshold for computationally efficient refutation.   The techniques used by Deshpande~et~al.\ are similar to those of Montanari--Sen, except with random $(b,c)$-biregular graphs replacing random $c$-regular graphs.  (The reason is that the primal graph of a random $c$-constraint-regular $\naethree$ instance resembles the square of a random $(3,c)$-biregular graph.)

In fact, the Deshpande~et~al.\ result is more refined, being concerned not just with satisfiability of random $\naethree$ instances, but their optimal value as maximization problems.  Letting $f(c) = \frac98 - \frac38 \cdot \frac{(\sqrt{c-1} - \sqrt{2})^2}{c}$ for $c \geq 3$, they determined that in a random $c$-constraint-regular $\naethree$ instance, the SDP value is whp $f(c) \pm o(1)$; and furthermore, this is also the basic eigenvalue bound and the SDP-with-triangle-inequalities bound.  (Note that $f(13.5) = 1$.)  Again, this may suggest that in these instances, computationally efficient algorithms can only certify that at most an $f(c) + o(1)$ fraction of constraints are simultaneously satisfiable. 

\subsection{Our results}
The goal of the present work is to generalize the preceding Montanari--Sen and Deshpande~et~al.\ results to a broader class of sparse random 2CSPs and $\twoxor$-like optimization problems, obtaining precise values for their SDP values.  Along the way, we need to come to a deeper understanding of the combinatorial and analytic tools used (nonbacktracking walks, Ihara--Bass formulas, eigenvalues of random graphs and infinite graphs) and we need to extend these tools to graphs that do \emph{not} locally resemble trees (as in Montanari--Sen and Deshpande~et~al.).  We view this aspect of our work as a main contribution, beyond the mere statement of SDP values for specific CSPs.  We defer to \pref{sec:fried-bord} more detailed discussions of the technical conditions under which we can obtain Ihara--Bass and  Friedman-, and Gaussian Wave-type theorems.  But roughly speaking, we are able to analyze the SDP value for random regular instances of optimization problems where each ``constraint'' (not necessarily a predicate) is an \emph{edge-signed graph with two eigenvalues}.  Such constraints include:  a single edge (corresponding to random regular $\maxcut$ or $\twoxor$ as in Montanari--Sen); a complete graph (studied by Deshpande~et~al., with the $K_3$ case corresponding to random regular $\naethree$); the $\sort_4$ (a.k.a.\ $\chsh$) predicate; and,  $\forrelation_k$ constraints.  These last two have motivation from quantum mechanics, and in fact the SDP value of the associated CSPs is precisely their ``quantum value''. We discuss quantum connections further in \pref{sec:quantum}.

We state here two theorems that our new techniques allow us to prove.  Recall the $\sort_4$ predicate, which is satisfied iff its $4$ Boolean inputs $x_1, x_2, x_3, x_4$ satisfy $x_1 \leq x_2 \leq x_3 \leq x_4 \text{ or }  x_1 \geq x_2 \geq x_3 \geq x_4$.  We precisely define ``random $c$-constraint-regular CSP instance'' in \pref{sec:prelims}, but in brief, we work in the ``random lift'' model, each variable participates in exactly~$c$ constraints, and each constraint is given random negations.\footnote{Our result holds for either of the following two negation models: (i)~each \emph{constraint} is randomly negated; or, (ii)~the constraints are not negated, but each constraint is applied to random \emph{literals} rather than random variables.}
\begin{theorem}                                     \label{thm:sort}
    For random $c$-constraint-regular instances of the $\sort_4$-CSP, the SDP-satisfiability threshold occurs (in a sense) at $c = 4 + 2\sqrt{2} \approx 6.83$.  Indeed, if $c \geq 7$ then even the basic eigenvalue bound certifies unsatisfiability (whp); and, if $c \leq 6$ then the basic SDP relaxation fails to certify unsatisfiability (whp).
\end{theorem}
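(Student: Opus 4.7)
The plan is to reduce Theorem \ref{thm:sort} to spectral estimates on the two-eigenvalue signed constraint graph associated with $\sort_4$. A direct Fourier computation shows that $\sort_4$, viewed as a $\pm 1$-valued predicate on $\{\pm 1\}^4$, is purely degree-$2$:
\[
    \sort_4(x_1,x_2,x_3,x_4) \;=\; \tfrac12\bigl(x_1 x_2 + x_2 x_3 + x_3 x_4 - x_1 x_4\bigr).
\]
Its ``gadget'' is therefore the signed $4$-cycle with three $+\tfrac12$ edges and one $-\tfrac12$ edge, whose signed adjacency has spectrum $\{\pm 1/\sqrt{2}\}$, each eigenvalue of multiplicity~$2$. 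Thus $\sort_4$ fits the paper's framework with gadget eigenvalue $\theta = 1/\sqrt{2}$. For a random $c$-constraint-regular instance on $n$ variables, writing $M_G = \sum_c M^{(c)}$ for the sum over constraints of the per-gadget $\pm\tfrac12$-weighted $C_4$ matrices, one has
\[
    \OPT \;=\; \tfrac12 + \tfrac{2}{nc}\,\max_{x \in \{\pm 1\}^n} x^\top M_G\, x \;\leq\; \tfrac12 + \tfrac{2\lambda_{\max}(M_G)}{c},
\]
with the basic SDP attaining equality via the top eigenvector of $M_G$.

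\paragraph{Upper bound for $c \geq 7$.} The random model realizes $M_G$ as a random lift of a small base object (a $c$-fold bouquet of signed $C_4$-gadgets). Invoking the paper's generalization of the Friedman/Bordenave theorem to two-eigenvalue edge-signed lifts, whp $\lambda_{\max}(M_G) \leq \rho(c) + o(1)$, where $\rho(c)$ is the spectral radius of the signed adjacency on the universal cover $T_c$ (an infinite ``tree of signed $C_4$'s''). A generalized Ihara--Bass / transfer-matrix computation yields $\rho(c) < c/4$ iff $c > 4+2\sqrt{2}$; for $c \geq 7$ this strictly certifies $\OPT < 1$ whp via the eigenvalue bound.

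\paragraph{Matching SDP solution for $c \leq 6$.} Following Montanari--Sen and Deshpande et al., I build a Gaussian Wave on $T_c$: a centered Gaussian process whose covariance kernel is an $L^2$-eigenfunction of the signed-adjacency operator with eigenvalue $\rho(c)$. Local weak convergence of the random lift to $T_c$ lets me transfer this wave to unit SDP vectors $\{v_u\}$ on $G$ whose pairwise inner products match the wave's covariance along short paths. The resulting basic-SDP objective is $\tfrac12 + \tfrac{2\rho(c)}{c} - o(1)$, which is $\geq 1 - o(1)$ whenever $c \leq 4+2\sqrt{2}$, in particular for $c = 6$.

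\paragraph{Main obstacle.} The crux is that $T_c$ is \emph{not} a classical regular tree but a tree of $C_4$-gadgets, so the original Friedman/Bordenave and Gaussian Wave arguments do not apply directly. What is required is a nonbacktracking operator that also records transitions through each gadget's $\pm 1/\sqrt{2}$ spectral decomposition, together with the matching generalized Ihara--Bass formula and a corresponding $L^2$-resolution for the Gaussian Wave on $T_c$. These are precisely the technical tools the paper develops; with them in hand, the $\sort_4$ case reduces to the explicit computation that $\rho(c) = c/4$ at $c = 4+2\sqrt{2}$.
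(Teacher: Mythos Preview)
Your plan is essentially the paper's: recognize $\sort_4$ as a two-eigenvalue signed gadget, invoke the generalized Friedman/Bordenave bound (via the nomadic walk operator and the generalized Ihara--Bass formula) for the eigenvalue upper bound, and use a Gaussian-Wave construction on the additive-product cover for the matching SDP lower bound. The ``main obstacle'' you identify is exactly the paper's technical contribution.

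Two small slips to fix. First, your normalizations are inconsistent: with gadget eigenvalues $\pm 1/\sqrt{2}$ the additive-product spectral radius is $2\sqrt{(c-1)/2}=\sqrt{2(c-1)}$, and the threshold condition should read $\rho(c)=c/2$ (equivalently, with the paper's $\pm 1$ weights and gadget eigenvalues $\pm\sqrt{2}$, one gets $\lambda_{\max}(A)\approx 2\sqrt{2(c-1)}$ and the $\sort_4$ SDP value $\tfrac12+\tfrac{\sqrt{2(c-1)}}{c}$, crossing $1$ at $c=4+2\sqrt{2}$); your ``$\rho(c)<c/4$'' and the extra $\tfrac12$ in your $\OPT$ formula are off by a factor of two somewhere. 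Second, the sentence ``the basic SDP attaining equality via the top eigenvector of $M_G$'' is not right and in fact contradicts your own later paragraph: the eigenvalue bound upper-bounds the SDP, and showing the SDP reaches it is precisely what the Gaussian-Wave construction is for---the top eigenvector of $M_G$ does not by itself yield a feasible SDP solution with all-ones diagonal.
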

We remark that the trivial first-moment calculation shows that a random $c$-constraint-regular $\sort_4$-CSP is already unsatisfiable whp at degree $c = 4$.  Thus we again have evidence for a gap between the true threshold for unsatisfiability and the efficiently-certifiable threshold.

Generalizing this, the $\forrelation_k$ constraint is a certain (quantum-inspired) map $\{\pm 1\}^{2^k + 2^k} \to [-1,+1]$ that measures how correlated one $k$-bit Boolean function is with the Fourier transform of a second $k$-bit Boolean function.  We give precise details in \pref{sec:quantum}; here we just additionally remark that $\forrelation_1$ corresponds to the ``$\chsh$ game'', and that $\tfrac12 + \forrelation_1$ is equivalent to the $\sort_4$ predicate.
\begin{theorem}                                     \label{thm:forr}
    For random $c$-constraint-regular instances of the $\forrelation_k$-CSP and any constant $\eps > 0$, the SDP value is whp in the range $\frac{2\sqrt{c-1}}{c \cdot 2^{k/2}} \pm \eps$.  This is also true of the eigenvalue bound. 
\end{theorem}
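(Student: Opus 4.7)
The plan is to view the $\forrelation_k$ constraint as a two-eigenvalue edge-signed constraint and then invoke the general machinery developed elsewhere in the paper. The first step is to identify the spectrum of the constraint. Writing $\forrelation_k(x,y) = \frac{1}{2^k}\langle x, H_k y\rangle$ for $x, y \in \{\pm 1\}^{2^k}$, where $H_k$ is the $2^k \times 2^k$ Hadamard matrix (so $H_k H_k^\top = 2^k I$), the symmetric bipartite adjacency matrix associated to the constraint has all eigenvalues equal to $\pm 2^{-k/2}$. Hence $\forrelation_k$ is a two-eigenvalue constraint with spectral radius $\lambda = 2^{-k/2}$, placing it in the class to which the new tools apply; the underlying constraint graph is the complete bipartite graph $K_{2^k,2^k}$ with Hadamard edge signing.

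For the upper bound I would model the constraint graph of a random $c$-constraint-regular instance as a random signed lift of this base Hadamard-signed $K_{2^k,2^k}$ (with each variable participating in $c$ constraints, per the model of \pref{sec:prelims}). The generalized Friedman/Bordenave theorem promised in \pref{sec:fried-bord}, applied to a two-eigenvalue base constraint of spectral radius $\lambda$, implies that the top eigenvalue of the normalized instance operator is whp $\frac{2\sqrt{c-1}}{c}\lambda + o(1) = \frac{2\sqrt{c-1}}{c\cdot 2^{k/2}} + o(1)$. Since the basic eigenvalue bound always upper bounds the SDP value, this delivers the $+\eps$ side of the theorem.

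For the matching lower bound I would exhibit a feasible SDP solution of value at least $\frac{2\sqrt{c-1}}{c\cdot 2^{k/2}} - \eps$, following the Montanari--Sen and Deshpande et al.\ template: construct a Gaussian Wave on the Benjamini--Schramm local limit of the instance — which here is not a tree but an infinite ``tree of $K_{2^k, 2^k}$'s'' formed by pasting Hadamard-signed copies of the base constraint along shared variable vertices with $c$-fold branching — realizing an eigenfunction at spectral radius $\frac{2\sqrt{c-1}}{c}\lambda$, and then push this infinite-graph Gaussian process to the finite random instance by a local coupling argument to produce a PSD solution whose expected objective matches the eigenvalue bound up to $\eps$.

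The main obstacle I foresee is extending the Gaussian Wave, Ihara--Bass, and Friedman--Bordenave machinery from the locally-treelike setting of Montanari--Sen and Deshpande et al.\ to local limits that contain many short cycles inside each copy of $K_{2^k,2^k}$. This is precisely the technical upgrade promised in \pref{sec:fried-bord}: one must define a ``nonbacktracking walk'' relative to the constraint structure rather than the underlying graph, establish a generalized Ihara--Bass formula relating its spectrum to that of the signed adjacency operator, and carry Bordenave's trace-method argument through with this generalized nonbacktracking operator. Once that infrastructure is in place, the $\forrelation_k$ theorem reduces to plugging the two-eigenvalue data $\lambda = 2^{-k/2}$ into the general bound.
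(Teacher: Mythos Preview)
Your proposal is correct and follows essentially the same route as the paper: recognize $\forrelation_k$ as a two-eigenvalue atom, invoke the generalized Friedman/Bordenave theorem (via the nomadic walk operator and the new Ihara--Bass identity) for the eigenvalue upper bound, and use the wave/approximate-eigenfunction construction on the additive product for the matching SDP lower bound, so that \pref{thm:forr} falls out as the special case $\lambda_1=-\lambda_2$ of \pref{thm:punchline}. One caveat on bookkeeping: in the paper's conventions the atom is the $\pm 1$-signed $K_{2^k,2^k}$ with eigenvalues $\pm 2^{k/2}$ (not $\pm 2^{-k/2}$), and the general SDP value is $\frac{\lambda_1+\lambda_2+2\sqrt{(c-1)(-\lambda_1\lambda_2)}}{c(-\lambda_1\lambda_2)} = \frac{2\sqrt{c-1}}{c\lambda}$ rather than $\frac{2\sqrt{c-1}}{c}\lambda$; your normalization and formula both differ from this but the two discrepancies cancel, so you land on the correct value.
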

When considering the SDP value for $\tfrac12 + \forrelation_1$, the formula above crosses the threshold of~$1$ when $c = 4 + 2\sqrt{2}$, yielding the statement in \pref{thm:sort} about the SDP-satisfiability threshold of random $c$-constraint-regular $\sort_4$-CSPs.

\subsection{Sketch of our techniques}
Here we sketch how our results like \pref{thm:sort} and \pref{thm:forr} are proven, using random $\sort_4$-CSPs as a running example.  A key property of the $\sort_4$ predicate is that it is essentially equivalent to the following ``$\twoxor$'' instance:

\myfig{0.15}{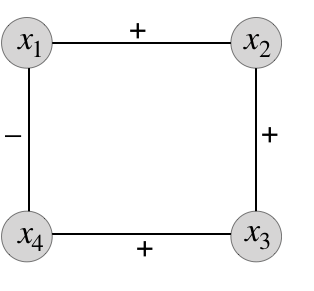}{The $\sort_4$ predicate}{fig:sort_4}

More precisely, suppose $(x_1, x_2, x_3, x_4) \in \{\pm 1\}^4$ satisfies the $\sort_4$ predicate.  Then in the graph above, exactly $3$ out of $4$ edges will be ``satisfied'' --- where an edge is considered satisfied when the product of its endpoint-labels equals the edge's label.   Conversely, if $(x_1, x_2, x_3, x_4)$ doesn't satisfy $\sort_4$ then exactly $1$ out of the $4$ edges above will be satisfied.  Now suppose we choose a random $n$-vertex $c$-constraint-regular instance $\calI$ of the $\sort_4$-CSP with, say, $c = 2$.  A small piece of such an instance might look like the following:\footnote{In fact, since we will have random negations in our instances, some $4$-cycles will have three edges labeled $-1$ and one labeled $+1$, as opposed to the other way around.  This is not an important issue for this proof sketch.}

\myfig{0.22}{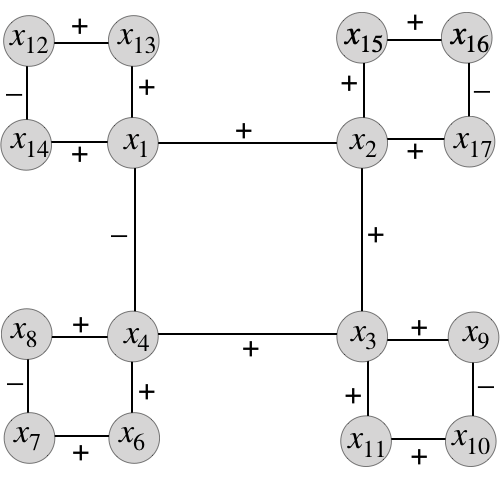}{Piece of $\sort_4$ instance}{fig:sort_4-extend}

Up to a trivial affine shift in the objective function, the optimization task is now to label the variables/vertices of $\calI$ with $\pm 1$ values $x_1, \dots, x_n$ so as to maximize $\frac{1}{n}\sum_{ij} A_{ij} x_i x_j$, where $A \in \{0, \pm 1\}^{n \times n}$ is the adjacency matrix of the edge-signed graph partially depicted above.  The ``eigenvalue upper bound'' $\EIG(\calI)$ arises from allowing the $x_i$'s to be arbitrary real numbers, subject to the constraint $\sum_i x_i^2 = n$.  The ``SDP upper bound'' $\SDP(\calI)$ (which is at least as tight: $\SDP(\calI) \leq \EIG(\calI)$) arises from allowing the $x_i$'s to be arbitrary unit vectors in $\R^n$, with the inner product $\la x_i, x_j \ra$ replacing $x_i x_j$ in the objective function.  Our goal is to identify some quantity $f(c)$ (it will be $\frac{1+\sqrt{2}}{2}$ in the $c = 2$ case) such that
\begin{equation}    \label{eqn:main-inequality}
    \EIG(\calI) \lesssim f(c) \lesssim \SDP(\calI)
\end{equation}
up to $1\pm o(1)$ factors, with high probability.  This establishes that all three quantities are equal (up to $1 \pm o(1)$, whp), since $\SDP(\calI) \leq \EIG(\calI)$ always.

In this section we mainly describe how to obtain the optimal inequality on the left in \pref{eqn:main-inequality}; i.e., how to give a tight bound on the eigenvalues of (the edge-signed graph induced by)~$\calI$.  Notice that if we were studying just random $\maxcut$ or $\twoxor$ CSPs, we would have to get tight bounds on the eigenvalues of a standard random $c$-regular graph.\footnote{More precisely, for random $\maxcut$ we have to lower-bound the smallest eigenvalue; for random $\twoxor$ --- which includes randomly negating edges --- we have to upper-bound the largest eigenvalue.  In the $\maxcut$ version with no negations, there is the usual annoyance that there is always a first ``trivial'' eigenvalue of~$c$, and one essentially wants to bound the second-largest (in magnitude) eigenvalue.  The effect of random negations is generally to eliminate the trivial eigenvalue, allowing one to  focus simply on the spectral radius of the adjacency matrix.  This technical convenience is one reason we will always work in a model that includes random negations.}  Excluding the top eigenvalue of~$c$ in the case of $\maxcut$, these eigenvalues are (whp) all at most $2\sqrt{c-1} + o_n(1)$ in magnitude.  This is thanks to Friedman's (difficult) proof of Alon's Conjecture~\cite{Fri08}, made moderately less difficult by Bordenave~\cite{bordenave2015new}.  The ``magic number'' $2\sqrt{c-1}$ is precisely the spectral radius of the \emph{infinite} $c$-regular tree --- i.e., the infinite graph that random $c$-regular graphs ``locally resemble''.

Returning to random $2$-constraint-regular instances of the $\sort_4$-CSP, the (edge-signed) infinite graph~$X$ that \emph{they} ``locally resemble'' is the following:

\myfig{0.25}{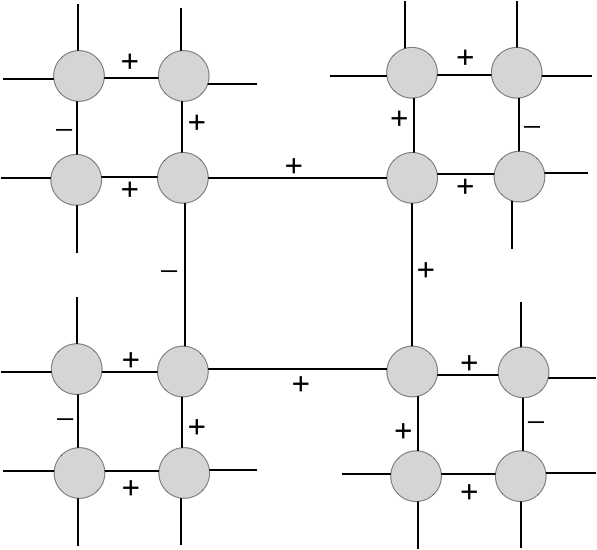}{$\sort_4$ infinite graph}{fig:sort-4-infinite}

Here $X \coloneqq \sort_4 \addprod \sort_4$ is the so-called \emph{additive product} of $2$ copies of the $\sort_4$ graph, a notion recently introduced in~\cite{MO18}.  By analogy with Alon's Conjecture, it's natural to guess that the spectral radius of a random $2$-constraint-regular $\sort_4$-CSP instance is whp $\rho(X) \pm o_n(1)$, where $\rho(X)$ denotes the spectral radius of~$X$ (which can be shown to be~$2\sqrt{2}$).  Indeed, our main effort is to prove the upper bound of $\rho(X) + o_n(1)$, thereby establishing the left inequality in \pref{eqn:main-inequality} with $f(c) = \rho(X)$.  (As for the right inequality, it can proven using the ``Gaussian Wave'' idea, allowing one to convert approximate eigenvectors of the infinite graph~$X$ to matching SDP solutions on random finite graphs~$\calI$.  We carry this out in \pref{sec:wave}.)

\subsubsection{Friedman/Bordenave Theorems for two-eigenvalue additive lifts} \label{sec:fried-bord}
As stated, our main task in the context of large random $2$-constraint-regular $\sort_4$-CSP instances is to show that their spectral radius is at most $\rho(X) + o_n(1)$ whp.  Incidentally, the lower bound of $\rho(X) - o_n(1)$ indeed holds; it follows from a generalization of the ``Alon--Boppana Bound'' due to Grigorchuk and \.{Z}uk~\cite{GZ99}.  As for the upper bound, the recent work~\cite{MO18} implies the analogous ``Ramanujan graph'' statement; namely, that there \emph{exist} arbitrarily large $2$-constraint-regular $\sort_4$-CSP instances with largest eigenvalue exactly upper-bounded by~$\rho(X)$.  However we need the analogue of Friedman/Bordenave's Theorem.  Unlike in~\cite{MO18} we are not able to prove it for arbitrary additive products; we are able to prove it for additive products of ``two-eigenvalue'' edge-signed graphs.  To explain why, we first have to review the proofs of the Alon Conjecture (that $c$-regular random graphs have their nontrivial eigenvalues bounded by $2\sqrt{c-1}+o_n(1)$).

Both Friedman's and Bordenave's proof of the Alon Conjecture rely on very sophisticated uses of the Trace Method.  Roughly speaking, this means counting closed walks of a fixed length~$k$ in random $c$-regular graphs, and (implicitly) comparing these counts to those in the $c$-regular infinite tree.  Actually, both works instead count only  \emph{nonbacktracking} walks.  The fact that one can relate nonbacktracking walk counts to general walk counts is thanks to an algebraic tool called the \emph{Ihara--Bass Formula} (more on which later); this idea was made more explicit in Bordenave's proof.  Incidentally, use of the nonbacktracking walk operator has played a major role in recent algorithmic breakthroughs on community detection and related results (e.g.,~\cite{KMMNSZZ13,MNS18,Mas14,BLM15}).

A reason for passing to nonbacktracking closed walks is that it greatly simplifies the counting.  Actually, in the case of the infinite $c$-regular tree, it \emph{over}simplifies the counting; infinite trees have no nonbacktracking closed walks at all!  However, the correct quantity to look at is  ``almost'' nonbacktracking walks of length~$k$, meaning ones that are nonbacktracking for the first $k/2$ steps, and for the last $k/2$ steps, but which may backtrack once right in the middle.  There are essentially $(c-1)^{k/2}$ of these in the $c$-regular infinite tree (one may take $k/2$ arbitrary steps out, but then one must directly walk back home), yielding a value of $((c-1)^{k/2})^{1/k} = \sqrt{c-1}$ for the spectral radius of the nonbacktracking operator of the $c$-regular infinite tree.  Bordenave uses (a very tricky version of) the Trace Method to analogously show that the spectral radius of the nonbacktracking operator of a random $c$-regular graph is $\sqrt{c-1} + o_n(1)$ whp.  Thanks to the Ihara--Bass Formula, this translates into a bound of $2\sqrt{c-1} + o_n(1)$ for the spectral radius of the usual adjacency operator.

Returning now to our scenario of random $2$-constraint-regular $\sort_4$-CSP instances (with their analogous infinite edge-signed graph~$X$), we encounter a severe difficulty.  Namely, passing to nonbacktracking walks no longer creates a drastic simplification in the counting, since there are nonbacktracking cycles within the constraint graphs themselves (in our example, $4$-cycles graphs).\footnote{In fact, since we have edge weights (signs), we need to look at the \emph{weight} (not number) of walks, but the point still stands.}  Thus nonbacktracking closed walks in large random instances can have complicated structures, with many internal nonbacktracking cycles.

A saving grace in the case of $\sort_4$-CSPs, and also ones based on $\forrelation_k$ or complete-graph constraints for example, is that the adjacency matrices of these graphs have only \emph{two distinct eigenvalues}.  (We will also use that their edge weights are~$\pm 1$.)  For example, after rearranging the variables in the $\sort_4$ predicate, its adjacency matrix is
\begin{equation}    \label{eqn:sort-adj}
    A = \begin{pmatrix} 0 & 0 & +1 & +1 \\ 0 & 0 & +1 & -1 \\ +1 & +1 & 0 & 0 \\ +1 & -1 & 0 & 0 \end{pmatrix},
\end{equation}
which has eigenvalues of $\pm \sqrt{2}$ (with multiplicity~$2$ each).  The two-eigenvalue property implies that~$A$ satisfies a quadratic equation, and hence any polynomial in~$A$ is equivalent to a polynomial of degree \emph{at most~$1$}.  The upshot is that we can relate general walks in $\sort_4$-CSPs (or more generally, CSPs with two-eigenvalue constraints) to what we call \emph{nomadic} walks: ones that take \emph{at most~$1$} consecutive step within a single constraint.  Let us make an informal definition (see \pref{sec:prelims-walks} for a formal definition):
\begin{definition}
    Given a finite CSP graph, the \emph{nomadic walk operator} $B$ is a matrix indexed by the directed edges in the graph.  Its $B[e,e']$ entry is equal to the edge-weight of~$e'$ provided:
        \begin{itemize}
            \item $(e,e')$ forms an oriented length-$2$ path; and,
            \item $e$ and $e'$ come from \emph{different} constraints.
        \end{itemize}
    Otherwise the $B[e,e']$ entry is~$0$.  This operator generalizes the nonbacktracking walk operator for $\maxcut$/$\twoxor$ graphs in which each undirected edge is considered to be a single ``constraint''.
\end{definition}

The utility of this nomadic walk operator is twofold for us.  First, for two-eigenvalue CSPs we can relate the eigenvalues of the usual adjacency operator to those of the nomadic walk operator through the following  generalization of the Ihara--Bass Formula:
\begin{theorem}[informal]             \label{thm:ihara-bass}
Let $A$ be the adjacency matrix and $B$ the nomadic walk operator of a finite $c$-constraint-regular CSP graph on $n$ vertices, where each predicate has exactly $2$ distinct eigenvalues: $\lambda_1$ and $\lambda_2$.  Define $L(t) := \Id - At + (\lambda_1+\lambda_2)t\Id + (c-1)(-\lambda_1\lambda_2) t^2$. Then we have
\[\displaystyle (1+\lambda_1t)^{n\frac{ c\lambda_2}{\lambda_2-\lambda_1}-1}(1+\lambda_2t)^{n\frac{c\lambda_1}{\lambda_1-\lambda_2}-1} \det L(t) = \det(\Id - Bt).\]
\end{theorem}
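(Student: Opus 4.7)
The plan is to adapt Bass's block-matrix proof of the classical Ihara--Bass identity, which is recovered when each constraint is a single signed edge (then $\lambda_1 = 1$, $\lambda_2 = -1$, the $(\lambda_1+\lambda_2)t$ term vanishes, and $(c-1)(-\lambda_1\lambda_2) = c-1$). Bass's proof exhibits a block matrix on $\R^V \oplus \R^{E_{\text{dir}}}$ whose determinant, computed two ways by Schur complements, produces the two sides of the identity. Our job is to build an analogue whose algebra closes up using the two-eigenvalue hypothesis in place of the ``edge-reversal'' involution that powers the classical case.

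First I would set up signed incidence matrices $S, T\colon \R^V \to \R^{E_{\text{dir}}}$ (with signs read off from the $\pm 1$ edge labels) arranged so that $T S^\top = A$. The product $S^\top T$ on the edge space counts length-$2$ directed transitions $(e, e')$ with $t(e) = o(e')$, and splits as $S^\top T = B + U$, where $U$ is a block-diagonal ``internal'' operator whose block at each constraint is the signed directed-edge adjacency of that constraint's graph. Each vertex-level $A_c$ satisfies the quadratic $(A_c - \lambda_1 \Id)(A_c - \lambda_2 \Id) = 0$, and this lifts to an analogous quadratic on each block of $U$; in particular, $\det(\Id - U t)$ factors as $(1 + \lambda_1 t)^{e_1}(1 + \lambda_2 t)^{e_2}$, with exponents determined by the per-constraint multiplicities of $\lambda_1, \lambda_2$ aggregated over all constraints using $c$-constraint-regularity.

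Next I would form a block matrix of the shape
\[
M(t) \;=\; \begin{pmatrix} \Id & X(t) \\ Y(t) & \Id - U t \end{pmatrix},
\]
where $X(t), Y(t)$ are chosen (in terms of $S$, $T$, and scalar polynomials in $t$) so that one Schur-complement expansion of $\det M(t)$ yields $\det(\Id - U t) \cdot \det L(t)$, while the other yields $\det(\Id - B t)$ times the same $\det(\Id - U t)$ factor. The first expansion hinges on evaluating $T(\Id - U t)^{-1} S^\top$ in closed form; the quadratic relation for $U$ collapses this geometric series to an affine expression in $A$, and the resulting scalar manipulation produces exactly the three summands of $L(t)$. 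The factor $c - 1$ in the $t^2$ coefficient appears naturally: a ``leave-and-return within a single constraint'' contribution at a vertex $v$ is aggregated over the $c$ constraints containing $v$, with one already accounted for by the $-tA$ part of $L(t)$. The second expansion uses $B = S^\top T - U$ directly. Equating the two evaluations and cancelling $\det(\Id - U t)$ gives the identity.

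The main obstacle is the block-matrix construction itself: choosing $X(t), Y(t)$ so that both Schur-complement expansions evaluate to exactly the required shapes --- in particular, so that the first lands on $\det L(t)$ as stated (and not some variant) while the second lands on $\det(\Id - B t)$, with no spurious scalar factors left over. A secondary bookkeeping challenge is pinning down the exponents $e_1, e_2$ in the prefactor: this requires tracking the ranks of $S, T$, the per-constraint eigenvalue multiplicities of $\lambda_1, \lambda_2$ (which the two-eigenvalue hypothesis fixes in terms of the constraint's vertex count and trace), and their aggregation across constraints under $c$-constraint-regularity.
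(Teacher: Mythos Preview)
Your proposal takes a genuinely different route from the paper. The paper follows a Northshield-style generating-function argument: it defines ``nomadic polynomials'' $p_k(x)$ satisfying a three-term recurrence, proves combinatorially that $p_k(A)_{uv}$ equals the total weight of length-$k$ nomadic walks from $u$ to $v$, forms $F(t) = \sum_k p_k(A)t^k$, and then uses Jacobi's formula $\tfrac{d}{dt}\log\det M(t) = \tr(M'(t)M(t)^{-1})$ on both $L(t)$ and $\Id - Bt$. The bulk of the work is a careful ``tails vs.\ no-tails vs.\ stretched'' classification of closed nomadic walks that relates $\tr F(t)$ to $\sum_k \tr(B^k)t^k$; integrating the resulting identity and matching constants at $t=0$ finishes. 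By contrast, you are proposing the Bass-style block-matrix/Schur-complement route. The paper's method buys a concrete combinatorial interpretation of $p_k(A)$ and of the correction factor $(1-\gr t^2)/((1+\lambda_1 t)(1+\lambda_2 t))$, at the price of the tails bookkeeping; your route, if it closes, would be more purely algebraic and would sidestep that bookkeeping entirely.

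One point to watch in your plan: the ``internal'' block $U$ restricted to a single atom is \emph{not} literally the atom's adjacency matrix lifted to directed edges, and its eigenvalues are not obviously $-\lambda_1,-\lambda_2$. What is true is that this block has low rank and, via a Sylvester-type identity $\det(\Id - PQt)=\det(\Id - QPt)$, its characteristic polynomial reduces to that of an $r\times r$ matrix built from $A_j$; you should check that this reduction lands on $(1-\lambda_1 t)^{m_1}(1-\lambda_2 t)^{m_2}$ (note the signs) on the restricted directed-edge space rather than on all ordered pairs, and that aggregating over the $cn/r$ atoms with the multiplicities $m_i = r\lambda_{3-i}/(\lambda_{3-i}-\lambda_i)$ gives precisely the stated exponents. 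This is exactly the ``secondary bookkeeping challenge'' you flagged, and it is where the approach is most likely to need a nontrivial fix rather than a routine computation.
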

\noindent We prove \pref{thm:ihara-bass} in \pref{sec:ihara}.  In the remaining discussion below, we let $B$ be the nomadic walk operator of a random $c$-constraint-regular CSP graph on $n$ vertices, where the precise random model is given in \pref{def:random-model-precise}.  Further, we assume that the predicate of the CSP has two distinct eigenvalues: $\lambda_1$ and $\lambda_2$.

The second utility of nomadic walks  is that they provide the key simplification needed to make closed-walk counting in non-tree-like CSPs tractable.  Because of this, we are able to establish the following modification of Bordenave's proof of Friedman's Theorem in \pref{sec:bordenave}:

\begin{theorem} \label{thm:bordenave-informal}
    With high probability,
    \[
        \rho(B) \le \sqrt{(c-1)(-\lambda_1\lambda_2)} + o_n(1).
    \]
\end{theorem}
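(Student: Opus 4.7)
The plan is to adapt Bordenave's trace-method proof of the Alon Conjecture to the nomadic walk operator $B$. I would work in a configuration-type version of the random model, in which the $c$ constraint-slots at each variable are matched uniformly to slots exposed by a disjoint union of predicate copies, and each edge is independently signed $\pm 1$. Fix an even integer $\ell = \ell(n)$ growing slowly with $n$ (e.g.\ $\ell \sim c_0 \log n / \log \log n$) and bound $\mathbb{E}\bigl[\mathrm{tr}\bigl(B^\ell (B^\ell)^*\bigr)\bigr]$, which dominates $\mathbb{E}\bigl[\rho(B)^{2\ell}\bigr]$ via $\rho(B)^{2\ell} \le \|B^\ell\|^2 \le \mathrm{tr}\bigl(B^\ell (B^\ell)^*\bigr)$.

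Expanding the trace writes the quantity as a weighted sum, over pairs $(w,w')$ of length-$\ell$ nomadic walks sharing initial and final directed edges, of the product of all edge signs traversed. Since the signs are i.i.d.\ Rademacher and independent of the matching, only pairs in which every signed edge is traversed an even number of times survive in expectation. Grouping the surviving pairs by their combinatorial ``shape'' (the underlying labelled constraint-multigraph together with the visit pattern), the contribution of each shape factors into local pieces: a factor of roughly $c-1$ per constraint-switching step (the walker may enter any of the other $c-1$ constraints incident at the current variable), and an internal factor per within-constraint step controlled by the predicate's spectrum. The two-eigenvalue hypothesis is crucial here: since $A_{\mathrm{pred}}$ satisfies $(A_{\mathrm{pred}} - \lambda_1 I)(A_{\mathrm{pred}} - \lambda_2 I) = 0$, arbitrary within-constraint walk generating functions collapse to degree-$\le 1$ expressions in $A_{\mathrm{pred}}$, and the effective per-step squared growth rate reduces to $-\lambda_1\lambda_2$. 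The ``tree-like'' shapes therefore contribute at most $n \cdot \mathrm{poly}(\ell) \cdot \bigl((c-1)(-\lambda_1\lambda_2)\bigr)^\ell$, which is exactly the right order of magnitude after taking the $2\ell$-th root and sending $\ell \to \infty$.

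The main obstacle is controlling the \emph{tangled} shapes, i.e., those whose constraint-multigraph carries excess beyond the tree-like skeleton. In Bordenave's original proof, tangles are suppressed by local tree-likeness of random regular graphs, but here each constraint is intrinsically cycle-rich, so the classical tangle definition must be revised: cycles lying inside a single predicate copy should be treated as ``free'' topology, whereas cycles spanning multiple constraints should still cost a $1/\sqrt{n}$ factor in expectation per unit of surplus. To implement this I would follow Bordenave's centered-operator strategy, introducing an operator $\underline{B}^{(\ell)}$ obtained from $B^\ell$ by subtracting low-rank polynomial corrections (reflecting the two-eigenvalue structure of the universal cover), bounding $\mathbb{E}\bigl[\mathrm{tr}\bigl(\underline{B}^{(\ell)}(\underline{B}^{(\ell)})^*\bigr)\bigr]$ only over tangle-free walks, and handling the tangled remainder by a crude operator-norm bound combined with a union bound showing that tangles are rare. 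Reassembling the pieces via \pref{thm:ihara-bass} and applying Markov's inequality then upgrades the expectation estimate to the desired high-probability statement. The hardest technical step will be identifying the correct notion of ``excess'' in this constraint-graph setting and proving the corresponding tangle-suppression estimate, since the intrinsic cycles within each predicate copy prevent a direct transcription of Bordenave's graph-theoretic argument.
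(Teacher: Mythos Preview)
Your plan has the right overall shape, but there is one genuine gap and one conceptual misalignment with what the paper actually does.

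\textbf{The gap.} Bounding $\E\bigl[\tr\bigl(B^\ell (B^\ell)^*\bigr)\bigr]$ with $\ell \sim c_0 \log n / \log\log n$ will not give the theorem. The trace method incurs an extra factor of~$n$ (from the number of starting edges), and after taking the $2\ell$-th root you are left with $n^{1/(2\ell)} = \exp\bigl(\Theta(\log\log n)\bigr)$, which diverges. You need total walk length $\gg \log n$, but tangle-freeness only holds out to radius $\Theta(\log n)$. The paper resolves this tension exactly as Bordenave does: it sets $\ell = \kappa \log n$ so that tangle-freeness holds, and then bounds $\tr\bigl((B^\ell B^{*\ell})^k\bigr)$ for a slowly growing $k$, so that the total length $2k\ell \gg \log n$ kills the $n$~factor while each individual segment of length~$\ell$ stays within the tangle-free radius. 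Your proposal as written does not have this $k$-fold composition, and without it the argument fails.

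\textbf{The misalignment.} Your plan to redefine ``tangle'' so that cycles inside a single predicate copy are free, and to use Bordenave's centered operator $\underline{B}^{(\ell)}$, is more complicated than necessary, and the invocation of the Ihara--Bass formula inside this argument is misplaced (Ihara--Bass is used afterward to pass from $\rho(B)$ to $\spec(A)$, not to prove the $B$-bound). The paper sidesteps the within-predicate-cycle issue entirely by working in the \emph{constraint graph}~$\bcongraph_n$, which is a random $n$-lift of the bipartite graph $K_{r,c}$ and is therefore locally tree-like in the ordinary sense. A nomadic walk in the instance graph is exactly a nonbacktracking walk in this constraint graph, so the internal cycles of each predicate never appear and no new tangle notion is needed. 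Moreover, the random negations replace the centered-operator machinery: the $1$-wise uniform signs force the expectation to vanish on any linkage that passes through some constraint vertex only once (a ``singleton''), so one may simply restrict to singleton-free linkages. The combinatorial heart of the argument is then that singleton-free, tangle-free linkages are ``nearly duplicative'' (each constraint vertex is visited at least twice, forcing the number of distinct vertices down to roughly $k\ell$ instead of $2k\ell$), and an encoding/counting of such linkages yields the bound $\sqrt{\rho(|B_1|)}$. Note finally that the two-eigenvalue hypothesis is \emph{not} used in the paper's proof of the $B$-bound itself; it enters only when computing $\rho(|B_1|) = (c-1)(-\lambda_1\lambda_2)$ and when applying Ihara--Bass to pass to~$A$.
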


And we can use our version of Ihara--Bass, \pref{thm:ihara-bass}, to conclude bounds on the spectrum of the adjacency matrix $A$ from \pref{thm:bordenave-informal}, which is worked out in \pref{sec:bound}.
\begin{theorem} \label{thm:adjacency-spec-bounds-informal}
    With high probability,
    \[
        \spec(A) \subseteq \left[\lambda_1+\lambda_2-2\sqrt{(c-1)(-\lambda_1\lambda_2)}-o(1),\lambda_1+\lambda_2+2\sqrt{(c-1)(-\lambda_1\lambda_2)}+o(1)\right].
    \]
\end{theorem}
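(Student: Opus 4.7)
Set $r := \sqrt{(c-1)(-\lambda_1\lambda_2)}$, assuming (as in the two-eigenvalue constraints of interest, where the constraint adjacency matrix has trace zero) that $\lambda_1\lambda_2 < 0$ so that $r$ is a positive real. The plan is to transfer the spectral bound on $B$ coming from \pref{thm:bordenave-informal} to a spectral bound on $A$ by reading off the roots of the Ihara--Bass identity \pref{thm:ihara-bass}. Notice that $L(t) = p(t)\,\Id - tA$, where $p(t) := 1 + (\lambda_1+\lambda_2)t + r^2 t^2$, so
\[
    \det L(t) \;=\; \prod_{\mu \in \spec(A)} q_\mu(t), \qquad q_\mu(t) := r^2 t^2 + (\lambda_1+\lambda_2 - \mu)t + 1,
\]
and each eigenvalue $\mu$ of $A$ contributes a pair of (possibly complex) roots $t_+(\mu),\,t_-(\mu)$ of $\det L(t)$, with the Vieta relations $t_+ t_- = 1/r^2$ and $t_+ + t_- = (\mu - \lambda_1 - \lambda_2)/r^2$.

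Next, since \pref{thm:ihara-bass} is an honest polynomial identity and the prefactor $(1+\lambda_1 t)^{k_1}(1+\lambda_2 t)^{k_2}$ is itself a polynomial in $t$, no cancellation of roots can occur, and every root of $\det L(t)$ is automatically a root of $\det(\Id - Bt)$ of at least the same multiplicity. Hence both reciprocals $1/t_\pm$ lie in $\spec(B)$, and by \pref{thm:bordenave-informal} are whp bounded in magnitude by $\rho(B) \le r + o_n(1)$; equivalently, $|t_\pm| \ge 1/(r + o_n(1))$. Combined with $|t_+ t_-| = 1/r^2$, this squeezes each $|t_\pm|$ into a thin annulus around $1/r$; in particular $|t_\pm| \le (1/r)(1 + o_n(1))$.

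The triangle inequality then delivers
\[
    |\mu - \lambda_1 - \lambda_2| \;=\; r^2\,|t_+ + t_-| \;\le\; r^2(|t_+| + |t_-|) \;\le\; 2r + o_n(1),
\]
which is precisely the containment asserted by \pref{thm:adjacency-spec-bounds-informal}. All the probabilistic heavy lifting has been outsourced to Bordenave's estimate \pref{thm:bordenave-informal}; the remainder is essentially algebraic bookkeeping around Ihara--Bass. The one place that deserves genuine care is the \emph{real}-rooted case of $q_\mu$ (discriminant nonnegative), where a priori one root could be much smaller than $1/r$ while the other is much larger---it is the product constraint from Vieta's, together with the lower bound $|t_\pm| \ge 1/\rho(B)$ applied to \emph{both} roots, that squeezes them together near $1/r$. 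In the complex-rooted case one automatically has $|t_+| = |t_-| = 1/r$, so the bound is immediate. A minor preliminary to nail down is that the exponents $nc\lambda_2/(\lambda_2-\lambda_1)-1$ and $nc\lambda_1/(\lambda_1-\lambda_2)-1$ appearing in \pref{thm:ihara-bass} are nonnegative integers in our setting, so that the right-hand side really is a polynomial.
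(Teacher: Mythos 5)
Your proposal is correct and follows essentially the same route as the paper: transfer the whp spectral-radius bound on the nomadic operator from \pref{thm:bordenave-informal} to the adjacency spectrum via the Ihara--Bass identity \pref{thm:ihara-bass-restatement}, using that the Ihara--Bass prefactor is a genuine polynomial so every root of $\det L(t)$ survives as a root of $\det(\Id - Bt)$. Where the paper solves the quadratic explicitly and does a three-way case analysis on the location of the adjacency eigenvalue $\nu$ (cases 1--3 in \pref{sec:bound}), you reach the identical conclusion more compactly via Vieta's relations and the product-of-roots constraint $t_+ t_- = 1/r^2$; this is a presentational difference, not a different argument.
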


Yet another advantage of using nomadic walks instead of closed walks is that in \pref{thm:adjacency-spec-bounds-informal} we are able to bound the left and right spectral edge of $A$ by \emph{different} values, whereas counting closed walks would, at best, only give an upper bound on $|\lambda|_{\max}(A)$.

\pref{thm:adjacency-spec-bounds-informal} lets us conclude an upper bound on the SDP value, and we complement that with a lower bound via the construction of an SDP solution that nearly matches the upper bound.  In particular, we prove the following in \pref{sec:wave}.
\begin{theorem}
    For every $\eps > 0$, whp there exists a PSD matrix $M$ with an all-ones diagonal such that
    \[
        \langle A, M\rangle \ge \left(\lambda_1+\lambda_2+2\sqrt{(c-1)(-\lambda_1\lambda_2)}-\eps\right)n.
    \]
\end{theorem}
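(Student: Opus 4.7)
The plan is to produce the desired SDP solution by transporting a \emph{Gaussian wave} from the infinite additive-product graph $X$ to the random finite instance $\calI$, following the paradigm used by Csikv\'ari--Harangi--Vir\'ag, Montanari--Sen, and Deshpande~et~al. The value
\[
\rho^* \;:=\; \lambda_1 + \lambda_2 + 2\sqrt{(c-1)(-\lambda_1 \lambda_2)}
\]
is precisely the upper spectral edge of the adjacency operator $A_X$ on $X$ identified in \pref{thm:adjacency-spec-bounds-informal}, so the natural strategy is to read off unit vectors $y_i$ from a random ``approximate top eigenvector'' on $X$ and copy them to $\calI$ via local isomorphism, setting $M_{ij} := \langle y_i, y_j\rangle$.

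The first step is to construct a Gaussian wave on $X$: a centered, $\mathrm{Aut}(X)$-invariant real Gaussian process $\{W_v\}_{v \in V(X)}$ with $\mathbb{E}[W_v^2] = 1$ satisfying, for every $v \in V(X)$,
\[
\sum_{u \sim v} s_{uv}\, \mathbb{E}[W_u W_v] \;=\; \rho^*,
\]
where $s_{uv}$ is the edge sign. Such a process exists precisely when the spectral measure of $A_X$ places absolutely continuous mass at the band edge $\rho^*$. I would deduce this from an infinite-graph analogue of \pref{thm:ihara-bass}, which expresses the resolvent of $A_X$ through that of the nomadic walk operator on $X$; the two-eigenvalue structure of each constraint then forces a Kesten--McKay-type density of states with square-root singularities at $\lambda_1+\lambda_2 \pm 2\sqrt{(c-1)(-\lambda_1\lambda_2)}$, and the wave is obtained from the spectral projection of $A_X$ onto a narrow window near $\rho^*$. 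Positive semidefiniteness of the covariance kernel is automatic since it is a spectral integral against a nonnegative measure.

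The second step lifts the wave to $\calI$. Pick a slowly growing radius $r = r(n) \to \infty$ such that, whp, all but $o(n)$ vertices $i \in [n]$ have their $r$-neighborhood in $\calI$ edge-sign-isomorphic to the basepointed ball $B_r(X)$ --- a standard first-moment argument in the random model of \pref{def:random-model-precise}. Select anchor sets $S_1, \dots, S_m$ whose $r$-neighborhoods are disjoint and together cover all but $o(n)$ vertices; within each $S_g$ sample an independent copy $W^{(g)}$ of the truncated wave on $B_r(X)$, and assign to each vertex $j$ covered by $S_g$ a coordinate equal to $W^{(g)}_{\pi_g(j)}$ via the local isomorphism $\pi_g$. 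Stacking these coordinates across groups and normalizing so that $\|y_j\| = 1$ yields a Gram matrix $M$ that is PSD with an all-ones diagonal by construction. The objective $\langle A, M\rangle$ then decomposes as a sum over anchor groups of local quadratic forms in the wave, and the wave relation gives $\mathbb{E}[\langle A, M\rangle] \ge (\rho^* - o(1)) n$; a variance bound exploiting independence across groups upgrades this to $\langle A, M\rangle \ge (\rho^* - \eps) n$ whp.

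The main obstacle is step~1: producing the wave on $X$. Unlike in the tree case handled by Montanari--Sen, $X$ is not locally a tree and contains genuine cycles coming from the constraint graphs themselves, so its spectral measure is harder to write down and the matching Alon--Boppana-type statement is nontrivial. The two-eigenvalue assumption is precisely what rescues the analysis: it collapses ``within-constraint'' dynamics to a single scalar, and combined with the resolvent identity underlying \pref{thm:ihara-bass} it makes $X$ spectrally as tractable as a tree quotient, so the explicit band $[\lambda_1+\lambda_2-2\sqrt{(c-1)(-\lambda_1\lambda_2)},\,\rho^*]$ of absolutely continuous spectrum emerges directly from the resolvent analysis, on which the Gaussian wave can then be defined.
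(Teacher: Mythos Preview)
Your high-level strategy---build an approximate top eigenvector on the infinite additive product $X$, then transport it to $\calI$ via local isomorphism---matches the paper's. But both steps you sketch diverge from what the paper actually does, and your step~2 has a real gap.

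\textbf{Step 1.} The paper does not go through the spectral measure of $A_X$ or any infinite-graph Ihara--Bass resolvent identity. It simply writes down an explicit exponentially decaying vector on~$X$,
\[
f_v^{(s)}(u) \;=\; \rho(s)^{d(u,v)} \prod_{\{i,j\}\in\calP_{uv}} (A_X)_{ij}, \qquad \rho(s)=\frac{s(1-\delta)}{\sqrt{(c-1)(-\lambda_1\lambda_2)}},
\]
and checks by a direct computation (using only the two identities $A_j^2[v,v]=-\lambda_1\lambda_2$ and $A_j^2[u,v]=(\lambda_1+\lambda_2)A_j[u,v]$) that its Rayleigh quotient is $\lambda_1+\lambda_2 \pm 2\sqrt{(c-1)(-\lambda_1\lambda_2)} \mp O(\delta)$. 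No density-of-states analysis, no square-root singularity, no spectral projection. Your route would work in principle, but it asks you to prove strictly more than is needed.

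\textbf{Step 2.} Here is the gap. Your anchor-set construction assigns to vertex~$j$ a single random Gaussian coordinate $W^{(g)}_{\pi_g(j)}$, so the resulting $y_j$ has $\|y_j\|^2$ equal to a chi-squared variable, not~$1$. ``Normalizing so that $\|y_j\|=1$'' then forces $y_j\in\{\pm 1\}$ (or a unit vector with a single nonzero entry), which collapses the SDP solution to a Boolean one and destroys the covariance structure you built. Moreover, with disjoint anchor neighborhoods, $\langle y_i,y_j\rangle=0$ for any edge crossing between groups, and you have not argued that such edges are $o(n)$ in number. The paper avoids all of this: it truncates $f_v^{(1)}$ to a radius-$L$ vector $g_v$, sets $\wt g_v = g_v$ for vertices with tree-like $L$-neighborhood and $\wt g_v=e_v$ otherwise, and takes
\[
M_+ \;=\; \sum_{v} \wt g_v\,\wt g_v^{\,\dagger}.
\]
This is PSD by construction, and the all-ones diagonal drops out \emph{exactly} from the symmetry $g_v(u)=g_u(v)$ (both equal a product over the unique short nomadic path between $u$ and~$v$), so $(M_+)_{uu}=\sum_v \wt g_v(u)^2=\sum_v \wt g_u(v)^2=\|\wt g_u\|^2=1$. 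No randomness, no concentration argument, no normalization step. The objective is then $\sum_v \langle \wt g_v, A_{\calI}\wt g_v\rangle$, which equals the Rayleigh quotient on~$X$ for every non-bad~$v$ and is bounded below by $-c(-\lambda_1\lambda_2)$ for the $O(\log n)$ bad vertices.
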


As detailed out in \pref{sec:wrapup}, this lets us conclude the main theorem of this paper:
\begin{theorem} \label{thm:punchline}
    For random $c$-constraint-regular instances of a CSP with $2$ distinct eigenvalues $\lambda_1$ and $\lambda_2$, the SDP value is in the range
    \[
        \frac{\lambda_1+\lambda_2+2\sqrt{(c-1)(-\lambda_1\lambda_2)}}{c(-\lambda_1\lambda_2)}\pm\eps
    \]
    with high probability, for any $\eps > 0$.
\end{theorem}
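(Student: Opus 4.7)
The plan is to close the loop by combining the spectral edge bound on $A$ (\pref{thm:adjacency-spec-bounds-informal}) with the explicit Gaussian-wave SDP solution stated immediately above \pref{thm:punchline}. Both halves of \pref{eqn:main-inequality} are thus available as black boxes; the remaining work is largely bookkeeping, but with one real subtlety: pinning down the normalization so that the denominator $c(-\lambda_1\lambda_2)$ emerges in the final formula.

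First I would verify that the natural normalization of the quadratic form $\langle A,M\rangle$ is $\tfrac{1}{cn(-\lambda_1\lambda_2)}$. The key identity is that, since each constraint's adjacency matrix has exactly the two eigenvalues $\lambda_1,\lambda_2$, it satisfies the quadratic relation $A_{\mathrm{constr}}^2 = (\lambda_1+\lambda_2)A_{\mathrm{constr}} - \lambda_1\lambda_2 I$. Reading off the diagonal entries and summing over the $c$ constraints incident to each vertex gives $(A^2)_{ii} = c(-\lambda_1\lambda_2)$, hence $\mathrm{tr}(A^2) = cn(-\lambda_1\lambda_2)$; this is the natural scale, placing $\langle A,M\rangle / (cn(-\lambda_1\lambda_2))$ on the same footing as a (signed, shifted) fraction of satisfied edges, exactly as worked out for $\sort_4$ in the sketch surrounding \pref{fig:sort_4}.

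For the upper bound, any feasible SDP solution $M \succeq 0$ with unit diagonal satisfies
\[
\langle A,M\rangle = \mathrm{tr}(AM) \;\le\; \lambda_{\max}(A)\cdot \mathrm{tr}(M) = n\,\lambda_{\max}(A),
\]
and \pref{thm:adjacency-spec-bounds-informal} gives $\lambda_{\max}(A) \le \lambda_1+\lambda_2+2\sqrt{(c-1)(-\lambda_1\lambda_2)} + o(1)$ whp. Dividing by $cn(-\lambda_1\lambda_2)$ produces the upper endpoint of the stated range, with the $o(1)$ absorbed into $\eps$. For the matching lower bound, I would invoke the Gaussian wave theorem displayed just before \pref{thm:punchline}: whp it produces a concrete PSD matrix $M$ with unit diagonal achieving $\langle A,M\rangle \ge (\lambda_1+\lambda_2+2\sqrt{(c-1)(-\lambda_1\lambda_2)} - \eps)n$; dividing by the same $cn(-\lambda_1\lambda_2)$ closes the sandwich.

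The main obstacle at this stage is not deep: every analytic ingredient has been deposited into \pref{thm:adjacency-spec-bounds-informal} and the preceding wave theorem. Care is needed in (i) the normalization identity via the quadratic relation for $A_{\mathrm{constr}}$, (ii) intersecting the two whp events so that the upper and lower bounds hold simultaneously, and (iii) absorbing the $o(1)$ from the spectral edge into the $\pm\eps$ slack of the statement. Once these three bookkeeping points are handled, \pref{thm:punchline} follows by direct substitution.
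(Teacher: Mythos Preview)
Your proposal is correct and follows essentially the same approach as the paper: the upper bound via $\langle A,M\rangle \le \lambda_{\max}(A)\cdot\mathrm{tr}(M)$ is exactly \pref{fact:sdp-to-spec}, the lower bound is exactly \pref{thm:SDP-sol}, and your normalization argument using the quadratic relation to obtain the degree $c(-\lambda_1\lambda_2)$ (hence $2|E| = c(-\lambda_1\lambda_2)\cdot|V|$) mirrors \pref{cor:deg-X}.
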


\pref{thm:forr} can be viewed as a special case of \pref{thm:punchline}.

\subsection{Relationship to the work of Bordenave--Collins} \label{sec:bc}
Xinyu Wu has brought to our attention the relevance to our work of a recent paper by Bordenave and Collins~\cite{bordenave2018eigenvalues}.  Briefly put, their paper establishes a Friedman/Bordenave theorem for large random graphs whose adjacency matrices are noncommutative polynomials in a fixed number of independent random matching matrices and permutation matrices (together with their transposes).  As a most basic example, it recovers the following form of Friedman's Theorem: whp, the sum of $d$ random perfect matchings has all nontrivial eigenvalues bounded in magnitude by  $\rho(\Z_2 \ast \cdots \text{($d$ times)} \cdots  \ast \Z_2) + o_n(1) = 2\sqrt{d-1} + o_n(1)$.  However, the Bordenave--Collins work gives much more than this.  For example, let $\bG$ be the $n$-vertex graph formed as 
\[
    \bP + \bP^\top + \boldM - \bP \boldM \bP^\top,
\]
where $\boldM$ is a random matching matrix and and $\bP$ is an independent random permutation matrix.  It is not hard to see that $\bG$ will essentially ``locally resemble'' a $2$-constraint-regular $\sort_4$-CSP instance. And, the Bordenave--Collins work implies that the eigenvalues of~$\bG$ are bounded (whp) by $\rho(\sort_4 \addprod \sort_4)$. Using the theory of free probability, it is possible to directly compute that $\rho(\sort_4 \addprod \sort_4) = 2\sqrt{2}$.  In this way, our \pref{thm:adjacency-spec-bounds-informal} in the case of $2$-constraint-regular $\sort_4$-CSPs is covered by Bordenave and Collins.  Indeed, it is not hard to generalize this example to the case of $c$-constraint-regular $\sort_4$-CSPs for any \emph{even} integer~$c$.

Indeed, the Bordenave--Collins work also treats some kinds of graphs that our work cannot; for example, Wu gave the example when $\bG$ is the $n$-vertex graph generated by the polynomial
\[
    \bP_1 + \bP_1^\top + \bP_2 +\bP_2^\top + \bP_3 + \bP_3^\top + \bP_4 + \bP_4^\top + \bP_1 \bP_2 \bP_3 \bP_4 + \bP_4^\top \bP_3^\top \bP_2^\top \bP_1^\top,
\]
where $\bP_1, \dots, \bP_4$ are independent uniformly random permutation matrices. This $\bG$  ``locally resembles'' the infinite free product graph $X = \Z_4 \ast \Z_4 \ast \Z_4 \ast \Z_4$, and the Bordenave--Collins work implies that whp, $\bG$'s nontrivial eigenvalues are bounded in magnitude by~$\rho(X) + o_n(1)$.  (We remark that computing the numeric value of this $\rho(X)$ is difficult, but possible; see, e.g.,~\cite[Ch.~9C]{woess2000random}).  Since the $4$-cycle graph $\Z_4$ has more than two distinct eigenvalues, it is not covered by our work.

This said, the Bordenave--Collins work does not subsume our \pref{thm:adjacency-spec-bounds-informal}, as there are plenty of graph families that our theorem handles but  Bordenave--Collins's does not (seem to).  For example, Wu has sketched to us a proof that one cannot obtain $c$-constraint-regular $\sort_4$ instances for \emph{odd}~$c$ through any straightforward use of \cite{bordenave2018eigenvalues}.  Additionally, even in the cases of interest to us where Bordenave--Collins applies, we can point to some  (minor) advantages of our methods.  For one, our model of random graph generation clearly corresponds to precisely-regular CSP instances, whereas in the Bordenave--Collins model there will be (in expectation) a constant number of local ``blemishes'' where one cannot interpret a piece of the graph as a constraint.  For another, our work directly yields the numerical values of the appropriate spectral radii~$\rho(X)$ (though in the cases where our results apply, these can be obtained through standard methods in free probability).

\section{Preliminaries} \label{sec:prelims}

\subsection{$\twoxor$ optimization problems and their relaxations}
All of the CSPs studied in this work ($\maxcut$, $\naethree$, $\sort_4$, $\forrelation_k$, etc.)\ will effectively reduce to \emph{$\twoxor$ optimization problems} --- equivalently, the problem maximizing a homogeneous degree-$2$ polynomial with $\pm 1$ coefficients over the Boolean hypercube.
\begin{definition} (Optimization of $\twoxor$ instances)
    Let $G = (V,E)$ be an undirected graph (possibly with parallel edges), with edge-signing $\Wt : E \to \{\pm 1\}$.  We call the pair $\instgraph = (G, \Wt)$ an \emph{instance}. The associated \emph{$\twoxor$ optimization problem} is to determine the \emph{(true) optimum value}
    \[
        \OPT(\instgraph) = \max_{x : V \to \{\pm 1\}} \avg_{e = \{u,v\} \in E} \braces*{\Wt(e) x_u x_v} \in [-1,+1].
    \]
    The special case in which $\Wt \equiv -1$ is referred to as the $\maxcut$ problem on~$G$, as in this case $\frac12 + \frac12 \OPT(\instgraph) = \maxcut(G)$, the maximum fraction of edges that can be cut by a bipartition of~$V$.
\end{definition}
Determining $\OPT(\instgraph)$ is $\NP$-hard in the worst case, leading to the study of computationally tractable approximations/relaxations.  Two such approximations are the \emph{eigenvalue bound} and the \emph{SDP bound}, which we now recall.
\begin{definition} (Adjacency matrix/operator)
    The \emph{adjacency matrix} $A$ of a finite weighted graph $(G, \Wt)$ has rows and columns indexed by~$V$; the entry $A[u,v]$ equals the sum of $\Wt(e)$ over all edges with endpoints $\{u,v\}$.  In case $G$ is infinite we can more generally define the adjacency operator $A$ on $\ell_2(V)$ as follows:
    \[
        \text{for } F \in \ell_2(V), \quad AF(u) = \sum_{e = (u,v) \in E} \Wt(e) F(v).
    \]
\end{definition}
\begin{definition} (Eigenvalue bound)
    The \emph{eigenvalue bound} $\EIG(\instgraph)$ for $\twoxor$ instance $\instgraph$ with adjacency matrix~$A$ is $\frac{n}{2|E|} \lambda_{\textnormal{max}}(A)$, where  $\lambda_{\textnormal{max}}$ denotes the maximum eigenvalue.  We have $\OPT(\instgraph) \leq \EIG(\instgraph)$ always, as the eigenvalue bound captures the relaxation of $\twoxor$ optimization where we allow any $x : V \to \R$ satisfying $\|x\|^2 = n$.
\end{definition}

The \emph{SDP value} provides an even tighter upper bound on $\OPT(\instgraph)$, and is still efficiently computable.\footnote{More precisely, it can be computed to within $\pm \epsilon$ in $\poly(|\calI|, \log(1/\eps))$ time using the Ellipsoid Algorithm~\cite{GLS88,DP93}.} The SDP bound dates back to Lov{\'a}sz's Theta Function in the context of the $\indepset$ problem~\cite{Lov79}, and was proposed in the context of the $\maxcut$ problem by Delorme and Poljak~\cite{DP93}.
\begin{definition} (SDP bound)
    The \emph{SDP bound} $\SDP(\instgraph)$ for $\twoxor$ instance $\instgraph$ is
    \[
        \SDP(\instgraph) = \max_{\vec{x} : V \to S^{m-1}} \avg_{e = \{u,v\} \in E} \braces*{\Wt(e) \langle \vec{x}_u, \vec{x}_v\rangle} \in [-1,+1],
    \]
    where $S^{m-1}$ refers to the set of unit vectors in~$\R^m$ and the maximum is also over~$m$ (though $m = n$ is sufficient).   The following holds for all~$\instgraph$:
    \[
        \OPT(\instgraph) \leq \SDP(\instgraph) \leq \EIG(\instgraph).
    \]
    The left inequality is obvious.  One way to see the right inequality is to use the fact~\cite{DP93}, based on SDP duality, that $\SDP(\instgraph)$ is also equal to the minimum value of the eigenvalue bound applied to $A + Y$, where $A$ is the adjacency matrix and $Y$ ranges over all matrices of trace~$0$.
\end{definition}

Goemans and Williamson~\cite{GW95} famously showed that
\[
    \tfrac12 + \tfrac12\SDP(\calI) \leq 1.138 (\tfrac12 + \tfrac12\OPT(\calI))
\]
holds for every $\twoxor$ instance, and Feige--Schechtman~\cite{FS02} showed their bound can be tight in the worst case.\footnote{The case of $\maxcut$ on the $5$-cycle --- i.e., maximizing $-\frac15(x_1 x_2 + x_2 x_3 + x_3 x_4 + x_4 x_5 + x_5 x_1)$ on $\{\pm 1\}^5$ --- already has $\OPT = 3/5$ and $\SDP = (1+\sqrt{5})/4$, showing that $1.138$ cannot be improved below $1.131$.}  As for directly comparing $\SDP(\calI)$ and $\OPT(\calI)$, we have the following:
\begin{itemize}
    \item (\cite{CW04}) $\SDP(\calI) \leq O(\OPT(\calI) \cdot \log(1/\OPT(\calI)))$ always holds.
    \item When~$G$ is bipartite (a special case of particular interest, see \pref{sec:quantum}), it holds that $\SDP(\calI) \leq K \cdot \OPT(\calI)$ for constant~$K$. This is known as \emph{Grothendieck's inequality}~\cite{Gro53}, and the constant is known~\cite{BMMN13} to satisfy $K < \pi/(2 \ln(1+\sqrt{2})) \approx 1.78$.
\end{itemize}

\subsection{Quantum games, and some quantum-relevant constraints}  \label{sec:quantum}
In the case when the underlying graph $G$ is bipartite, $\SDP(\calI)$ has another important interpretation: it is the true \emph{quantum} value of the $2$-player $1$-round  ``nonlocal game'' associated to~$\calI$.  We give definitions below, but let us mention that the $\sort_4$ (equivalently, $\chsh$) and $\forrelation_k$ constraints from \pref{thm:sort} and \pref{thm:forr} are both: (a)~bipartite; (b)~directly inspired by quantum theory.  Thus those two theorems can be interpreted as determining the true quantum value of random $c$-constraint-regular nonlocal games based on $\chsh$ and $\forrelation_k$.

Let us now recall the relevant quantum facts.
\begin{definition}[Nonlocal $\twoxor$ games]
    Given a $\twoxor$ instance $\calI = (G,\Wt)$ with $G = (U,V,E)$ bipartite, the associated \emph{nonlocal ($\twoxor$) game} is the following.  There are spatially separated players Alice and Bob.  A referee chooses $e = (u,v) \in E$ uniformly at random, tells~$u$ to Alice, and tells~$v$ to Bob.  Without communicating, Alice and Bob are required to respond with signs~$x_u, y_v \in \{\pm 1\}$.  The \emph{value} to the players is the expected value of $\Wt(e) x_u y_v$.  It is easy to see that if Alice and Bob are deterministic, or are allowed classical shared randomness, then the optimum value they can achieve is precisely~$\OPT(\calI)$.
\end{definition}
\begin{theorem}                                     \label{thm:quantumvalue}
    (\cite{CHTW04,Tsi80}.) In a nonlocal $\twoxor$ game, if Alice and Bob are allowed to share unlimited quantumly entangled particles, then the optimal value they can achieve is precisely~$\SDP(\calI)$.
\end{theorem}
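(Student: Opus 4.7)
The plan is to prove the two inequalities ``quantum value $\le \SDP(\calI)$'' and ``$\SDP(\calI) \le$ quantum value'' separately. Any quantum $\twoxor$ strategy consists of a shared pure state $|\psi\rangle \in \mathcal{H}_A \otimes \mathcal{H}_B$ together with Hermitian $\pm 1$-observables $\{A_u\}_{u\in U}$ on $\mathcal{H}_A$ and $\{B_v\}_{v\in V}$ on $\mathcal{H}_B$ (i.e., $A_u^\dagger = A_u$ with $A_u^2 = I$, and likewise for $B_v$), and achieves value $\avg_{e=\{u,v\}\in E} \Wt(e)\,\langle\psi|A_u \otimes B_v|\psi\rangle$.

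For the upper bound I would extract an SDP solution directly from any quantum strategy. Setting
\[
\vec{x}_u \;=\; (A_u \otimes I)|\psi\rangle, \qquad \vec{y}_v \;=\; (I \otimes B_v)|\psi\rangle
\]
in the underlying Hilbert space, each $\vec{x}_u$ and $\vec{y}_v$ is a unit vector (since $A_u^\dagger A_u = A_u^2 = I$), and one computes $\langle \vec{x}_u, \vec{y}_v\rangle = \langle \psi|A_u \otimes B_v|\psi\rangle$. Since the right-hand side is real by Hermiticity, passing to the realification of $\mathcal{H}_A \otimes \mathcal{H}_B$ (at worst doubling the dimension) produces a genuine real feasible point of the SDP whose objective matches the quantum value exactly.

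For the lower bound I would carry out Tsirelson's construction. Given an optimal SDP solution with unit vectors $\vec{x}_u, \vec{y}_v \in \R^m$, let $E_1, \ldots, E_m$ be $d \times d$ Hermitian operators with $d = 2^{\lceil m/2 \rceil}$ satisfying the Clifford anticommutation relations $E_i^2 = I$ and $E_i E_j + E_j E_i = 0$ for $i \neq j$; these can be built explicitly as suitable tensor products of Pauli matrices. Take $|\psi\rangle = \frac{1}{\sqrt d}\sum_{i=1}^d |i\rangle|i\rangle$ to be the maximally entangled state, and set $A_u = \sum_i x_{u,i} E_i$ and $B_v = \sum_j y_{v,j} E_j^\top$. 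The anticommutation relations force $A_u^2 = \|\vec{x}_u\|^2 I = I$ and likewise $B_v^2 = I$, certifying these as valid $\pm 1$-observables; and the identity $\langle \psi | M \otimes N | \psi\rangle = \tfrac{1}{d}\,\mathrm{tr}(M N^\top)$ combined with $\mathrm{tr}(E_i E_j) = d\,\delta_{ij}$ collapses the double sum to $\langle \psi|A_u \otimes B_v|\psi\rangle = \langle \vec{x}_u, \vec{y}_v\rangle$, so the constructed strategy achieves the SDP value exactly.

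The main obstacle — and really the only genuinely nontrivial input — is the Clifford/Majorana construction of $m$ anticommuting Hermitian involutions in dimension $2^{\Theta(m)}$; this can be done inductively via Kronecker products of Pauli matrices, or cited from the representation theory of the Clifford algebra $\mathrm{Cliff}(\R^m)$. The remaining subtlety is the transpose convention on Bob's side: using $E_j^\top$ rather than $E_j$ is what forces the two-sided sandwich against the maximally entangled state to reproduce the real inner product $\langle \vec{x}_u, \vec{y}_v\rangle$ rather than some twisted variant. With these ingredients in place, the two inequalities close up, and the supremum of quantum values equals $\SDP(\calI)$ exactly.
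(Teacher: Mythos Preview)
The paper does not give its own proof of this theorem: it is stated as a background result and attributed to \cite{CHTW04,Tsi80}. Your proposal is precisely the standard Tsirelson argument from those references, and it is correct as written. So there is nothing to compare against, but your sketch is sound; the only thing one might add is a word on why a general quantum strategy can be taken in the form ``pure state plus $\pm 1$-observables'' (Naimark dilation reduces POVMs to projective measurements, and projective $\{\pm 1\}$-valued measurements are exactly observables $A$ with $A^2=I$), after which everything you wrote goes through.
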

The fact that there exist bipartite edge-signed~$\calI$ for which $\SDP(\calI) > \OPT(\calI)$ is foundational for the experimental verification of quantum mechanics, as the following example attests:
\begin{example}
    Consider the $\twoxor$ instance depicted in \pref{fig:chsh}, called $\chsh$ after Clauser, Horne, Shimony, and Holt~\cite{CHSH69}.    It has
    \[
        \OPT(\CHSH) = 1/2 < 1/\sqrt{2} = \SDP(\CHSH).
    \]
     \myfig{.25}{chsh.pdf}{The CHSH game/CSP}{fig:chsh}
    The upper bound $4 \cdot \OPT(\CHSH) \leq 2$ is often called \emph{Bell's inequality}~\cite{Bel64}, and the higher lower bound $1/\sqrt{2} \leq \SDP(\CHSH)$ is from~\cite{CHSH69} (with $\SDP(\CHSH) \leq 1/\sqrt{2}$ due to Tsirelson~\cite{Tsi80}).  Aspect and others~\cite{ADR82} famously experimentally realized this gap between what can be achieved with classical vs.\ quantum resources.
\end{example}
In fact, the $\CHSH$ instance is nothing more than the $\sort_4$ predicate in disguise! More precisely (cf.~\pref{eqn:sort-adj}),
\[
    \CHSH(x_1, x_2, x_3, x_4) = \tfrac14(x_1 x_3 + x_2 x_3 + x_1 x_4 - x_2 x_4) =  \sort_4(x_2, x_3, x_1, x_4) - \tfrac12.
\]
Thanks to its degree-$2$ Fourier expansion, CSPs based on the $\sort_4$/$\CHSH$ constraint have been studied in a variety of contexts, including concrete complexity~\cite{Amb06,APV16,OSTWZ14} and fixed parameter algorithms~\cite{Wil07}.

Though $\sort_4$ is a ``predicate'', in the sense that it takes $0$/$1$ (unsat/sat) values, there's nothing necessary about basing a large CSP on predicates.  An interesting family of constraints that can be modeled by $\twoxor$ optimization, originally arising in quantum complexity theory~\cite{AA15}, is the family of  ``Forrelation'' functions.  For any $k \in \N$, the $\Forrelation_{k}$ function is defined by
\[
    \Forrelation_k : \{\pm 1\}^{2^k} \times \{\pm 1\}^{2^k} \to [-1,+1], \qquad
    \Forrelation_k(x_1, \dots, x_{2^k}, y_1, \dots, y_{2^k}) = 2^{-2k} x^\top H_k y,
\]
where $H_k = \begin{pmatrix} +1 & +1 \\ +1 & -1 \end{pmatrix}^{\otimes k}$ is the $k$th Walsh--Hadamard matrix.  Note that $\Forrelation_0$ corresponds to the single-(positive-)edge $\twoxor$ CSP, and $\Forrelation_1$ is $\CHSH$. 

\subsection{$\twoxor$ graphs with only $2$ distinct eigenvalues}
As mentioned, the class of constraints that we treat in this work are those that can be modeled as $\twoxor$ instances with \emph{$2$ distinct eigenvalues}.  The $\Forrelation_k$ constraint is a prime example; when viewed as an edge-signed graph (i.e., ignoring the $2^{-2k}$ scaling factors), its eigenvalues are all $\pm 2^{k/2}$.  Another example is the complete graph constraint on~$r$ variables, which has eigenvalues of~$r-1$ and $-1$ (the latter with multiplicity~$r-1$).  The $r = 3$ complete-graph case, after a trivial affine shift, also corresponds to a Boolean predicate that is well known in the context of CSPs: the $\naethree$ predicate, as studied in~\cite{deshpande2018threshold}.  This is because
\[
    \naethree(x_1, x_2, x_3) = \frac34 - \frac34(x_1 x_2 + x_2 x_3 + x_3 x_1).
\]

Let us make some definitions we will use throughout the paper.
\begin{definition}[$2$-eigenvalue graphs]   \label{def:2}
    We call an undirected, edge-weighted simple graph $\calI$ a \emph{$2$-eigenvalue graph} if there are two real numbers $\lambda_1$ and $\lambda_2$ such that each eigenvalue of $\calI$'s (signed) adjacency matrix $A$ is equal to either $\lambda_1$ or $\lambda_2$.
\end{definition}
See, e.g., \cite{Ram15} for a paper studying such graphs.  In this section, let us use the notation from \pref{def:2} and prove some properties that will be used throughout the paper.

First, since $A$ is symmetric, its eigenvectors are spanning and therefore every vector can be written as the sum of a vector in $\ker(A - \lambda_1 \Id)$ and one in $\ker(A - \lambda_2 \Id)$.  Thus:
\begin{proposition} \label{prop:minimal-poly}
 $(A-\lambda_1\Id)(A-\lambda_2\Id) = 0$, where $\Id$ denotes the identity matrix.
\end{proposition}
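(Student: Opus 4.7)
The plan is to verify the operator identity by evaluating both sides on an arbitrary vector, exploiting the decomposition hinted at in the preceding sentence of the excerpt. Because $A$ is a real symmetric matrix, the spectral theorem guarantees an orthonormal eigenbasis, and since the only eigenvalues available are $\lambda_1$ and $\lambda_2$, the ambient space splits as $\R^n = \ker(A - \lambda_1 \Id) \oplus \ker(A - \lambda_2 \Id)$. So it suffices to check the identity on each of these two summands separately.

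First I would take an arbitrary $v_1 \in \ker(A - \lambda_1 \Id)$ and observe that $(A - \lambda_1 \Id) v_1 = 0$. The two factors $(A - \lambda_1 \Id)$ and $(A - \lambda_2 \Id)$ are polynomials in the same matrix $A$, hence commute, so $(A - \lambda_1 \Id)(A - \lambda_2 \Id) v_1 = (A - \lambda_2 \Id)(A - \lambda_1 \Id) v_1 = 0$. By the symmetric argument, $(A - \lambda_1 \Id)(A - \lambda_2 \Id) v_2 = 0$ for any $v_2 \in \ker(A - \lambda_2 \Id)$. Linearity, together with the direct-sum decomposition above, then gives $(A - \lambda_1 \Id)(A - \lambda_2 \Id) v = 0$ for every $v \in \R^n$, which is the claimed identity.

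There is essentially no obstacle here; this is the standard fact that the minimal polynomial of a diagonalizable operator is the squarefree product of $(x - \mu)$ over its distinct eigenvalues~$\mu$, specialized to the case of only two eigenvalues. The only thing worth flagging is the appeal to real symmetry of $A$, which is what guarantees that the kernels actually span (as opposed to being merely generalized eigenspaces); this is immediate from the hypothesis that $\calI$ is an undirected edge-weighted graph, so $A$ is symmetric by construction.
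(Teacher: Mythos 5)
Your proof is correct and takes essentially the same approach as the paper: the paper's argument is exactly that symmetry of $A$ makes the eigenvectors spanning, so every vector decomposes into a $\ker(A-\lambda_1\Id)$ part and a $\ker(A-\lambda_2\Id)$ part, each annihilated by the corresponding factor. You have simply spelled out the commutativity step and the final linearity appeal, which the paper leaves implicit.
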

%

This proposition
implies that $A^2 = (\lambda_1+\lambda_2)A - \lambda_1\lambda_2\Id$.  Thus we can deduce the following two facts:
\begin{fact}    \label{fact:diag-square}
    For any $v\in V(G)$, $\displaystyle \sum_{u\in V(G)} A[u,v]^2  = A^2[v,v] = -\lambda_1\lambda_2$.
\end{fact}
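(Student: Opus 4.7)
The plan is very short since all the ingredients are already in hand. First I would establish the elementary identity $\sum_{u \in V(G)} A[u,v]^2 = A^2[v,v]$. Because $\calI$ is an undirected edge-weighted graph, its adjacency matrix is symmetric, so
\[
A^2[v,v] = \sum_{u \in V(G)} A[v,u]\, A[u,v] = \sum_{u \in V(G)} A[u,v]^2,
\]
which is the left equality.

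Next I would exploit the minimal polynomial relation from \pref{prop:minimal-poly}. Expanding $(A - \lambda_1 \Id)(A - \lambda_2 \Id) = 0$ yields
\[
A^2 = (\lambda_1 + \lambda_2)\, A - \lambda_1 \lambda_2\, \Id,
\]
as noted in the text just above the fact. Reading off the $(v,v)$ diagonal entry gives
\[
A^2[v,v] = (\lambda_1+\lambda_2)\, A[v,v] - \lambda_1 \lambda_2.
\]
The only remaining point is to argue that $A[v,v]=0$. This is immediate from \pref{def:2}, which stipulates that $\calI$ is a \emph{simple} graph; hence there are no self-loops, and the diagonal of the adjacency matrix vanishes. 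Substituting $A[v,v]=0$ yields $A^2[v,v] = -\lambda_1 \lambda_2$, completing the chain of equalities.

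There is no genuine obstacle here; the statement is essentially a one-line consequence of the minimal-polynomial identity combined with the loop-freeness of simple graphs. The only thing worth flagging is the implicit use of the simplicity hypothesis from \pref{def:2}: without the assumption that $A$ has zero diagonal, the value $A^2[v,v]$ would acquire an additional $(\lambda_1+\lambda_2) A[v,v]$ term and the clean identity would fail.
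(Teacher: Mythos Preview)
Your proposal is correct and follows exactly the approach the paper takes: the paper deduces the fact directly from the identity $A^2 = (\lambda_1+\lambda_2)A - \lambda_1\lambda_2\Id$ obtained by expanding \pref{prop:minimal-poly}, together with the vanishing diagonal of~$A$ guaranteed by simplicity in \pref{def:2}. There is nothing to add.
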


\begin{fact}    \label{fact:off-diag-square}
    For any pair of distinct vertices $u,v\in V(G)$, \[\sum_{w\in V(G)} A[u,w] A[w,v] = A^2[u,v] = (\lambda_1+\lambda_2)A[u,v].\]
\end{fact}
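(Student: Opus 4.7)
The plan is to read off the desired identity from the matrix equation that was just extracted from \pref{prop:minimal-poly}. Expanding $(A-\lambda_1\Id)(A-\lambda_2\Id) = 0$ and rearranging gives the polynomial identity $A^2 = (\lambda_1+\lambda_2)A - \lambda_1\lambda_2\Id$ in the matrix algebra, which is valid entry-by-entry. So for any pair of distinct vertices $u,v$, I would simply evaluate both sides at the $(u,v)$-entry.

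On the left, the standard matrix-multiplication expansion gives $A^2[u,v] = \sum_{w \in V(G)} A[u,w]A[w,v]$, which is the sum we want to compute. On the right, $(\lambda_1+\lambda_2)A[u,v] - \lambda_1\lambda_2\Id[u,v]$. The key observation that forces the cleanup is that $u \neq v$, so $\Id[u,v] = 0$, which kills the $-\lambda_1\lambda_2\Id$ term and leaves exactly $(\lambda_1+\lambda_2)A[u,v]$. Chaining the two equalities yields the claim.

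Since the whole argument is a one-line consequence of the minimal polynomial of $A$ (itself already recorded in the preceding proposition), there is no real obstacle; the only thing one has to be careful about is remembering that the off-diagonal hypothesis $u \neq v$ is precisely what is used to discard the identity term, which is why the diagonal case (\pref{fact:diag-square}) looks different and retains the $-\lambda_1\lambda_2$ contribution.
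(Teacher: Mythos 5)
Your argument is exactly the one the paper intends: the paper records $A^2 = (\lambda_1+\lambda_2)A - \lambda_1\lambda_2\Id$ immediately after \pref{prop:minimal-poly} and presents both \pref{fact:diag-square} and \pref{fact:off-diag-square} as direct entrywise consequences, with the off-diagonal case dropping the identity term because $\Id[u,v]=0$. Your write-up is correct and matches this reasoning.
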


\subsection{Random constraint graphs, instance graphs, and additive products} \label{sec:prelims-walks}
\begin{definition}[Constraint graphs]
    An \emph{$r$-ary, $c$-atom constraint graph} is any $n$-fold lift~$\congraph$ of the complete bipartite graph $K_{r,c}$.  Each vertex on the $c$-regular side is called a \emph{variable vertex}, and is typically depicted by a circle. The variable vertices are partitioned into~$r$ \emph{variable groups} each of size~$n$, called the \emph{$1$st variable group}, the \emph{$2$nd variable group}, etc.  Each vertex on the $r$-regular side is called a \emph{constraint} (or \emph{atom}) \emph{vertex}, and is typically depicted by a square.  Again, the constraint vertices are partitioned into $c$ \emph{constraint}  (or \emph{atom}) groups of size~$n$, called the \emph{$1$st constraint/atom group}, \emph{$2$nd constraint/atom group}, etc.  When $n = 1$, we call $\congraph$ a \emph{base constraint graph}.  We also allow ``$n = \infty$'': this means we take the infinite $(r,c)$-biregular tree and partition its variable vertices into $r$~groups and its constraint variables into $c$~groups in such a way that every variable vertex in the $i$th group has exactly one neighbor from each of the $c$ constraint groups, and similarly every constraint vertex in the $j$th group has exactly one neighbor from each of the $r$ variable groups. An example of a constraint graph is shown in \pref{fig:instcons}. \footnote{This can be done in an arbitrary ``greedy'' way, fixing any, say, constraint vertex to be in ``group~$1$'', fixing its variables neighbors to be in groups $1 \dots r$ in an arbitrary way, fixing \emph{their} constraint neighbors to be in groups $2 \dots c$ in an arbitrary way, etc.}
\end{definition}

\myfig{.25}{example_krc.pdf}{The complete $K_{4,3}$ graph}{fig:krc}

\begin{definition}[Instance graphs] \label{def:inst-graph}
    Let $\atomlist = (A_1, \dots, A_c)$ be a sequence of \emph{atoms}, meaning edge-weighted undirected graphs on a common vertex set~$[r]$.  (In this paper, the edge-weights will usually be~$\pm 1$.)  We also think of each atom as a collection of ``$\twoxor$-constraints'' on variable set~$r$.  Now given an $r$-ary, $c$-atom constraint graph~$\congraph$, we can combine it with the atom specification $\atomlist$ to form the \emph{instance graph} $\instgraph \coloneqq \atomlist(\congraph)$.  This edge-weighted undirected graph $\instgraph$ has as its vertex set all the variable vertices of~$\congraph$.  The edges of $\instgraph$ are formed as follows: We iterate through each $j \in [c]$ and each constraint vertex~$f$ in the $j$th constraint group of~$\congraph$. Given $f$, with variables neighbors $v_1, \dots, v_r$ in $\congraph$, we place a copy of atom $A_j$ onto these vertices in~$\instgraph$.  ($\instgraph$ may end up with parallel edges.)  We refer to the graph obtained by placing a copy of $A_j$ on vertices $v_1,\dots,v_r$ as $A_f$, and for any edge $e$ in $\instgraph$ that came from placing $A_j$, we define $\Atom(e) \coloneqq A_f$.  We use $v\sim A_f$ to denote that $v$ is one of $v_1,\dots,v_r$.  For $u,v\in\{v_1,\dots,v_r\}$, $A_f(u,v)$ denotes the edge in $A_f$ between $u$ and $v$.  And finally, denote the set $\{A_f:f~\text{constraint vertex in}~\congraph\}$ with $\Atoms(\instgraph)$. An example of an instance graph and corresponding constraint graph is shown in \pref{fig:instcons}.
\end{definition}

\begin{remark}
    Forming $\instgraph$ from $\congraph$ is somewhat similar to squaring $\congraph$ (in the graph-theoretic sense) and then restricting to the variable vertices.  With this in mind, here is an alternate way to describe the edges of $\instgraph$:  For each pair of distinct vertices $v, v'$ in $\instgraph$ (in variable groups $i$ and $i'$, respectively) we consider all length-$2$ paths joining~$v$ and~$v'$ in $\congraph$.  For each such path passing through a constraint vertex in constraint group~$j$, we add the edge $(v, v')$ into $\instgraph$ with edge-weight $A_j[i,i']$ (which may be~$0$).
\end{remark}

\begin{remark}  \label{rem:atom-transitive}
    We treat atoms as edge-weighted, undirected, complete graphs.  Thus, for a constraint vertex $f$ in constraint-graph $\congraph$, if there is an edge between vertices $u$ and $v$, and an edge between vertices $v$ and $w$ in the atom $A_f$, then there is an edge between $u$ and $w$ in $A_f$.  This view is significant in light of the proof of \pref{thm:ihara-bass-restatement}.
\end{remark}

\begin{figure}[!ht]
  \centering
  \begin{subfigure}[b]{0.45\textwidth}
    \centering
    \includegraphics[width=0.9\textwidth]{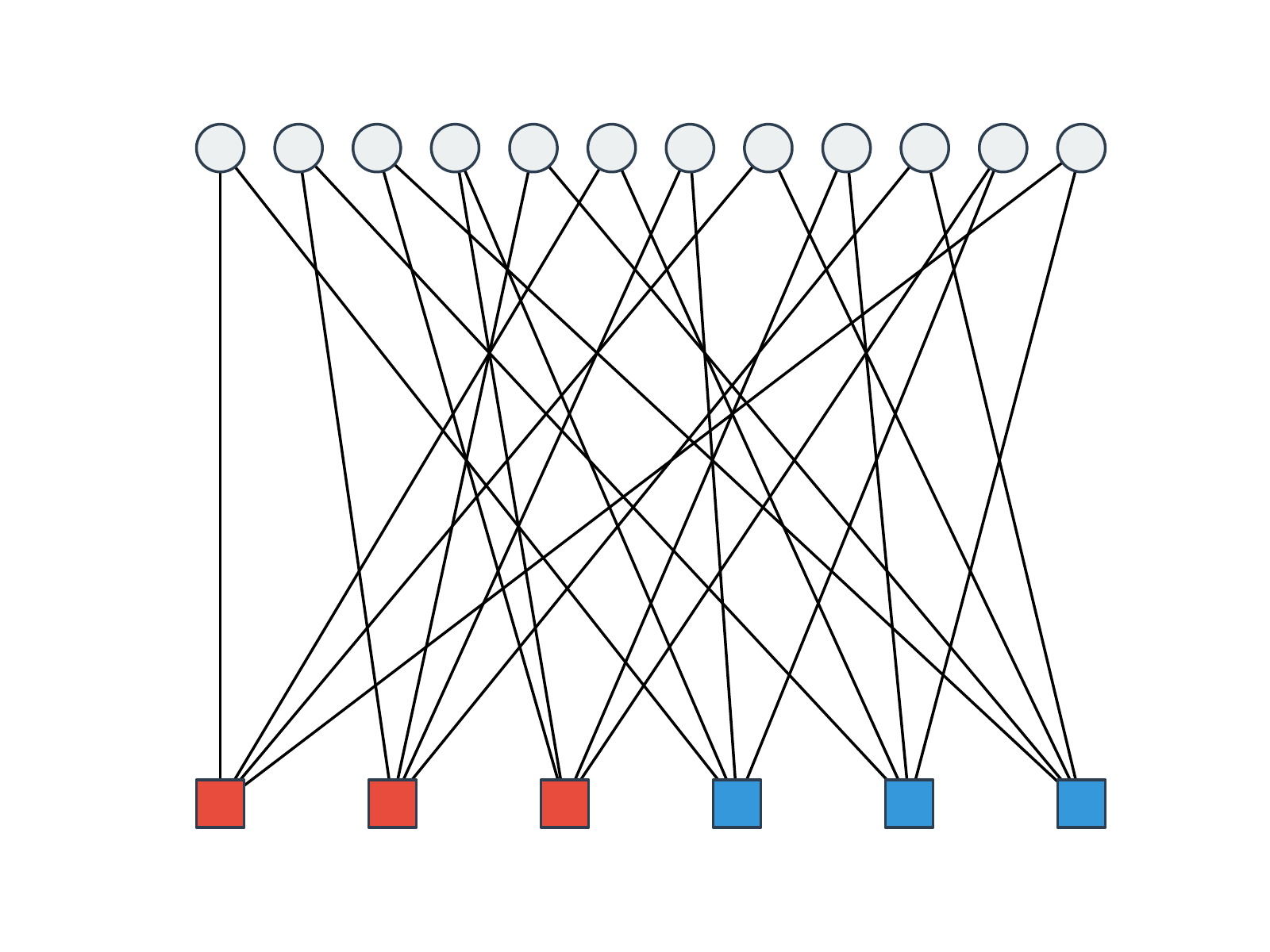}
  \end{subfigure}
  ~
  \begin{subfigure}[b]{0.45\textwidth}
    \centering
    \includegraphics[width=0.9\textwidth]{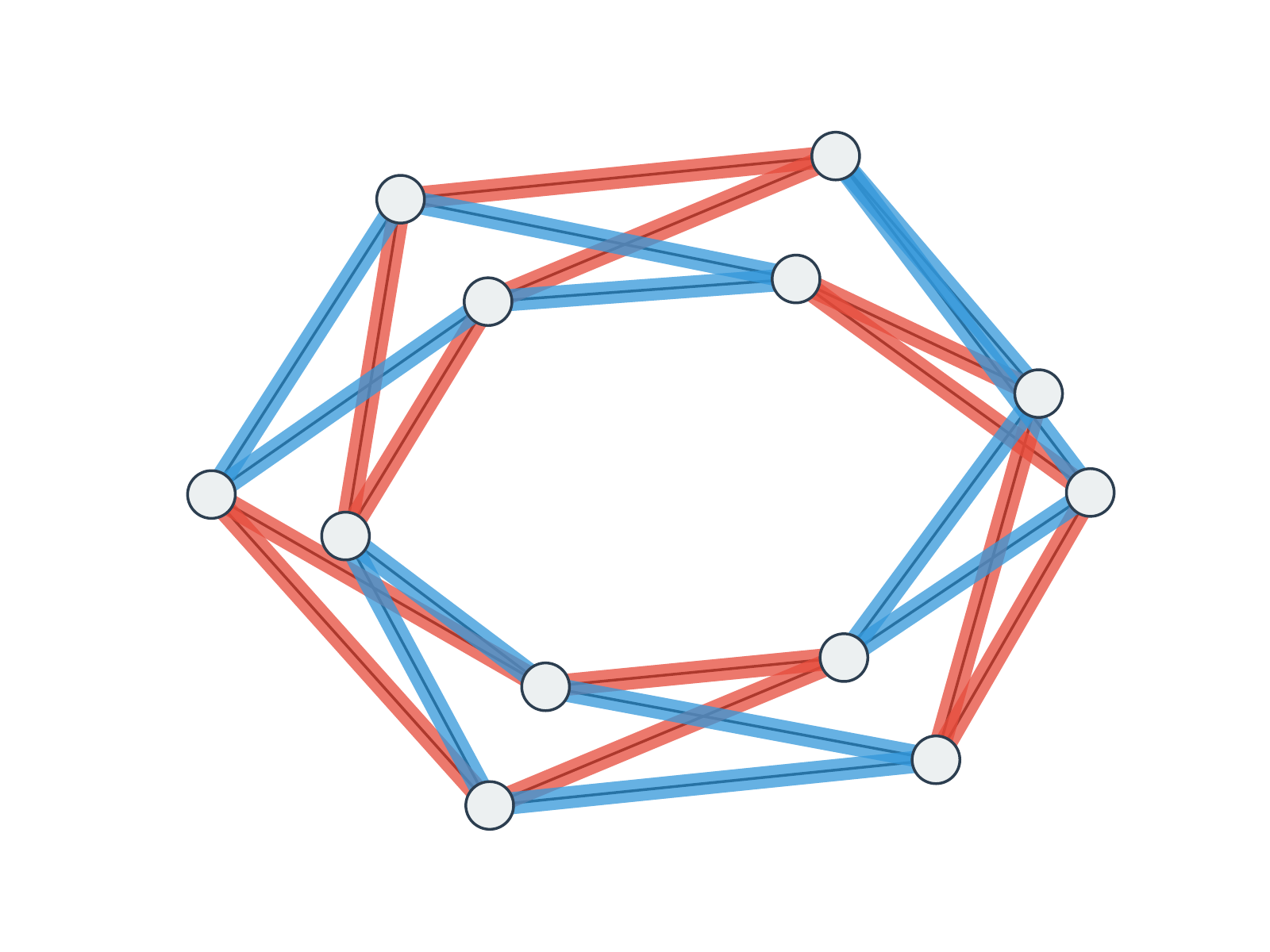}
  \end{subfigure}
  
  \caption{The figure on the left shows an example of a $4$-ary, $2$-atom $3$-fold lift constraint graph, with the left bipartition color coded by constraint/atom groups. The figure on the right is the corresponding instance graph on $(C_4, C_4)$, two four-cycle graphs, where each atom is color coded to match the figure on the left.}
  \label{fig:instcons}
\end{figure}

The following notions of additive lifts and additive products were introduced in~\cite{MO18}:
\begin{definition}[Random additive lifts]
    In the context of $r$-ary, $c$-atom constraint graphs, a \emph{random $n$-lifted constraint graph} simply means a usual random $n$-lift $\congraph$ (see, e.g.,~\cite{BL06}) of the base constraint graph.  Given atoms $\atomlist = (A_1, \dots, A_c)$, the resulting instance graph $\calI = \atomlist(\congraph)$ is called a \emph{random additive lift} of $\atomlist$.
\end{definition}
\begin{definition}[Additive products]
    If instead $\congraph$ is the ``$\infty$-lift'' of $K_{r,c}$, the resulting infinite instance graph $\calI = \atomlist(\congraph)$ is called the \emph{additive product} of $A_1, \dots, A_c$, denoted $A_1 \addprod A_2 \addprod \cdots \addprod A_c$.
\end{definition}

We will also extend \pref{def:inst-graph} to allow random additive lifts with \emph{negations}.  Eventually we will define a general notion of ``$1$-wise uniform negations'', but let us begin with two special cases.  In the ``constraint negation'' model, we assign to each constraint vertex~$f$ in $\congraph$ (from group~$j$) an independent uniformly random sign~$\xi^f$.  Then, when the instance graph~$\instgraph$ is formed from $\congraph$, each edge engendered by the constraint~$f$ has its weight multiplied by~$\xi^f$.  (Thus the edges in this copy of the atom~$A_j$ are either all left alone or they are simultaneously negated, with equal probability.)  In the ``variable negation'' model, for each group-$j$ constraint vertex~$f$, adjacent to variable vertices~$v_1, \dots, v_r$, we assign independent and uniformly random signs $(\xi^f_{i})_{i \in [r]}$ to the variables.  Then when the copy of~$A_j$ is added into~$\instgraph$, the $\{i,i'\}$-edge has its weight multiplied by $\xi^f_{i} \xi^f_{i'}$.  This corresponds to the constraint being applied to random \emph{literals}, rather than variables.

Notice that in both of these negation models, every time a copy of atom $A_j$ is placed into~$\instgraph$, its edges are multiplied by a collection of random signs $(\xi^f_{ij})_{i,j \in [r]}$ which are ``$1$-wise uniform''.  This is the only property we will require of a negation model.
\begin{definition}[Random additive lifts with negations]    \label{def:random-model-precise}
     A random additive lift \emph{with $1$-wise uniform negations} is a variant of \pref{def:inst-graph} where, for each constraint vertex~$f$ there are associated random signs $\xi^{(f)}_{i} \in \{\pm 1\}$, where $i \in [r]$.  For each fixed~$f$, the random variables $\xi^{(f)}_{i}$ are required to be $\pm 1$ with probability $1/2$ each, but they may be arbitrarily correlated; across different~$f$'s, the collections $(\xi^{(f)}_{i})_{i\in[r]}$ must be independent.  When the instance graph~$\instgraph$ is formed as $\atomlist(\congraph)$, and a copy of $A_j$ placed into $\instgraph$ thanks to constraint vertex~$f$, each new edge $\{i,i'\}$ has its weight $A_j[i,i']$ multiplied by $\xi^{(f)}_{ii'}:=\xi^{(f)}_{i}\xi^{(f)}_{i'}$.
\end{definition}

\begin{remark}
    For a given constraint-vertex $f$ of an instance graph $\instgraph$ obtained via a random additive lift with negations, the matrix $\mathrm{Adj}(A_f)$ has the same spectrum as $\mathrm{Adj}(\overline{A_f})$ where $\overline{A_f}$ denotes the subgraph prior to applying random negations, since there is a sign diagonal matrix $D$ such that $\mathrm{Adj}(\overline{A_f})=D\cdot\mathrm{Adj}(A_f)\cdot D^{\dagger}$.
\end{remark}

\subsection{Nomadic walks operators}

\begin{definition}[Nomadic walks]
    Let $\congraph$ be a constraint graph, $\atomlist = (A_1, \dots, A_c)$ a sequence of atoms, and $\instgraph = \atomlist(\congraph)$ the associated instance graph.  For initial simplicity, assume the atoms are unweighted (i.e., all edge weights are~$+1$).  A \emph{nomadic walk} in $\instgraph$ is a walk where consecutive steps are prohibited from ``being in the same atom''.  Note that if $r = 2$ and the atoms are single edges, a nomadic walk in $\instgraph$ is equivalent to a nonbacktracking walk.

    To make the definition completely precise requires ``remembering'' the constraint graph structure~$\congraph$.  Each step along an edge of $\instgraph$ corresponds to taking two consecutive steps in~$\congraph$ (starting and ending at a variable vertex).  The walk in $\instgraph$ is said to be nomadic precisely when the associated walk in~$\congraph$ is nonbacktracking.

    Finally, in the general case when the atoms~$A_j$ have weights, each \emph{walk} in~$\instgraph$ gets a weight equal to the product of the edge-weights used along the walk.
\end{definition}

\begin{figure}[!ht]
  \centering
  \begin{subfigure}[c]{0.45\textwidth}
    \centering
    \includegraphics[width=0.9\textwidth]{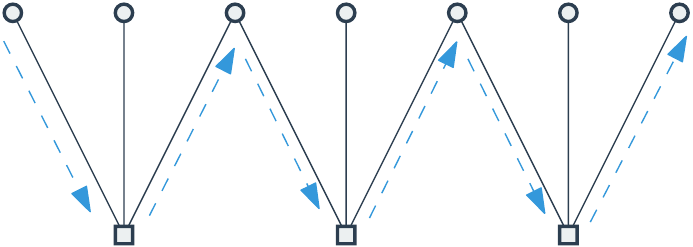}
  \end{subfigure}\hfill
  ~
  \begin{subfigure}{0.45\textwidth}
    \centering
    \includegraphics[width=0.9\textwidth]{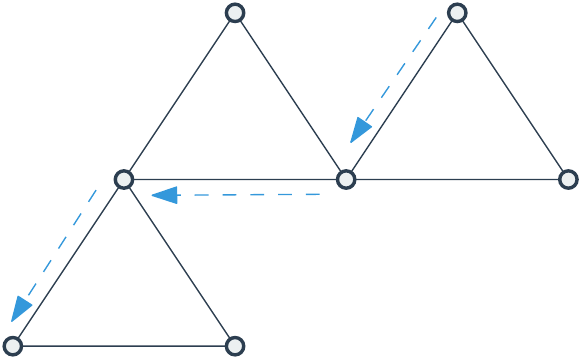}
  \end{subfigure}
  
  \caption{The figure on the left shows a nonbacktracking walk on a subset of a $3$-ary constraint graph and the one on the right the same nomadic walk on the corresponding instance graph.}
  \label{fig:nomadic}
\end{figure}

\begin{definition}[Nomadic walk operator]
    In the setting of the previous definition, the \emph{nomadic walk operator}~$B$ for $\instgraph$ is defined as follows.  Each edge $e = \{u,v\}$ in~$\instgraph$ is regarded as two opposing directed edges $\vec{e} = (u,v)$ and $\vec{e}^{-1} = (v,u)$, each having the same edge-weight as~$e$; i.e., $\weight(\vec{e}) = \weight(\vec{e}^{-1}) = \weight(e)$.  Let $\vec{E}$ denote the collection of all directed edges.  Now $B$ is defined to be the following linear operator on $\ell_2(\vec{E})$:
    \[
        \text{for $F \in \ell_2(\vec{E})$,} \quad B F(\vec{e}) = \sum_{\vec{e}'} \weight(\vec{e'}) F(\vec{e'}),
    \]
    where the sum is over all directed edges $\vec{e}'$ such that the pair $(\vec{e}, \vec{e}')$ forms a nomadic walk of length-$2$. In the finite-graph case we also think of~$B$ as a matrix; the entry $B[\vec{e},\vec{e}'] = \weight(\vec{e}')$ whenever $(\vec{e}, \vec{e}')$ is a length-$2$ nomadic walk.  Again, in the case where $r = 2$ and all atoms are single edges, the nomadic walk operator~$B$ coincides with the nonbacktracking walk operator.  (See, e.g.,~\cite{AFH15} for more on nonbacktracking walks operators.)
\end{definition}

\subsection{Operator Theory}
The results in this section can be found in a standard textbook on functional analysis or operator theory (see, for e.g. \cite{kubrusly2012spectral}).

Let $V$ be an some countable set and let $T:\ell_2(V)\rightarrow\ell_2(V)$ be a bounded, self-adjoint linear operator.
\begin{definition}
    We refer to the \emph{spectrum} of $T$, $\spec(T)$, as the set of all complex $\lambda$ such that $\lambda \Id-T$ is not invertible.  $\spec(T)$ is a nonempty, compact set.
\end{definition}

\begin{definition}
    We call $\lambda$ an \emph{approximate eigenvalue} of $T$ if for every $\eps>0$, there is unit $x$ in $\calX$ such that $\|Tx-\lambda x\|\le\eps$. We call such an $x$ an \emph{$\eps$-approximate eigenvector} or \emph{$\eps$-approximate eigenfunction}.
\end{definition}

\begin{theorem}
    If $T$ is a self-adjoint operator, then every $\lambda\in\spec(T)$ is an approximate eigenvalue.
\end{theorem}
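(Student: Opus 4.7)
The plan is to prove the contrapositive: if $\lambda$ is not an approximate eigenvalue of $T$, then $\lambda\notin\spec(T)$, i.e., $T-\lambda\Id$ has a bounded inverse. I would first reduce to the case of real $\lambda$ by noting the standard fact that a self-adjoint operator has real spectrum: if $\lambda = a+bi$ with $b\neq 0$, then for any unit $x$, expanding $\|(T-\lambda\Id)x\|^2 = \|(T-a\Id)x\|^2 + b^2$ (using self-adjointness of $T-a\Id$) gives $\|(T-\lambda\Id)x\|\ge |b|$, and the same lower bound holds for $(T-\overline{\lambda}\Id)$, so $\lambda$ cannot lie in $\spec(T)$. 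Thus any $\lambda\in\spec(T)$ is real, and $S \coloneqq T-\lambda\Id$ is itself self-adjoint.

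Now suppose $\lambda$ is not an approximate eigenvalue. Then there exists $c>0$ such that $\|Sx\|\ge c\|x\|$ for every $x\in\ell_2(V)$. This bounded-below property immediately yields two things: $S$ is injective, and $\Ran(S)$ is closed (if $Sx_n\to y$ is Cauchy in the range, then $\|S(x_n-x_m)\|\ge c\|x_n-x_m\|$ forces $(x_n)$ to be Cauchy, and by continuity of $S$ the limit $x$ satisfies $Sx=y$).

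The remaining step, which I expect to be the crux, is to upgrade ``closed range'' to ``full range''. Here is where self-adjointness enters decisively: for any bounded operator one has $\overline{\Ran(S)} = \Ker(S^{*})^{\perp}$, and since $S$ is self-adjoint with trivial kernel (by the bounded-below estimate), we conclude $\overline{\Ran(S)} = \Ker(S)^{\perp} = \ell_2(V)$. Combined with closedness, $\Ran(S) = \ell_2(V)$, so $S$ is a bijection. The inverse is bounded (by $1/c$) directly from $\|Sx\|\ge c\|x\|$, so $\lambda\notin\spec(T)$, completing the contrapositive. The only genuinely nontrivial ingredient is the orthogonal-complement identity $\overline{\Ran(S)} = \Ker(S^{*})^{\perp}$, which is standard and follows from the definition of the adjoint together with the fact that $y\perp\Ran(S)$ iff $\langle Sx,y\rangle = \langle x,S^{*}y\rangle = 0$ for all $x$, iff $S^{*}y=0$.
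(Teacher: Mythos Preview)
The paper does not include its own proof of this theorem; it is quoted as a standard result from operator theory with a textbook reference. Your argument is correct and is exactly the standard proof: the bounded-below estimate gives injectivity and closed range, and self-adjointness (via $\overline{\Ran(S)} = \Ker(S^{*})^{\perp} = \Ker(S)^{\perp}$) upgrades this to surjectivity, yielding a bounded inverse.
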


\begin{theorem}\label{thm:isolation}[Consequence of Proposition 4.L of \cite{kubrusly2012spectral}]
    If $\lambda$ is an isolated point in $\spec(T)$, then it is an eigenvalue of $T$,  i.e., it is a $0$-approximate eigenvalue.
\end{theorem}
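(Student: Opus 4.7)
The plan is to prove this via the continuous functional calculus for bounded self-adjoint operators on a Hilbert space, which is a standard ingredient in any treatment of operator theory and is presumably the content of the cited Proposition 4.L. Since $T$ is self-adjoint, $\spec(T)$ is a nonempty compact subset of $\R$, and the hypothesis that $\lambda$ is isolated means there exists $\delta>0$ with $[\lambda-\delta,\lambda+\delta]\cap\spec(T)=\{\lambda\}$. In particular the characteristic function $\chi_{\{\lambda\}}$ is \emph{continuous} on $\spec(T)$ (it is locally constant there), so $P_\lambda := \chi_{\{\lambda\}}(T)$ is a well-defined bounded operator on $\ell_2(V)$ via the functional calculus.

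Next I would check three properties of $P_\lambda$. First, it is a self-adjoint projection, since $\chi_{\{\lambda\}}^2=\chi_{\{\lambda\}}=\overline{\chi_{\{\lambda\}}}$ and the functional calculus is a $*$-homomorphism. Second, $TP_\lambda=\lambda P_\lambda$, because this already holds pointwise: $x\cdot\chi_{\{\lambda\}}(x)=\lambda\chi_{\{\lambda\}}(x)$ for every $x\in\spec(T)$. Third, $P_\lambda\neq 0$, because the continuous functional calculus is an \emph{isometric} $*$-homomorphism from $C(\spec(T))$ to the bounded operators on $\ell_2(V)$, and $\chi_{\{\lambda\}}$ has sup-norm $1$ on $\spec(T)$.

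Given these three facts the conclusion is immediate: pick any unit vector $v$ in the (nonzero) range of $P_\lambda$. Then $Tv = T(P_\lambda v) = \lambda P_\lambda v = \lambda v$, so $v$ is an honest eigenvector of $T$ with eigenvalue $\lambda$; in particular $\lambda$ is a $0$-approximate eigenvalue in the sense of the preceding definition.

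The only real obstacle is establishing the functional calculus itself, which is the role played by the cited proposition of \cite{kubrusly2012spectral}. A more hands-on alternative would start from a sequence of $\eps_n$-approximate eigenvectors $x_n$ (guaranteed by the preceding theorem with $\eps_n\to 0$) and attempt to extract a genuine eigenvector by projecting them into the spectral subspace associated with $[\lambda-\delta,\lambda+\delta]$; the isolation of $\lambda$ is exactly what forces these projections not to vanish in the limit. However, this detour still invokes the spectral theorem under the hood, so citing the functional-calculus-based proposition is the cleanest route and the one I would take.
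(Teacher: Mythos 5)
Your proof is correct and is exactly the standard spectral-projection argument that the cited Proposition 4.L of Kubrusly encapsulates; the paper gives no independent proof, it simply cites the textbook, so your functional-calculus route matches the intended one. The three checks (that $P_\lambda=\chi_{\{\lambda\}}(T)$ is a nonzero self-adjoint projection with $TP_\lambda=\lambda P_\lambda$) are all validly deduced from the $*$-homomorphism and isometry properties of the continuous functional calculus, and isolation of $\lambda$ is used in precisely the right place, namely to make $\chi_{\{\lambda\}}$ continuous on $\spec(T)$.
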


\begin{corollary}\label{cor:extremal_approx_eigens}
    $\lambda_{\min}:=\min\{\spec(T)\}$ and $\lambda_{\max}:=\max\{\spec(T)\}$ are both approximate eigenvalues of~$T$.
\end{corollary}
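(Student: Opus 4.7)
The plan is to derive this corollary directly from the two theorems stated immediately before it, together with the basic fact that $\spec(T)$ is nonempty and compact. First, I would observe that because $\spec(T)$ is a nonempty compact subset of $\mathbb{C}$ (and, since $T$ is self-adjoint, of $\mathbb{R}$), both $\lambda_{\min} = \min\{\spec(T)\}$ and $\lambda_{\max} = \max\{\spec(T)\}$ exist and belong to $\spec(T)$.

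Once membership in the spectrum is secured, the rest is immediate: by the theorem stated just above (every element of the spectrum of a self-adjoint operator is an approximate eigenvalue), both $\lambda_{\min}$ and $\lambda_{\max}$ are approximate eigenvalues of $T$. So there is essentially no work beyond invoking the two cited results; the corollary is really just a named-for-convenience specialization of ``every $\lambda \in \spec(T)$ is an approximate eigenvalue'' to the extreme points of the spectrum.

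The only mildly subtle point — and the place I would be most careful — is justifying that $\lambda_{\min}$ and $\lambda_{\max}$ actually lie in $\spec(T)$ rather than on its closure boundary from the outside. This is handled by the fact that $\spec(T)$ is itself closed (hence compact together with boundedness), so the infimum and supremum are attained inside the set. No appeal to \pref{thm:isolation} is needed, since whether or not $\lambda_{\min}$ and $\lambda_{\max}$ are isolated, they remain approximate eigenvalues by the self-adjointness hypothesis. Thus the proof I would write is essentially a two-line argument: existence of the extrema by compactness, then invocation of the self-adjoint spectrum-is-approximate-eigenvalue theorem.
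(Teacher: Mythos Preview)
Your proposal is correct and matches the paper's approach exactly: the paper states this as a corollary with no explicit proof, so it is intended to follow immediately from the preceding theorem that every point of $\spec(T)$ is an approximate eigenvalue when $T$ is self-adjoint, together with compactness of the spectrum so that the extrema are attained.
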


\begin{fact}    \label{fact:rayleigh}
    Additionally,
    \begin{align*}
        \lambda_{\min}(T) = \inf_{\|x\| = 1} \langle x, Tx\rangle,\\
        \lambda_{\max}(T) = \sup_{\|x\| = 1} \langle x, Tx\rangle.
    \end{align*}
\end{fact}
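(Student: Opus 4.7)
The plan is to establish the two inequalities separately, as is standard for the "numerical range equals convex hull of spectrum" result for bounded self-adjoint operators. I will only write out the case of $\lambda_{\max}$; the $\lambda_{\min}$ case follows by applying the same argument to $-T$ (noting that $\spec(-T) = -\spec(T)$ and $\langle x,(-T)x\rangle = -\langle x, Tx\rangle$).

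For the lower bound $\sup_{\|x\|=1}\langle x, Tx\rangle \geq \lambda_{\max}(T)$, I would invoke \pref{cor:extremal_approx_eigens}: $\lambda_{\max}(T)$ is an approximate eigenvalue, so for each $\eps > 0$ there is a unit vector $x_\eps$ with $\|Tx_\eps - \lambda_{\max}(T) x_\eps\| \leq \eps$. Writing
\[
\langle x_\eps, Tx_\eps\rangle = \lambda_{\max}(T)\|x_\eps\|^2 + \langle x_\eps, Tx_\eps - \lambda_{\max}(T)x_\eps\rangle
\]
and applying Cauchy--Schwarz to the second term yields $\langle x_\eps, Tx_\eps\rangle \geq \lambda_{\max}(T) - \eps$. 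Letting $\eps \to 0$ gives the desired inequality.

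For the upper bound $\sup_{\|x\|=1}\langle x, Tx\rangle \leq \lambda_{\max}(T)$, I would argue by contrapositive: let $M := \sup_{\|x\|=1}\langle x, Tx\rangle$ and show that every real $\lambda > M$ lies outside $\spec(T)$, whence $\lambda_{\max}(T) \leq M$. For any such $\lambda$, the self-adjoint operator $S := \lambda \Id - T$ satisfies
\[
\langle x, Sx\rangle = \lambda\|x\|^2 - \langle x, Tx\rangle \geq (\lambda - M)\|x\|^2
\]
for every $x$. Hence $\|Sx\| \geq (\lambda - M)\|x\|$ by Cauchy--Schwarz, so $S$ is bounded below; together with being self-adjoint (hence having dense range, since $\ker(S) = \ker(S^*) = \mathrm{range}(S)^\perp$ and $\ker(S) = 0$ by the lower bound), this gives that $S$ is invertible. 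Thus $\lambda \notin \spec(T)$.

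The main subtlety is the last step: concluding invertibility of a self-adjoint operator from being bounded below. This is a textbook fact (equivalently, the spectrum of a self-adjoint operator consists entirely of approximate eigenvalues, so if no approximate eigenvalue of $T$ exceeds $M$ then $\spec(T) \subseteq (-\infty, M]$), and can be invoked directly from the same reference \cite{kubrusly2012spectral} that underlies the operator-theoretic prerequisites already listed. Everything else is just two short manipulations.
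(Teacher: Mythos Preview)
Your argument is correct and is the standard textbook proof of this variational characterization. Note, however, that the paper does not actually supply its own proof of \pref{fact:rayleigh}: it is stated as a ``Fact'' in the operator-theory preliminaries and simply referred to the textbook \cite{kubrusly2012spectral}, so there is nothing to compare against beyond observing that your write-up is exactly the kind of argument one finds in such a reference.
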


\begin{definition}
    The \emph{spectral radius} $\specrad(T)$ is defined as $\max_{\sigma\in\spec(T)} |\sigma|$.
\end{definition}

\begin{definition}
    The \emph{operator norm} of $T$, denoted $\|T\|_{\mathrm{op}}$, is defined as
    \[
        \sup_{\|x\| = 1, \|y\| = 1} \langle y, Tx \rangle = \sup_{\|x\| = 1}\|Tx\|.
    \]
\end{definition}

\begin{fact}    \label{fact:specrad-op-norm}
    \(\specrad(T) = \lim\limits_{k\to\infty}\|T^k\|_{\mathrm{op}}^{1/k}\).
\end{fact}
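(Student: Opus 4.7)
The plan is to prove the stronger statement $\|T^k\|_{\mathrm{op}}^{1/k}=\specrad(T)$ for every integer $k\ge 1$, which trivially implies the claimed limit. The strategy is to sandwich $\|T^k\|^{1/k}$ between $\specrad(T)$ from below and $\|T\|_{\mathrm{op}}$ from above via submultiplicativity, and then to use self-adjointness to identify $\|T\|_{\mathrm{op}}$ with $\specrad(T)$.

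For the lower bound $\specrad(T)\le \|T^k\|_{\mathrm{op}}^{1/k}$, I would use the commuting factorization
\[
\lambda^k\Id - T^k = (\lambda\Id - T)\bigl(\lambda^{k-1}\Id + \lambda^{k-2}T + \cdots + T^{k-1}\bigr)
\]
to show that $\lambda\in\spec(T)$ forces $\lambda^k\in\spec(T^k)$: were $\lambda^k\Id-T^k$ invertible with inverse $S$, then $S$ composed with the polynomial factor would provide both a left and a right inverse for $\lambda\Id-T$ (using that the two factors commute), contradicting $\lambda\in\spec(T)$. Hence $|\lambda|^k\le \|T^k\|_{\mathrm{op}}$, and taking the supremum over $\spec(T)$ yields $\specrad(T)\le \|T^k\|_{\mathrm{op}}^{1/k}$.

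For the identity $\|T\|_{\mathrm{op}}=\specrad(T)$, self-adjointness gives, via a standard polarization argument, $\|T\|_{\mathrm{op}}=\sup_{\|x\|=1}|\langle x,Tx\rangle|$. Fact \ref{fact:rayleigh} then identifies this supremum with $\max\bigl(|\lambda_{\min}(T)|,|\lambda_{\max}(T)|\bigr)$, and \pref{cor:extremal_approx_eigens} places both extremal values in $\spec(T)$, so this maximum is at most $\specrad(T)$. The reverse inequality $\specrad(T)\le \|T\|_{\mathrm{op}}$ is the easy Neumann-series observation that $\lambda\Id-T=\lambda(\Id-T/\lambda)$ is invertible whenever $|\lambda|>\|T\|_{\mathrm{op}}$. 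Combining the sandwich with $\|T^k\|_{\mathrm{op}}\le\|T\|_{\mathrm{op}}^k=\specrad(T)^k$ from submultiplicativity closes the argument. The only mildly delicate step in this plan is the polarization identity in the complex-Hilbert setting, which is routine but requires an appropriate phase rotation beyond the real-Hilbert identity $4\langle Tx,y\rangle=\langle T(x+y),x+y\rangle-\langle T(x-y),x-y\rangle$; since every $T$ considered in this paper comes from a real $\pm 1$-weighted adjacency or nomadic-walk matrix, this subtlety can essentially be sidestepped.
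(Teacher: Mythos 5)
The paper gives no proof of this fact; it is listed among standard results with a pointer to a functional-analysis textbook, so there is no internal argument to compare yours against. That said, your proposed proof is correct as written under the section's standing hypothesis that $T$ is self-adjoint. The spectral-mapping step (commuting factorization of $\lambda^k\Id - T^k$, hence $\lambda\in\spec(T)\Rightarrow\lambda^k\in\spec(T^k)$), the Neumann-series bound $\specrad(T^k)\le\|T^k\|_{\mathrm{op}}$, submultiplicativity, and the identification $\|T\|_{\mathrm{op}}=\specrad(T)$ for self-adjoint $T$ (via polarization together with \pref{fact:rayleigh} and \pref{cor:extremal_approx_eigens}) all go through, and the sandwich does force $\|T^k\|_{\mathrm{op}}^{1/k}=\specrad(T)$ for every $k\ge 1$.

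One caveat worth stating explicitly, since you lean on it: the stronger ``constant sequence'' conclusion you derive is genuinely special to self-adjoint (more generally, normal) operators, because it hinges on $\|T\|_{\mathrm{op}}=\specrad(T)$; a nilpotent operator already falsifies it. The textbook statement of Gelfand's formula, $\specrad(T)=\lim_k\|T^k\|_{\mathrm{op}}^{1/k}$, holds for \emph{every} bounded operator via a different route (existence of the limit by subadditivity, identification via the Laurent expansion of the resolvent). This distinction matters downstream in the paper: the fact is invoked in the proof of \pref{lem:infinite-nomadic-bound} for the nomadic walk operator $B_X$, which is not self-adjoint, so there only the general Gelfand formula is available and your argument would not apply verbatim. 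Within the literal scope of the statement (self-adjoint $T$), your proof is a clean, self-contained shortcut that a textbook proof would not bother to give.
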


\section{An Ihara--Bass formula for additive lifts of 2-eigenvalue atoms}  \label{sec:ihara}

Let $\calA$ be a sequence of atoms such that every atom has the same pair of exactly two distinct eigenvalues, $\lambda_1$ and $\lambda_2$, and let $\calH$ be a constraint graph on variable set $V$.  Let $\calI = \calA(\calH)$ be the corresponding instance graph.  In this section, we use $A$ and $B$ to refer to the adjacency matrix and nomadic walk matrix respectively of $\calI$.  The vertex set of $\instgraph$ is~$V$.  This section is devoted to proving our generalization of the Ihara--Bass formula, stated below.
\begin{theorem}                                     \label{thm:ihara-bass-restatement}
Let $L(t) := \Id - At + (\lambda_1+\lambda_2)t\Id + (c-1)(-\lambda_1\lambda_2) t^2$. Then we have
\[\displaystyle (1+\lambda_1t)^{|V|\frac{ c\lambda_2}{\lambda_2-\lambda_1}-1}(1+\lambda_2t)^{|V|\frac{c\lambda_1}{\lambda_1-\lambda_2}-1} \det L(t) = \det(\Id - Bt).\]
\end{theorem}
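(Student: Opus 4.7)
The plan is to derive the identity from a Schur-complement computation on a carefully chosen $(|V| + |\vec E|) \times (|V| + |\vec E|)$ block matrix, using the $2$-eigenvalue property of each atom as the essential algebraic input.

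Introduce the tail-incidence matrix $S \in \R^{\vec E \times V}$ defined by $S[\vec e,w] = \mathbf{1}[t(\vec e) = w]$, the head-incidence matrix $T$ defined analogously using $h(\vec e)$, the edge-weight diagonal $W \in \R^{\vec E\times\vec E}$ with $W[\vec e,\vec e] = \weight(\vec e)$, and the ``in-atom continuation'' operator $C \in \R^{\vec E\times\vec E}$ defined by $C[\vec e,\vec e'] = \mathbf{1}[h(\vec e) = t(\vec e')] \cdot \mathbf{1}[\Atom(\vec e) = \Atom(\vec e')]$. A direct bookkeeping check yields the factorizations $A = S^\top W T$ and $B = (TS^\top - C)W$. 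Consider the block matrix
\[
\mathcal M(t) \;=\; \begin{pmatrix} \Id_V & t\,S^\top W \\[2pt] T & \Id_{\vec E} + tCW \end{pmatrix}.
\]
Evaluating $\det \mathcal M(t)$ via the upper-left Schur complement gives $\det(\Id + tCW - tTS^\top W) = \det(\Id - tB)$, while evaluating via the lower-right Schur complement gives $\det(\Id + tCW)\cdot\det\!\bigl(\Id_V - tS^\top W(\Id + tCW)^{-1}T\bigr)$. Equating the two expressions reduces the theorem to computing the two factors in the second expansion.

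Both factors collapse cleanly under the $2$-eigenvalue hypothesis. Because $C$ is block-diagonal across the atoms, each atomic block $C_fW_f$ sends $\ell_2(\vec E_f)$ onto the $r$-dimensional subspace of edge-functions depending only on the head vertex, and on this image acts as~$A_f$. Consequently $\det(\Id + tC_fW_f) = (1+\lambda_1 t)^{\mu_1}(1+\lambda_2 t)^{\mu_2}$, where $\mu_i$ is the multiplicity of $\lambda_i$ in $A_f$; combined with $\mu_1+\mu_2=r$ and $\mu_1\lambda_1+\mu_2\lambda_2 = 0$ (zero trace of a simple graph), this gives $\mu_1 = r\lambda_2/(\lambda_2-\lambda_1)$, and multiplying the per-atom contributions over the $c|V|/r$ constraints produces the $(1+\lambda_1 t)$- and $(1+\lambda_2 t)$-powers in the theorem. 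For the second factor, since $T$ maps into the image of $CW$, applying $(\Id + tCW)^{-1}$ reduces atom-by-atom to applying $(\Id + tA_f)^{-1}$; the resolvent identity
\[
A_f(\Id + tA_f)^{-1} \;=\; \frac{A_f + t\lambda_1\lambda_2\,\Id}{(1+\lambda_1 t)(1+\lambda_2 t)}
\]
(immediate from $A_f^2 = (\lambda_1+\lambda_2)A_f - \lambda_1\lambda_2\,\Id$) combined with summation over the $c$ atoms through each vertex then yields $S^\top W(\Id + tCW)^{-1}T = (A + ct\lambda_1\lambda_2\,\Id_V)/((1+\lambda_1 t)(1+\lambda_2 t))$, so that $\Id_V - tS^\top W(\Id + tCW)^{-1}T = L(t)/((1+\lambda_1 t)(1+\lambda_2 t))$. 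Taking its $|V|$-dimensional determinant introduces the negative powers of $(1+\lambda_1 t)(1+\lambda_2 t)$ that combine with the first factor to give the stated formula.

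The main technical obstacle is justifying the inverses $(\Id + tCW)^{-1}$ and $(\Id + tA_f)^{-1}$ at the values of $t$ of interest; this is handled by viewing the entire computation as an identity of polynomials in $t$, which holds on the dense open subset where the inverses exist and therefore everywhere. A mild additional subtlety occurs when $\lambda_1$ or $\lambda_2$ is zero (so that $A_f$ is singular and the image of $C_fW_f$ can meet its kernel), which is resolved either by tracking generalized eigenspaces directly or by a perturbative density argument.
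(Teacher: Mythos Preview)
Your proposal is correct and takes a genuinely different route from the paper's proof.

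The paper follows a combinatorial/generating-function argument modeled on Northshield's proof of the classical Ihara--Bass formula: it introduces ``nomadic polynomials'' $p_k(x)$ satisfying a three-term recurrence, proves that $p_k(A)_{uv}$ is the total weight of length-$k$ nomadic walks from $u$ to~$v$, packages these into $F(t)=\sum_k p_k(A)t^k$, applies Jacobi's formula to $\log\det L(t)$, and then relates $\tr F(t)$ to $\sum_k \tr(B^k)t^k$ through a case analysis of ``tailed'' versus ``tail-less'' versus ``stretched'' nomadic circuits. The identity emerges only after integrating and exponentiating a logarithmic-derivative equation.

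Your approach is the Bass-style linear-algebraic proof: a single block matrix whose determinant, computed two ways by Schur complements, yields the identity directly. The key steps---the push-through identity $(\Id+tT_fS_f^\top W_f)^{-1}T_f = T_f(\Id+tA_f)^{-1}$, the Sylvester identity $\det(\Id+tC_fW_f)=\det(\Id+tA_f)$, and the resolvent reduction $A_f(\Id+tA_f)^{-1}=(A_f+t\lambda_1\lambda_2\Id)/((1+\lambda_1 t)(1+\lambda_2 t))$---are all clean consequences of the two-eigenvalue hypothesis together with $\tr A_f=0$. Your argument is shorter and avoids all walk-counting; the paper's route, in exchange, produces the combinatorial meaning of the $p_k(A)$ along the way, though this is not used elsewhere.

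One small observation: carrying your computation through, the exponent on $(1+\lambda_1 t)$ is $|V|\bigl(\tfrac{c\lambda_2}{\lambda_2-\lambda_1}-1\bigr)$, i.e.\ the $-1$ sits \emph{inside} the factor of $|V|$, not outside as literally typeset in the theorem. This is also what the paper's own derivation yields and what is used in Section~4, so the discrepancy is a typo in the statement rather than an error in your argument. Your worry about $\lambda_1$ or $\lambda_2$ vanishing is unnecessary: the Sylvester identity holds without any rank hypothesis on $A_f$, so no separate treatment is needed.
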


Our proof is a modification of one of the proofs of the Ihara--Bass formula from \cite{North97}.

\paragraph{Nomadic Polynomials.}  Our first step is to define the following sequence of polynomials.
\begin{align*}
    p_0(x) &= 1\\
    p_1(x) &= x\\
    p_2(x) &= x^2 - (\lambda_1+\lambda_2)x - c(-\lambda_1\lambda_2)\\
    p_k(x) &= xp_{k-1}(x) - (\lambda_1+\lambda_2)p_{k-1}(x) - (c-1)(-\lambda_1\lambda_2)p_{k-2}(x) &\text{for $k\ge 3$}
\end{align*}
and introduce the key player in the proof: the matrix of generating functions $F(t)$ defined by
\[
    F(t)_{u,v}   = \sum_{k \geq 0} p_k(A) t^k.
\]

\noindent We use $\Wt(e)$ to denote the weight on edge $e$, and define the weight of a walk $W = e_1e_2\dots e_\ell$ as
\[
    \Wt(W) := \prod_{i=1}^{\ell} \Wt(e_i).
\]

We first establish combinatorial meaning for the polynomials $p_k(A)$.
\begin{claim}
$p_k(A)_{uv}$ is equal to the total weight of nomadic walks of length $k$ from $u$ to $v$. 
\end{claim}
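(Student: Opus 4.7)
The plan is to prove the claim by induction on $k$, verifying that the matrices $N_k[u,v]$ of total length-$k$ nomadic walk weights satisfy the same defining recurrence as $p_k(A)$. The base cases are immediate: for $k = 0$, both sides equal $\Id$, and for $k = 1$, both sides equal $A$, since length-one walks carry no nomadic constraint.

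For the inductive step, I will derive the recurrence by decomposing $A \cdot N_{k-1}$. The entry $(A N_{k-1})[u,v]$ equals the total weight of pairs consisting of a length-$(k-1)$ nomadic walk from $u$ to some intermediate vertex $w$, followed by a single edge $(w,v)$. This overcounts $N_k[u,v]$ by the ``backtracking'' contributions $\mathrm{Bad}_k[u,v]$, in which the appended edge $(w,v)$ lies in the same atom $A_f$ as the walk's last edge (for $k = 2$, the condition is simply that the two edges share an atom).

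To evaluate $\mathrm{Bad}_k$ for $k \geq 3$, I will condition on the offending atom $f$ and the penultimate vertex $x$ of the length-$(k-1)$ walk, then sum over the middle vertex $w$ to obtain $(A_f^2)[x,v]$. Applying \pref{prop:minimal-poly} to the atom $A_f$ gives $A_f^2 = (\lambda_1+\lambda_2) A_f - \lambda_1\lambda_2 \Id_f$ on the vertex set of $f$ (and this identity is preserved by the random negations, since conjugation by a sign-diagonal commutes with squaring). The $(\lambda_1+\lambda_2)$-weighted piece telescopes: the factor $A_f[x,v]$ is exactly a valid nomadic extension beyond $x$ (precisely because the walk reaching $x$ has its last edge outside $A_f$), so summing over all $(f,x)$ reconstructs $(\lambda_1+\lambda_2) N_{k-1}[u,v]$. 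The $(-\lambda_1\lambda_2)$-weighted piece leaves $\sum_{f\,:\,v\sim f}(\text{length-}(k-2)\text{ nomadic walks from $u$ to $v$ whose last edge is not in }A_f)$. Since any length-$(k-2)$ nomadic walk ending at $v$ has its final edge in exactly one of the $c$ atoms incident to $v$, this sum collapses to $(c-1) N_{k-2}[u,v]$. Combining yields $\mathrm{Bad}_k = (\lambda_1+\lambda_2) N_{k-1} - (c-1)\lambda_1\lambda_2 N_{k-2}$, and therefore
\[
    N_k = A N_{k-1} - (\lambda_1+\lambda_2) N_{k-1} - (c-1)(-\lambda_1\lambda_2) N_{k-2} = p_k(A)
\]
by induction. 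The $k = 2$ case is handled identically except that, because there is no prior walk whose last atom must be excluded, the counting factor becomes $c$ instead of $c-1$; this matches the separate definition $p_2(x) = x^2 - (\lambda_1+\lambda_2)x - c(-\lambda_1\lambda_2)$.

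The main subtlety I anticipate is the bookkeeping around the ``atom of the last edge'' of a nomadic walk, and in particular noticing that the gap between $k = 2$ and $k \geq 3$ is precisely whether a prior atom slot exists to exclude. This produces the $c$ versus $c-1$ discrepancy that explains why $p_2$ is defined by hand rather than by the three-term recurrence. Once this bookkeeping is in place, the two-eigenvalue identity $A_f^2 = (\lambda_1+\lambda_2)A_f - \lambda_1\lambda_2\Id_f$ does all of the algebraic work.
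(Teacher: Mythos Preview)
Your proposal is correct and follows essentially the same approach as the paper: induct on $k$, interpret $A\,N_{k-1}$ as length-$(k-1)$ nomadic walks with one extra unconstrained step appended, and subtract off the ``bad'' contributions where the appended step stays in the same atom. The paper carries this out by splitting the bad walks into two sub-cases (exact backtrack versus same-atom-but-different-edge) and invoking \pref{fact:diag-square} and \pref{fact:off-diag-square} separately, whereas you apply the single matrix identity $A_f^2 = (\lambda_1+\lambda_2)A_f - \lambda_1\lambda_2\Id_f$ from \pref{prop:minimal-poly} and read off both pieces at once; this is a slightly cleaner packaging of the same computation, and your identification of the $c$ versus $c-1$ discrepancy at $k=2$ matches the paper's treatment exactly.
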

\begin{proof}
When $k = 0$ and $1$, the claim is clear.  We proceed by induction.

Supposing the claim is indeed true for $p_{s}(A)$ when $s\le k-1$, then $Ap_{k-1}(A)_{uv}$ is the total weight of length-$k$ walks from $u$ to $v$ whose first $k-1$ steps are nomadic and whose last step is arbitrary.  Call the collection of these walks $\calW_{uv}$.  For $W\in\calW_{uv}$, let $W_i$ denote the edge walked on by the $i$-th step of $W$ and let $W_{(i)}$ denote the length-$i$ walk obtained by taking the length-$i$ prefix of $W$.  We use lowercase $w_i$ to denote the vertex visited by the $i$th step of the walk.  Each $W\in\calW_{uv}$ falls into one of the following three categories.
\begin{enumerate}
    \item \label{type:nomadic} $W$ is a nomadic walk.  Call the collection of these walks $\calW_{uv}^{(1)}$.
    \item \label{type:backtrack} $W_{k} = W_{k-1}^{-1}$.  Call the collection of these walks $\calW_{uv}^{(2)}$.
    \item \label{type:hangout} $W_{k-1}$ and $W_k$ are in the same atom but $W_{k} \ne W_{k-1}^{-1}$.  Call the collection of these walks $\calW_{uv}^{(3)}$.
\end{enumerate}

\noindent Suppose $k\ge 3$.
\begin{align*}
    \sum_{W\in\calW_{uv}^{(2)}} \Wt(W) &= \sum_{W\in\calW_{uv}^{(2)}} \Wt(W_{k-1})\Wt(W_{k-1}^{-1}) \Wt(W_{(k-2)})\\
    &= \sum_{W\in\calW_{uv}^{(2)}} \Wt(W_{k-1})^2 \Wt(W_{(k-2)})\\
    &= \sum_{\substack{W'~\text{$(k-2)$-length nomadic walk}\\ \text{from $u$ to $v$}}} \Wt(W')\sum_{e\notin\Atom(W'_{k-2})} \Wt(e)^2
    \intertext{We apply \pref{fact:diag-square} and get}
    &= \sum_{\substack{W'\text{ $(k-2)$-length nomadic walk}\\ \text{from $u$ to $v$}}}\Wt(W')(c-1)(-\lambda_1\lambda_2)\\
    &= (c-1)(-\lambda_1\lambda_2)p_{k-2}(A)_{uv}.
\end{align*}
An identical argument shows that when $k = 2$,
\[
    \sum_{W\in\calW_{uv}^{(2)}} \Wt(W) = c(-\lambda_1\lambda_2)
\]

We do a similar calculation for $\calW_{uv}^{(3)}$ for $k\ge 2$.  Observe that $W_{k-1}$ and $W_k$ have to be in the same atom, which we denote $\Atom(W_{k-1})$.  Thus, there is an edge $e^*$ between $w_{k-2}$ and $v$ in $\Atom(W_{k-1})$ too (see \pref{rem:atom-transitive}).
\begin{align*}
    \sum_{W\in\calW_{uv}^{(3)}} \Wt(W) &= \sum_{W\in\calW_{uv}^{(3)}} \Wt(W_{k-1})\Wt(W_k)\Wt(W_{(k-2)})\\
    &=\sum_{\substack{W'~\text{length-$(k-2)$ nomadic walk}\\ W'_0 = u,\\\text{$e^*$ s.t. $(e^*)_1=w_{k-2},(e^*)_2=v$}\\ \Atom(W'_{k-2})\ne\Atom(e^*)}}~\sum_{\substack{e^{(1)},e^{(2)}:\\ \Atom(e^{(1)})=\Atom(e^{(2)}) = \Atom(e^*)\\ (e^{(1)})_1=w_{k-2}, (e^{(1)})_2=(e^{(2)})_1, (e^{(2)})_2=v }} \Wt(e^{(1)})\Wt(e^{(2)})\Wt(W')
    \intertext{By applying \pref{fact:off-diag-square}, we get}
    &= \sum_{\substack{W'~\text{length-$(k-2)$ nomadic walk}\\ W'_0 = u,\\\text{$e^*$ s.t. $(e^*)_1=w_{k-2},(e^*)_2=v$}\\ \Atom(W'_{k-2})\ne\Atom(e^*)}} (\lambda_1+\lambda_2)\Wt(e^*)\Wt(W')\\
    &= (\lambda_1+\lambda_2)\sum_{W'~\text{length-$(k-1)$ nomadic walk from $u$ to $v$}} \Wt(W')\\
    &= (\lambda_1+\lambda_2)p_{k-1}(A)_{uv}.
\end{align*}

\noindent Now, we have for $k \ge 3$,
\begin{align*}
    \sum_{W\in\calW_{uv}} \Wt(W) &= \sum_{W\in\calW_{uv}^{(1)}} \Wt(W) + \sum_{W\in\calW_{uv}^{(2)}} \Wt(W) + \sum_{W\in\calW_{uv}^{(3)}} \Wt(W)\\
    Ap_{k-1}(A)_{uv} &= \sum_{W\in\calW_{uv}^{(1)}} \Wt(W) + (c-1)(-\lambda_1\lambda_2)p_{k-2}(A)_{uv} + (\lambda_1+\lambda_2)p_{k-1}(A)_{uv}\\
    \sum_{W\in\calW_{uv}^{(1)}} \Wt(W) &= Ap_{k-1}(A)_{uv} - ((c-1)(-\lambda_1\lambda_2)p_{k-2}(A)_{uv} + (\lambda_1+\lambda_2)p_{k-1}(A)_{uv})\\
    \sum_{W\in\calW_{uv}^{(1)}} \Wt(W) &= p_k(A)_{uv}.
\end{align*}
For the case of $k = 2$, we carry out the above calculation by replacing $(c-1)(-\lambda_1\lambda_2)$ with $c(-\lambda_1\lambda_2)$, thus completing the inductive step.
\end{proof}

\paragraph{Generic generating functions facts.}  Before returning to the specifics of our problem, we give some ``standard'' generating function facts.  These are extensions of the following simple idea: if $f(t)$ is a polynomial, then $\frac{d}{dt} \log f(t) = f'(t) \cdot f(t)^{-1}$ is (up to minor manipulations) the generating function for the power sum polynomials of its roots.  We start with a general matrix version of this, which is sometimes called \emph{Jacobi's formula} (after minor manipulations): 
\begin{proposition}                                     \label{prop:jacobi}
    Let $M(t)$ be a square matrix polynomial of~$t$.  Then
    \[
        \frac{d}{dt} \log \det M(t) = \tr\parens*{M'(t) M(t)^{-1}}
    \]
    for all $t\in\R$ such that $M(t)$ is invertible.
\end{proposition}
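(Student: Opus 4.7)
The plan is to derive this from the adjugate identity together with the classical Leibniz formula, i.e., to prove Jacobi's formula. Recall that for any $n \times n$ matrix $N$, the adjugate $\mathrm{adj}(N)$ satisfies $N \cdot \mathrm{adj}(N) = \det(N) \cdot \Id$; in particular, when $N$ is invertible, $N^{-1} = \det(N)^{-1} \mathrm{adj}(N)$.

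First I would establish the intermediate identity
\[
    \frac{d}{dt} \det M(t) = \tr\parens*{\mathrm{adj}(M(t)) \cdot M'(t)}.
\]
The cleanest route is to apply the product rule to the Leibniz expansion $\det M(t) = \sum_{\sigma \in S_n} \mathrm{sgn}(\sigma) \prod_i M(t)_{i, \sigma(i)}$, and then group the resulting terms by which factor was differentiated; since the $(i,j)$-entry of the adjugate is precisely the signed $(j,i)$-cofactor of $M(t)$, the sum collapses to $\sum_{i,j} \mathrm{adj}(M(t))_{j,i}\, M'(t)_{i,j} = \tr(\mathrm{adj}(M(t)) M'(t))$. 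Alternatively, and maybe more transparently at an invertible point, one can write $M(t+h) = M(t)\bigl(\Id + h M(t)^{-1} M'(t) + O(h^2)\bigr)$, so that $\det M(t+h) = \det M(t) \cdot \det\bigl(\Id + h M(t)^{-1} M'(t) + O(h^2)\bigr)$, and then invoke the elementary expansion $\det(\Id + hN) = 1 + h \tr(N) + O(h^2)$ (itself immediate from Leibniz).

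Either derivation yields
\[
    \frac{d}{dt} \det M(t) = \det M(t) \cdot \tr\parens*{M(t)^{-1} M'(t)}
\]
at any $t$ where $M(t)$ is invertible, using $\mathrm{adj}(M(t)) = \det M(t) \cdot M(t)^{-1}$ together with the cyclicity of trace. Dividing both sides by the nonzero scalar $\det M(t)$ and recognizing the left-hand side as the logarithmic derivative $\frac{d}{dt} \log \det M(t)$ gives the stated formula. The invertibility hypothesis enters in exactly two places: it lets us form $M(t)^{-1}$ on the right, and it makes $\log \det M(t)$ locally well-defined on the left. There is no real obstacle here; the whole argument is standard matrix calculus, and the only care needed is to keep track of the minor-sign bookkeeping in the adjugate step (which is what turns a raw sum of partial derivatives into a clean trace).
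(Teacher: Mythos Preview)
Your proof is correct and entirely standard; it is exactly the classical derivation of Jacobi's formula. Note, however, that the paper does not actually supply a proof of this proposition: it is simply stated as a known identity (``sometimes called \emph{Jacobi's formula}'') and used without further justification, so there is no paper-proof to compare against.
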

\begin{corollary}                                       \label{cor:gen-eigs}
    Taking $M(t) = \Id - H t$ for a fixed square matrix~$H$ yields
    \[
        \frac{d}{dt} \log \det (\Id - H t) = \tr\parens*{-H (\Id - Ht)^{-1}} \quad\implies\quad -t \frac{d}{dt} \log \det (\Id - H t) = \sum_{k \geq 1} \tr(H^k) t^k.
    \]
\end{corollary}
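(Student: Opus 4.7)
\textbf{Proof plan for Corollary \ref{cor:gen-eigs}.} The statement has two parts: the first equality is a direct specialization of Jacobi's formula (Proposition \ref{prop:jacobi}), and the second is obtained by expanding the resolvent as a geometric (Neumann) series. I will carry them out in that order.

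First I will instantiate Proposition \ref{prop:jacobi} with $M(t) = \Id - Ht$. Then $M'(t) = -H$ and, wherever $M(t)$ is invertible, $M(t)^{-1} = (\Id - Ht)^{-1}$, so Jacobi's formula gives
\[
\frac{d}{dt} \log \det(\Id - Ht) \;=\; \tr\!\bigl(M'(t)\,M(t)^{-1}\bigr) \;=\; \tr\!\bigl(-H(\Id - Ht)^{-1}\bigr),
\]
which is exactly the first claimed equality. This is the easy half; no additional machinery is needed beyond the hypothesis that $M(t)$ is invertible at the point $t$ in question.

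For the second equality I will work either as an identity of formal power series in $t$ (where $(\Id - Ht)^{-1}$ is well-defined because its constant term is $\Id$) or equivalently as an analytic identity on the disk $|t| < 1/\specrad(H)$. In either setting the Neumann expansion
\[
(\Id - Ht)^{-1} \;=\; \sum_{k \geq 0} H^k t^k
\]
holds. Multiplying through by $-H$ yields $-H(\Id - Ht)^{-1} = -\sum_{k \geq 0} H^{k+1} t^k$. Taking traces term by term (using linearity of $\tr$) and then multiplying by $-t$ turns the first identity into
\[
-t\,\frac{d}{dt}\log\det(\Id - Ht) \;=\; t\sum_{k \geq 0} \tr(H^{k+1})\,t^k \;=\; \sum_{k \geq 1} \tr(H^k)\,t^k,
\]
which is the second claimed equality after reindexing $k \mapsto k+1 \to k$.

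The only conceivable obstacle is justifying the termwise operations, and this is mild: as a formal power series identity nothing needs to be justified beyond the invertibility of the constant term $\Id$; as an analytic identity the Neumann series and its termwise differentiation/trace converge absolutely on $\{|t| < 1/\specrad(H)\}$, and two analytic functions agreeing on a neighborhood of $0$ agree as power series, so both viewpoints give the same conclusion. I expect the write-up to be a few lines, with no real technical difficulty.
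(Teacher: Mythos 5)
Your proposal is correct and follows exactly the route the paper intends: specialize Jacobi's formula (\pref{prop:jacobi}) with $M(t) = \Id - Ht$, so $M'(t) = -H$, and then expand $(\Id - Ht)^{-1}$ as a Neumann/geometric series and take traces termwise (either formally or for $|t|$ small), which is precisely how the displayed chain in \pref{cor:gen-eigs} is meant to be read. The paper's side remark about reducing to diagonal $H$ is only an alternative observation, not a different proof, so there is nothing to add.
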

Regarding this corollary, we can derive the statement about the power sums of the roots of a polynomial~$f(t)$ by taking $H = \diag(\lambda_1, \dots, \lambda_n)$ where the $\lambda_i$'s are the roots of~$f$.  On the other hand, it actually suffices to prove \pref{cor:gen-eigs} in the case of diagonal~$H$, since $\det(\Id - Ht)$ is invariant to unitary conjugation.

\paragraph{Growth Rate.}  A key term that shows up in our Ihara--Bass formula is the ``growth rate'' of the additive product of $\calA$.  Suppose we take $t$-step nomadic walk starting at a vertex $v$ in the additive product graph, take a $t$-step nomadic walk back to $v$, and then sum over the total weight of such walks.  What we get is $\left((c-1)(-\lambda_1\lambda_2)\right)^t$ (see \pref{lem:growth-rate} for a proof).  Thus, the total weight of aforementioned walks grows exponentially in $t$ at a rate of $(c-1)(-\lambda_1\lambda_2)$, which in this section we will refer to as $\gr$.

\paragraph{The fundamental recurrence.}  We now relate the generating function matrix $F(t)$ to~$A$.  Using the recurrence used to generated the polynomials $p_k(x)$, one can conclude
\begin{lemma}                                       \label{lem:recur}
    $\displaystyle F(t) = A F(t)t - (\lambda_1+\lambda_2)F(t)t - \gr F(t) t^2 + (1+t\lambda_1)(1+t\lambda_2) \Id$.
\end{lemma}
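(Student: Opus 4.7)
The plan is to prove this identity by a direct formal power series manipulation: substitute the three-term recurrence defining the $p_k$ into $F(t) = \sum_{k\ge 0} p_k(A)t^k$, split off the base cases, and reindex. There is no real obstacle; the only subtlety is that the recurrence for $p_k$ holds only for $k \ge 3$, so the base terms $p_0, p_1, p_2$ must be handled separately, and an ``off-by-one'' between the $c$ appearing in $p_2$ and the $c-1$ appearing in $\gr = (c-1)(-\lambda_1\lambda_2)$ must be tracked carefully.

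Concretely, I would write
\[
F(t) \;=\; \Id + At + p_2(A)t^2 + \sum_{k\ge 3} p_k(A)t^k,
\]
and apply the recurrence $p_k(A) = (A - (\lambda_1+\lambda_2)\Id)\,p_{k-1}(A) - \gr\, p_{k-2}(A)$ inside the tail sum. After reindexing, the tail becomes
\[
(A - (\lambda_1+\lambda_2)\Id)\,t\,\bigl(F(t) - \Id - At\bigr) \;-\; \gr\, t^2\bigl(F(t) - \Id\bigr).
\]
Substituting $p_2(A) = A^2 - (\lambda_1+\lambda_2)A - c(-\lambda_1\lambda_2)\Id$ and expanding, the $At$, $A^2 t^2$, and $(\lambda_1+\lambda_2)A t^2$ terms cancel pairwise against contributions from $p_2(A)t^2$ and the reindexing corrections, leaving exactly
\[
F(t) \;=\; \bigl(A - (\lambda_1+\lambda_2)\Id\bigr)\,t\,F(t) \;-\; \gr\, t^2 F(t) \;+\; C(t)\,\Id,
\]
which is the claimed identity once $C(t)$ is simplified.

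The one calculation worth highlighting is that the scalar coefficient $C(t)$ must factor as $(1+\lambda_1 t)(1+\lambda_2 t)$. The constant term comes from $p_0(A) = \Id$; the $t$ term comes from the $+(\lambda_1+\lambda_2)t\Id$ correction produced by the reindexing (since $(A-(\lambda_1+\lambda_2)\Id)t \cdot (-\Id)$ contributes $-At + (\lambda_1+\lambda_2)t\Id$, and the $-At$ cancels $p_1(A)t = At$); and the $t^2$ coefficient assembles from $-c(-\lambda_1\lambda_2)$ (from $p_2$) and $+\gr = (c-1)(-\lambda_1\lambda_2)$ (from the $-\gr t^2(-\Id)$ correction), whose sum is precisely $\lambda_1\lambda_2$. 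Thus $C(t) = 1 + (\lambda_1+\lambda_2)t + \lambda_1\lambda_2 t^2 = (1+\lambda_1 t)(1+\lambda_2 t)$, and the recurrence is established. This also explains \emph{why} the definition of $p_2$ uses $c$ rather than $c-1$: it is tuned precisely so that $C(t)$ factors cleanly, which will in turn be what makes the Ihara--Bass determinantal identity come out to the stated form.
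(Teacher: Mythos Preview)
Your proof is correct and is precisely the intended argument: the paper itself only states that the identity follows from the recurrence defining the $p_k$, and your explicit reindexing-and-cancellation is the natural way to carry this out. In particular, your observation that the $c$ in $p_2$ versus the $c-1$ in $\gr$ is exactly what makes the scalar remainder factor as $(1+\lambda_1 t)(1+\lambda_2 t)$ is the one point worth noting, and you have it right.
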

From this recurrence one may express the inverse of $F(t)$ in terms of~$A$ and~$c$:
\begin{corollary}                                       \label{cor:recur}
    $\displaystyle (1+\lambda_1t)^{-1}(1+\lambda_2t)^{-1} \cdot (\Id - At + (\lambda_1+\lambda_2)t\Id + \gr t^2\Id)F(t) = \Id$.  In other words, $\displaystyle F(t) = (1+\lambda_1t)(1+\lambda_2t)\Id \cdot L(t)^{-1}$, where $L(t) \coloneqq \Id - At + (\lambda_1+\lambda_2)t\Id + \gr t^2\Id$ is the ``deformed Laplacian'' appearing in the statement of our Ihara--Bass theorem.
\end{corollary}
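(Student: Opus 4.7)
The plan is to obtain this corollary as a direct algebraic consequence of \pref{lem:recur}, with the only care being to justify the two inversions that appear (the scalar $(1+\lambda_1 t)(1+\lambda_2 t)$ and the matrix $L(t)$). Throughout, everything in sight --- $A$, $\Id$, $F(t)$, and their polynomial combinations in $t$ --- commutes, since $F(t) = \sum_{k \geq 0} p_k(A) t^k$ is itself a formal power series in $A$. So I can manipulate these expressions as if they were scalars; there is no left/right ordering ambiguity.

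First I would take the recurrence
\[
F(t) \;=\; A F(t) t \;-\; (\lambda_1 + \lambda_2) F(t) t \;-\; \gr F(t) t^2 \;+\; (1 + \lambda_1 t)(1 + \lambda_2 t) \Id
\]
and move all terms involving $F(t)$ to the left-hand side. Factoring $F(t)$ out (using commutativity) gives
\[
\bigl( \Id - A t + (\lambda_1 + \lambda_2) t \Id + \gr t^2 \Id \bigr) F(t) \;=\; (1 + \lambda_1 t)(1 + \lambda_2 t) \Id,
\]
which is precisely $L(t) F(t) = (1+\lambda_1 t)(1+\lambda_2 t) \Id$.

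To get the first displayed equation in the corollary, I would divide both sides by the scalar $(1 + \lambda_1 t)(1 + \lambda_2 t)$. This scalar has constant term equal to~$1$, so its reciprocal is a well-defined formal power series in $t$; equivalently, as a rational function of~$t$ the division is valid whenever $(1+\lambda_1 t)(1+\lambda_2 t) \neq 0$. For the second form, I note that $L(0) = \Id$, so $\det L(t)$ is a polynomial with nonzero constant term, and hence $L(t)^{-1}$ exists as a matrix of formal power series (or equivalently as a rational matrix in $t$). Multiplying $L(t) F(t) = (1+\lambda_1 t)(1+\lambda_2 t) \Id$ by $L(t)^{-1}$ on the right yields
\[
F(t) \;=\; (1 + \lambda_1 t)(1 + \lambda_2 t) \Id \cdot L(t)^{-1},
\]
as claimed.

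Since the derivation is just rearrangement and inversion of scalars/matrices with nonzero constant term, there is no real obstacle. The only thing to watch is the interpretation of the identities: the cleanest reading is as identities of matrix-valued formal power series in~$t$ (which automatically imply the analytic identity on a neighborhood of~$0$, and by unique analytic continuation on the complement of the finite set of poles). Everything else is just bookkeeping.
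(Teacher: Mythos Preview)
Your proposal is correct and matches the paper's approach: the paper simply states the corollary as an immediate consequence of \pref{lem:recur} (``From this recurrence one may express the inverse of $F(t)$ in terms of~$A$ and~$c$''), and your rearrangement is exactly that. Your additional care in justifying the inversions via nonzero constant terms is sound and more explicit than what the paper writes.
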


\paragraph{Strategy for the rest of the proof.} The strategy will be to apply \pref{prop:jacobi} with the deformed Laplacian~$L(t)$. On the left side we'll get a determinant involving~$A$.  On the right side we'll get a trace involving $L(t)^{-1}$, which is essentially $F(t)$.  In turn, $\tr(F(t))$ is a generating function for nomadic closed walks, which we can hope to relate to~$B$ (although there will be an edge case to deal with).

Let's begin executing this strategy.  By \pref{prop:jacobi} we have
\begin{align*}
    -t \frac{d}{dt} \log \det L(t) &= -t \cdot \tr\parens*{L'(t) L(t)^{-1}} \\
    &= -t \cdot \tr\parens*{(\Id(\lambda_1+\lambda_2) - A + 2\gr t\Id) \cdot ((1+\lambda_1t)(1+\lambda_2t))^{-1} F(t)}\\
    &= \frac{1}{(1+\lambda_1t)(1+\lambda_2t)} \tr\parens*{-(\lambda_1+\lambda_2)F(t)t + A F(t) t - 2\gr F(t) t^2 }
\end{align*}
where we used \pref{cor:recur}.  Now using \pref{lem:recur} again we may infer
\[
    -(\lambda_1+\lambda_2)F(t)t + A F(t) t - 2\gr F(t) t^2 = (1- \gr t^2) F(t) - (1+\lambda_1t)(1+\lambda_2t)\Id;
\]
combining the previous two identities yields
\begin{equation}    \label{eqn:halfway}
    -t \frac{d}{dt} \log \det L(t) = \tr\parens*{\frac{1-\gr t^2}{(1+\lambda_1t)(1+\lambda_2t)} F(t) - \Id}.
\end{equation}

\paragraph{Nomadic walks.}  The right side above is $\tr(F(t))$ up to some scaling/translating.  By definition, $\tr(F(t))$ is the generating function for nomadic \emph{circuits} (closed walks) with any starting point.  A first instinct is therefore to expect that
\begin{equation}    \label{eq:wrong}
    \tr(F(t)) \overset{?}{=} \sum_{k \geq 0} \tr(B^k) t^k,
\end{equation}
as $\tr(B^k)$ is the weight of closed length-$k$ circuits of direct edges in the nomadic world.  However this is not quite right: $\tr(B^k)$ only weighs the nomadic circuits whose first and last edge are not in the same atom.  The nomadic circuits that are not weighed can be identified either as (i) ``tailed'' nomadic circuits, i.e., those where the last directed edge is the reverse of the first directed edge;  (ii) ``stretched'' nomadic circuits, i.e., those where the last directed edge is distinct from but in the same atom as the first directed edge.  E.g., $\tr(B^k)$ would fail to count the following:

\myfig{.5}{example_tail}{A length-$9$ nomadic walk from $u$ to $u$ with a \emph{tail} of length $2$}{fig:tail}

Thus we need to correct \pref{eq:wrong}.  
\begin{definition}
    With the $-\Id$ taking care of the omission of $k = 0$, we define
    \begin{equation} \label{eqn:tails}
        \tails(t) = \sum_{k \geq 1} \parens*{\text{weight of nomadic circuits of length $k$}} t^k = \tr(F(t) - \Id).
    \end{equation}
    We also define
    \begin{align*}
        \notails(t) = \sum_{k \geq 1} \parens*{\text{weight of tail-less nomadic circuits of length $k$}} t^k
    \end{align*}
    and
    \begin{multline} \label{eqn:notails}
        \Simple(t) = \sum_{k \geq 1} \parens*{\text{weight of non-stretched, tail-less nomadic circuits of length $k$}} t^k \\
        = \sum_{k \geq 1} \tr(B^k) t^k = -t \frac{d}{dt} \log \det (\Id - Bt),
    \end{multline}
    where the last equality used \pref{cor:gen-eigs}.
\end{definition}

\paragraph{Tails vs.\ no tails vs.\ simple: more generating functions.}  We finish by relating $\tails(t)$, $\notails(t)$ and $\Simple(t)$.  This is the recipe:
\begin{quotation}
    A general nomadic circuit of length $k$ is constructed from a tail-less nomadic circuit of length $k-2\ell$ with a tail of length-$\ell$ attached to one of its vertices.
\end{quotation}
Tail-less nomadic circuits can be classified as (i) non-stretched tail-less nomadic circuits, and (ii)~stretched, tail-less nomadic circuits, for which,
\[
    \notails(t) - \Simple(t) = \sum_{k\ge 1}(\text{weight of stretched, tail-less nomadic walks of length $k$})t^k.
\]

Consider a stretched, tail-less nomadic walk of length $k$ that starts at vertex $v$, takes the edge $e$ from $v$ to $u$, goes on a nomadic walk $W$ from $u$ to $w$, and finally takes edge $e'$ from $w$ to $v$ to end the walk at $v$.  Note that $e$ and $e'$ are part of the same atom $A_i$.  Summing over all $v$ in atom $A_i$ and applying \pref{fact:off-diag-square} gives
\[
    \sum_{v\sim A_i} \Wt(A_i(v,u))\Wt(A_i(w,v))\Wt(W) = (\lambda_1+\lambda_2)\Wt(A_i(w,u))\Wt(W) = (\lambda_1+\lambda_2)\Wt(W')
\]
where $W'$ is a nomadic circuit of length $k-1$ that starts at $w$, takes edge $A_i(w,u)$ in the first step, and then takes walk $W$. From this, we derive
\[
    \notails(t) - \Simple(t) = (\lambda_1+\lambda_2)t\cdot\Simple(t).
\]

It's easy to count the total weight of tails of length $\ell$ one can attach to a given vertex of a tail-less nomadic circuit: if the tail-less nomadic circuit is non-stretched, the first edge can be chosen by picking any edge in $(c-2)$ atoms and each of the remaining $\ell-1$ edges can be chosen by picking any edge $(c-1)$ atoms; and if the tail-less nomadic circuit is stretched, each edge (including the first one) can be chosen anywhere from $(c-1)$ atoms.  From this it's easy to derive
\begin{multline}   \label{eqn:cf}
    \tails(t) = \parens*{1 + (-\lambda_1\lambda_2)(c-2)t^2  + (-\lambda_1\lambda_2)^2(c-2)(c-1) t^4 + \cdots} \Simple(t) \\
    + \parens*{1 + (-\lambda_1\lambda_2)(c-1)t^2 + (-\lambda_1\lambda_2)^2(c-1)^2t^4 + \cdots}(\notails(t)-\Simple(t)) \\
    = \frac{1-(-\lambda_1\lambda_2)t^2}{1-(c-1)(-\lambda_1\lambda_2)t^2} \Simple(t) + \frac{(\lambda_1+\lambda_2)t}{1-(c-1)(-\lambda_1\lambda_2)t^2}\Simple(t) \\
    \iff \Simple(t) =  \frac{1-\gr t^2}{(1+\lambda_1t)(1+\lambda_2t)} \tails(t).
\end{multline}

Using $\tails(t) = \tr(F(t) - \Id)$ (i.e., \pref{eqn:tails}), we obtain:
\begin{corollary}                                     \label{cor:tails-vs-notails}
    $\displaystyle  \Simple(t) = \tr\parens*{\frac{1-\gr t^2}{(1+\lambda_1t)(1+\lambda_2t)}(F(t) - \Id)}.$
\end{corollary}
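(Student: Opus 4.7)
The plan is to observe that this corollary is an immediate consequence of two facts already in hand: the scalar identity $\Simple(t) = \frac{1-\gr t^2}{(1+\lambda_1 t)(1+\lambda_2 t)} \tails(t)$ obtained at the end of the derivation of \pref{eqn:cf}, and the tracial identity $\tails(t) = \tr(F(t) - \Id)$ from \pref{eqn:tails}. So the main task is simply to justify substituting the latter into the former and pulling the rational scalar prefactor inside the trace.

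More concretely, I would first recall that $\tails(t)$ is defined as the generating function for total weight of nomadic circuits of every positive length; by the definition of $F(t)$ as $\sum_{k\ge 0} p_k(A) t^k$ together with the earlier combinatorial interpretation of $p_k(A)_{uv}$ as the total weight of length-$k$ nomadic walks from $u$ to $v$, summing over the diagonal yields $\tr(F(t)) = 1 + \sum_{k\ge 1} (\text{total weight of length-}k\text{ nomadic circuits})\, t^k$, and subtracting the identity (to discard the $k=0$ term) gives \pref{eqn:tails}. Then substituting this expression for $\tails(t)$ into the last line of \pref{eqn:cf} yields
\[
\Simple(t) = \frac{1 - \gr\, t^2}{(1+\lambda_1 t)(1+\lambda_2 t)} \cdot \tr(F(t) - \Id).
\]
Because $\frac{1 - \gr t^2}{(1+\lambda_1 t)(1+\lambda_2 t)}$ is a scalar (viewed as a formal power series in~$t$, valid in a neighborhood of $t=0$ where the denominator does not vanish), it commutes with the trace, and we may move it inside to obtain the stated identity.

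There is no real obstacle here: the content has been deposited in \pref{eqn:cf} and \pref{eqn:tails}, and the corollary is just a repackaging that puts $\Simple(t)$ in a form directly comparable to the right-hand side of \pref{eqn:halfway}. In fact, the reason to state the corollary in exactly this form is forward-looking: combining it with \pref{eqn:halfway} will equate $-t\frac{d}{dt}\log\det L(t)$ with $\Simple(t) = -t\frac{d}{dt}\log\det(\Id - Bt)$ (by \pref{eqn:notails}), and integrating will then yield the Ihara--Bass identity of \pref{thm:ihara-bass-restatement} up to the boundary prefactor $(1+\lambda_1 t)^{\alpha}(1+\lambda_2 t)^{\beta}$, whose exponents are fixed by comparing the $t\to 0$ behavior on both sides.
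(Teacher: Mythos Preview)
Your proposal is correct and matches the paper's own derivation exactly: the paper simply states ``Using $\tails(t) = \tr(F(t) - \Id)$ (i.e., \pref{eqn:tails}), we obtain'' the corollary, which is precisely the substitution of \pref{eqn:tails} into the last line of \pref{eqn:cf} that you describe. One inconsequential slip in your recap: $\tr(F(t))$ at $t^0$ is $\tr(\Id) = |V|$, not~$1$, but since you correctly quote the identity $\tails(t) = \tr(F(t)-\Id)$ from the paper this does not affect the argument.
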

But this is \emph{almost} the same as \pref{eqn:halfway}.  The difference is
\begin{align*}
    \tr\parens*{\Id - \frac{1-\gr t^2}{(1+\lambda_1t)(1+\lambda_2t)}\Id} &= \tr\parens*{\frac{(\lambda_1+\lambda_2)t+(c-2)(-\lambda_1\lambda_2)t^2}{(1+\lambda_1t)(1+\lambda_2t)}\Id}\\
    &= |V|\cdot\frac{(\lambda_1+\lambda_2)t+(c-2)(-\lambda_1\lambda_2)t^2}{(1+\lambda_1t)(1+\lambda_2t)}.
\end{align*}
Combining the above with \pref{eqn:halfway}, \pref{cor:tails-vs-notails}, and \pref{eqn:notails}, we finally conclude
\[
    -t \frac{d}{dt} \log \det L(t)  + |V|\cdot\frac{(\lambda_1+\lambda_2)t+(c-2)(-\lambda_1\lambda_2)t^2}{(1+\lambda_1t)(1+\lambda_2t)} =  -t \frac{d}{dt} \log \det (\Id - Bt).
\]
Finally, dividing by $-t$, integrating (which leaves an unspecified additive constant), and exponentiating (now there is an unspecified multiplicative constant) yields
\[
     \text{(const.)} \cdot (1+\lambda_1t)^{|V|\frac{ c\lambda_2}{\lambda_2-\lambda_1}-1}(1+\lambda_2t)^{|V|\frac{c\lambda_1}{\lambda_1-\lambda_2}-1} \det L(t) = \det(\Id - Bt).
\]
By consideration of $t = 0$ we see that the constant must be $1$.

\section{Connecting the adjacency and nomadic spectrum} \label{sec:bound}

Let $\atomlist = (A_1,\dots,A_c)$ be a sequence of atoms with two distinct eigenvalues $\lambda_1$ and $\lambda_2$, let $\congraph$ be an $r$-ary, $c$-atom constraint graph, and let $\instgraph=\atomlist(\congraph)$ be the corresponding instance graph.  We use $A$ for the adjacency matrix of $\instgraph$, $B$ for its nomadic walk matrix, $V$ for its vertex set, and $E$ for its edge set.  Recall that $\gr$ is defined as $(c-1)(-\lambda_1\lambda_2)$.

We want to use \pref{thm:ihara-bass-restatement} to describe the spectrum of $B$ with respect to that of $A$. We will refer to eigenvalues of $B$ with the letter $\mu$ and eigenvalues of $A$ with the letter $\nu$.

First, notice that if $t$ is such that $\det(\Id - Bt) = 0$, then  $\mu = 1 / t$ has $\det(\mu \Id - B) = 0$, meaning $\mu$ is an eigenvalue of $B$. Thus we want to find for which values of~$t$ does the left-hand side of the expression in \pref{thm:ihara-bass} become $0$ in order to deduce the spectrum of $B$.

It is easy to see that when $t = - 1 / \lambda_1$ and $t = - 1 / \lambda_2$ the left-hand side is always $0$, so $-\lambda_1$ is an eigenvalue of $B$ with multiplicity $|V|(\frac{c \lambda_2}{\lambda_2 - \lambda_1} - 1)$ and $-\lambda_2$ is an eigenvalue with multiplicity $|V|(\frac{c \lambda_1}{\lambda_1 - \lambda_2} - 1)$. The remaining eigenvalues are given by the values of $t$ for which $\det(L(t)) = 0$. Let $t$ be such that $\det(L(t)) = 0$; then we have that $L(t)$ is non-invertible, which means there is some vector $v$ in the nullspace of $L(t)$. By rearranging the equality $L(t) v = 0$ we get:

$$
  Av = \frac{1 + (\lambda_1 + \lambda_2)t + \gr t^2}{t} v.
$$

This implies that $\frac{1 + (\lambda_1 + \lambda_2)t + \gr t^2}{t}$ is an eigenvalue of $A$. Let $\nu$ be some eigenvalue of $A$; then we have that $\nu = \frac{1 + (\lambda_1 + \lambda_2)t + \gr t^2}{t}$ for some $t$. If we rearrange the previous expression we get the following quadratic equation in $t$:

$$
  1 + (\lambda_1 + \lambda_2 - \nu)t + \gr t^2 = 0.
$$

By solving this expression for $t$ and then using the fact that $\mu = 1 / t$ we get (notice that $c > 1$ and $\lambda_1\lambda_2 \neq 0$):

$$
  \mu = \frac{-2\gr}{\lambda_1 + \lambda_2 - \nu \pm \sqrt{(\lambda_1 + \lambda_2 - \nu)^2 - 4\gr}}.
$$

To analyze the previous we look at three cases:

\begin{enumerate}
    \item $\nu > \lambda_1 + \lambda_2 + 2\sqrt{\gr}$. In this case the discriminant is always positive. If we look at the $-$ branch of the $\pm$ we further get that the denominator of the previous formula is always less than $-2\sqrt{\gr}$ which means we have that $\mu$ is real and $\mu > \sqrt{\gr}$. Additionally, we have that in this interval $\mu$ is an increasing function of $\nu$.
    \item $\nu < \lambda_1 + \lambda_2 - 2\sqrt{\gr}$. This is analogous to the previous case; if we look at the $+$ branch we have that $\mu$ is real and $\mu < -\sqrt{\gr}$. Additionally, we have that in this interval $\mu$ is a decreasing function of $\nu$.
    \item $\nu \in [\lambda_1 + \lambda_2 - 2\sqrt{\gr}, \lambda_1 + \lambda_2 + 2\sqrt{\gr}]$, for each such $\nu$ we get a pair of anti-conjugate complex numbers, meaning a pair $x, \bar{x}$ such that $x \bar{x} = -1$.
\end{enumerate}

Finally, the spectrum of $B$ also contains 0 with multiplicity $2|E| - |V|\left(2 + (\frac{c \lambda_1}{\lambda_1 - \lambda_2} - 1) + (\frac{c \lambda_2}{\lambda_2 - \lambda_1} - 1) \right)$, which we get because the degrees of the polynomials in the left-hand side and right-hand do not match; the right-hand side has degree $2|E|$ but we only described $|V|\left(2 + (\frac{c \lambda_1}{\lambda_1 - \lambda_2} - 1) + (\frac{c \lambda_2}{\lambda_2 - \lambda_1} - 1) \right)$ roots.

We can now summarize the eigenvalues of $B$ in the following way:

\begin{itemize}
    \item $-\lambda_1$ with multiplicity $|V|(\frac{c \lambda_2}{\lambda_2 - \lambda_1} - 1)$;
    \item $-\lambda_2$ with multiplicity $|V|(\frac{c \lambda_1}{\lambda_1 - \lambda_2} - 1)$;
    \item for each eigenvalue $\nu$ of $A$ we get two eigenvalues that are solutions to the previous quadratic equation;
    \item 0 with multiplicity $2|E| - |V|\left(2 + (\frac{c \lambda_1}{\lambda_1 - \lambda_2} - 1) + (\frac{c \lambda_2}{\lambda_2 - \lambda_1} - 1) \right)$;
\end{itemize}

The distribution of the eigenvalues that come from $A$ forms a sort of semicircle. To showcase this behavior we display an example of the spectrum of typical lifted instance in \pref{fig:specB}.

\begin{figure}[!ht]
  \centering
  \includegraphics[width=0.5\textwidth]{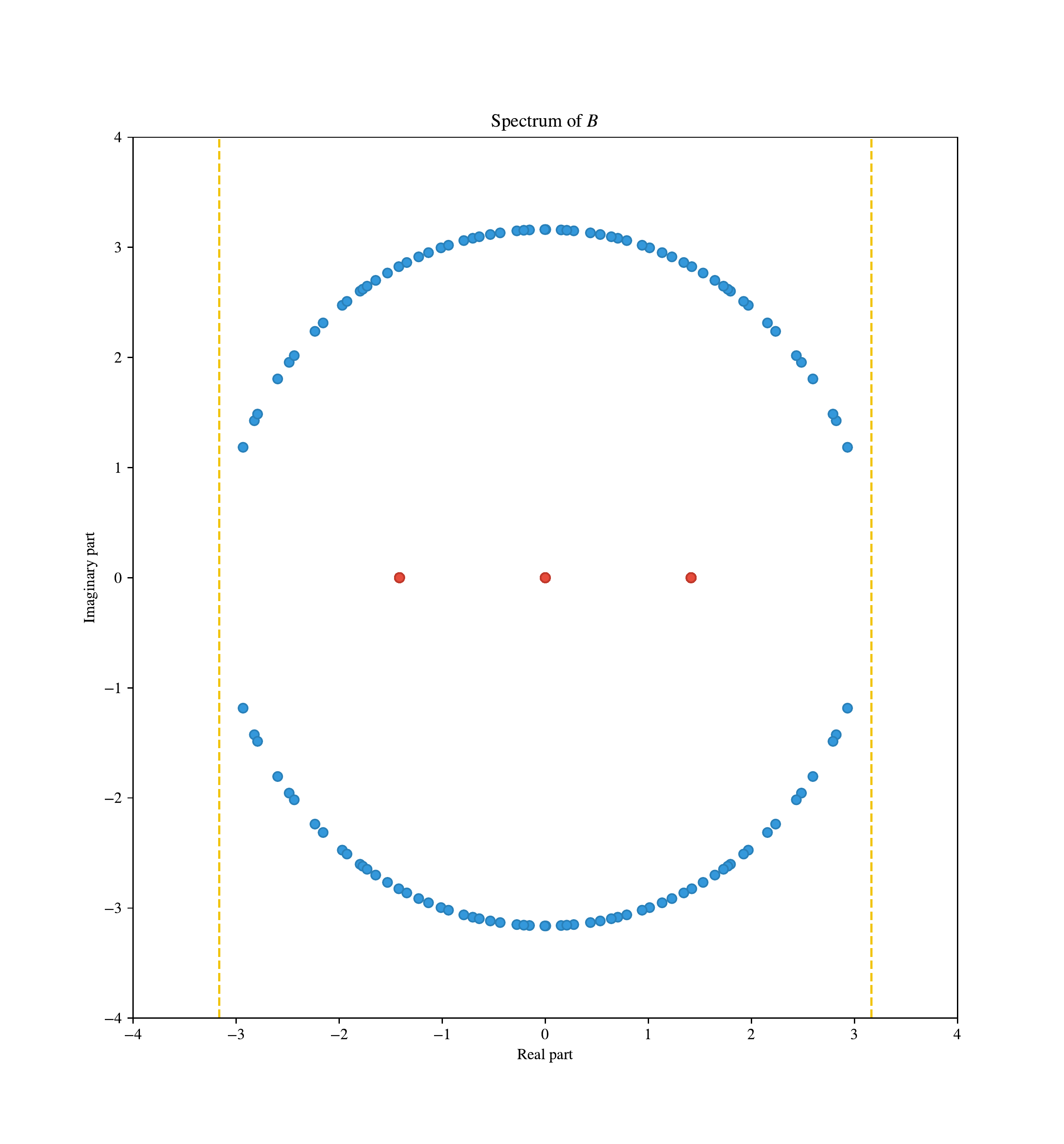}

  \caption{The spectrum of $B$ for a additive $15$-lift of $6$ copies of a $\sort_4$ graph. The blue dots are eigenvalues that come from eigenvalues of $A$, the red dots are either $-\lambda_1$, $-\lambda_2$ or 0 and the yellow line is the limit $\sqrt{\gr}$.}
  \label{fig:specB}
\end{figure}

\begin{figure}[!ht]
  \centering
  \begin{subfigure}[b]{0.45\textwidth}
    \centering
    \includegraphics[width=\textwidth]{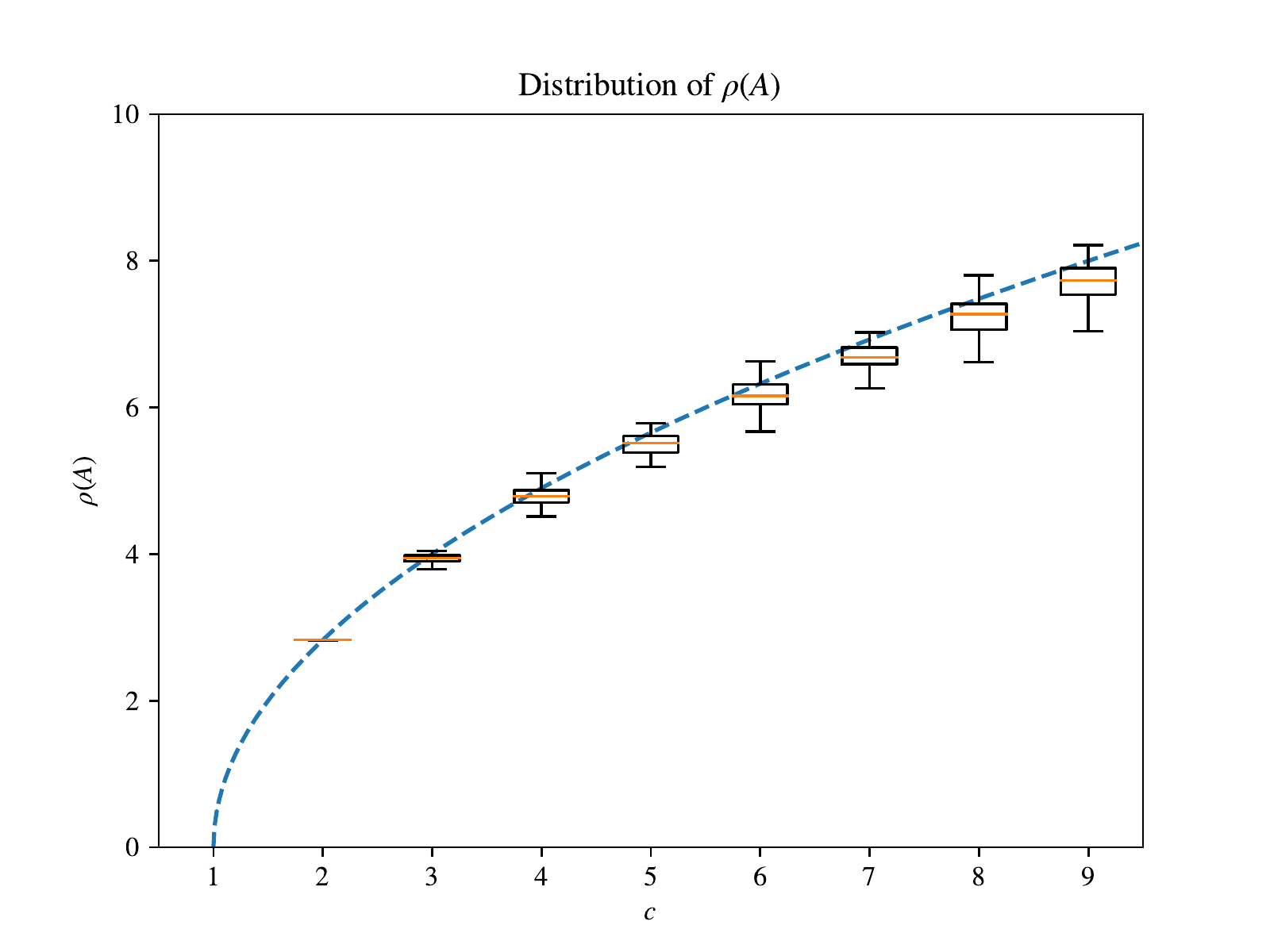}
  \end{subfigure}
  ~
  \begin{subfigure}[b]{0.45\textwidth}
    \centering
    \includegraphics[width=\textwidth]{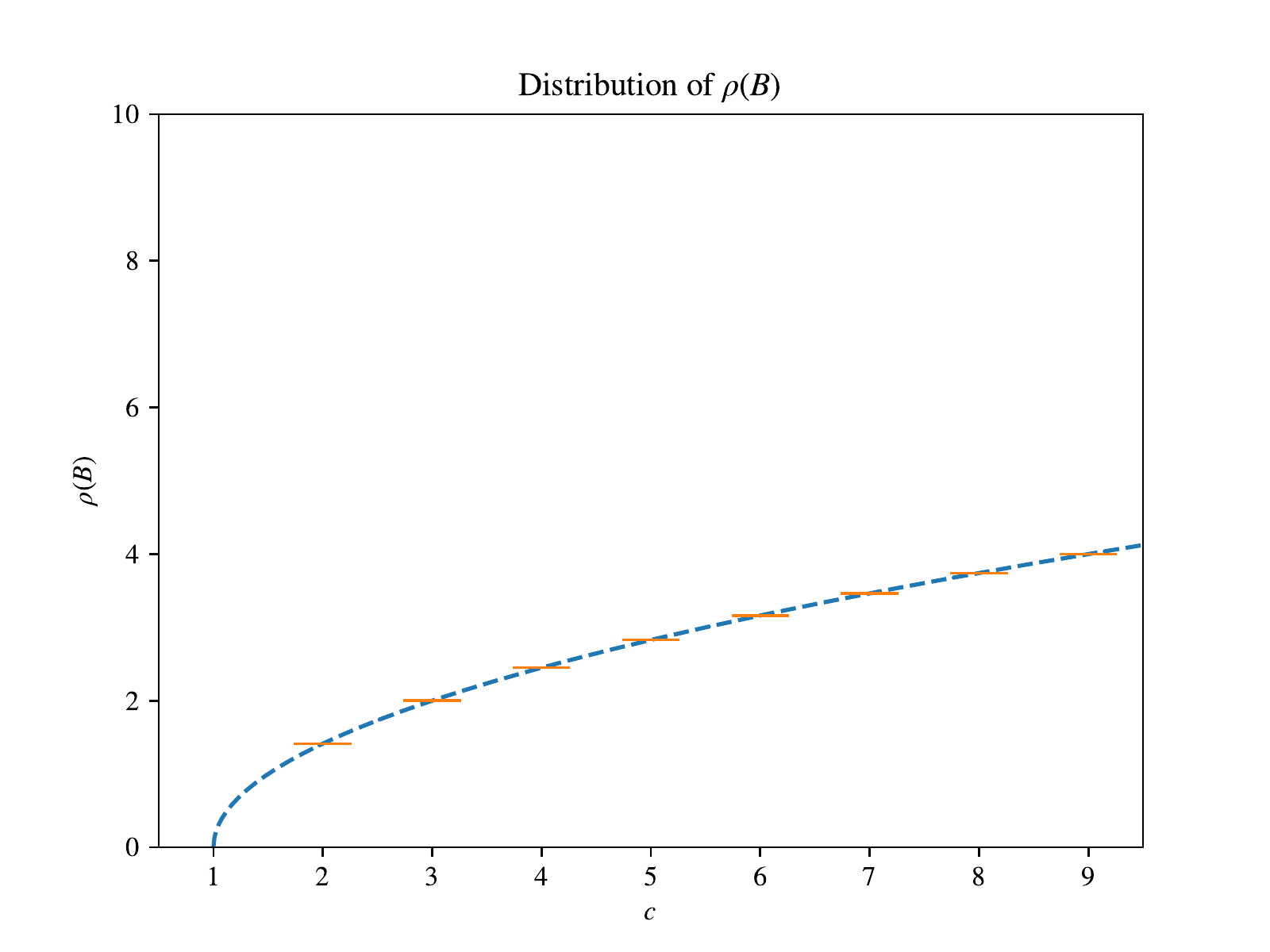}
  \end{subfigure}
  
  \caption{A box plot of $\rho(A)$ and $\rho(B)$ of 100 samples of random instance graphs as a function of $c$ with $n = 15$, $r = 4$ and all atoms are the $\sort_4$ graph. The dashed line shows the theoretical bound prediction of $2\sqrt{\gr}$ for $A$ and $\sqrt{\gr}$ for $B$.}
  \label{fig:practice}
\end{figure}

We can now prove the central theorem of this section:

\begin{theorem}
  \label{thm:boundadj}
    Let $\binstgraph_n$ be a random additive $n$-lift of $\atomlist$ with adjacency matrix $A_{\binstgraph_n}$, and let $\epsilon > 0$. Then:

    $$\mathbf{Pr} \left [ \rho(A_{\binstgraph_n}) \in [\lambda_1 + \lambda_2 - 2\sqrt{\gr} - \eps, \lambda_1 + \lambda_2 + 2\sqrt{\gr} + \eps \right ] = 1 - o_{n}(1)$$
\end{theorem}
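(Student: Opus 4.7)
The plan is to bootstrap \pref{thm:bordenave-informal} into the desired confinement of $\spec(A_{\binstgraph_n})$ by leveraging the explicit eigenvalue correspondence between $A$ and $B$ developed earlier in this section. The key algebraic input is that if $\nu$ is any eigenvalue of $A$ with eigenvector $v$, then $L(t) v = (1 + (\lambda_1 + \lambda_2 - \nu) t + \gr t^2) v$, so $\det L(t)$ vanishes at both roots of the quadratic $\gr t^2 + (\lambda_1 + \lambda_2 - \nu) t + 1 = 0$. By \pref{thm:ihara-bass-restatement}, the reciprocals $\mu = 1/t$ of these roots are eigenvalues of $B$.

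From the case analysis preceding the theorem statement, $\nu > \lambda_1 + \lambda_2 + 2\sqrt{\gr}$ produces a real eigenvalue $\mu_-(\nu) > \sqrt{\gr}$ of $B$ that is continuous and strictly increasing in $\nu$, tending to $\sqrt{\gr}$ as $\nu$ approaches $\lambda_1 + \lambda_2 + 2\sqrt{\gr}$ from above; symmetrically, $\nu < \lambda_1 + \lambda_2 - 2\sqrt{\gr}$ yields a real eigenvalue $\mu_+(\nu) < -\sqrt{\gr}$, continuous and strictly decreasing. By continuity and monotonicity, for any $\eps > 0$ there exists $\delta = \delta(\eps) > 0$ such that every eigenvalue of $A$ lying outside $[\lambda_1 + \lambda_2 - 2\sqrt{\gr} - \eps,\ \lambda_1 + \lambda_2 + 2\sqrt{\gr} + \eps]$ forces $\rho(B) \geq \sqrt{\gr} + \delta$. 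Taking contrapositives and invoking \pref{thm:bordenave-informal} with its $o_n(1)$ error chosen smaller than $\delta/2$ shows that $\spec(A_{\binstgraph_n})$ lies inside the claimed interval with probability $1 - o_n(1)$.

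The main obstacle is entirely offloaded to \pref{thm:bordenave-informal}, whose proof in \pref{sec:bordenave} is the technically delicate piece: adapting Bordenave's trace-method attack on Friedman's Theorem to count nomadic rather than nonbacktracking walks, in a setting where the atoms contain their own internal cycles. Given that estimate, the argument above is a soft consequence of the explicit quadratic linking $\nu$ and $\mu$ together with elementary monotonicity, requiring no further randomness or combinatorial enumeration. A minor bookkeeping subtlety worth flagging is that the ``extra'' eigenvalues of $B$ at $-\lambda_1$ and $-\lambda_2$ guaranteed by \pref{thm:ihara-bass-restatement} must also be accommodated, but since these are constants of magnitude at most $\max(|\lambda_1|,|\lambda_2|) \leq \sqrt{-\lambda_1\lambda_2} \leq \sqrt{\gr}$ (for $c \geq 2$), they are absorbed harmlessly into the bound of Step 1.
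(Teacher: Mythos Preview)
Your approach is essentially identical to the paper's: both invoke the quadratic correspondence between $\spec(A)$ and $\spec(B)$ from the case analysis in \pref{sec:bound}, observe that any $\nu$ outside $[\lambda_1+\lambda_2 - 2\sqrt{\gr} - \eps,\ \lambda_1+\lambda_2 + 2\sqrt{\gr} + \eps]$ forces a real eigenvalue of $B$ with modulus at least $\sqrt{\gr}+\delta$ for some $\delta=\delta(\eps)>0$, and then appeal to \pref{thm:C} (the formal version of \pref{thm:bordenave-informal}) together with $\rho(|B_1|)=\gr$ to conclude.

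One small error in your closing aside: the inequality $\max(|\lambda_1|,|\lambda_2|)\le\sqrt{-\lambda_1\lambda_2}$ is false in general (take the $\naethree$ atom $K_3$, where $\lambda_1=2$, $\lambda_2=-1$). Fortunately the aside is unnecessary: the argument only uses the implication ``$\nu$ outside the interval $\Rightarrow$ $\rho(B)>\sqrt{\gr}+\delta$'', so whatever other eigenvalues $B$ may carry at $-\lambda_1,-\lambda_2$ are irrelevant here. (If you want a correct version of the remark: one can check that the Ihara--Bass multiplicity $|V|\bigl(\tfrac{c\lambda_2}{\lambda_2-\lambda_1}-1\bigr)$ of $-\lambda_1$ is nonnegative precisely when $|\lambda_1|\le\sqrt{\gr}$, so whenever $-\lambda_1$ actually occurs as an eigenvalue of $B$ it is automatically bounded by $\sqrt{\gr}$.)
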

\begin{proof}
  First recall \pref{thm:adjacency-spec-bounds-informal} (for fully formal statement, see \pref{thm:C}) and notice that $\rho(|B|) = \gr$, which follows by using the trivial upper bound of $\gr^{2k}$ on $\tr\left(|B|^k\left(|B|^*\right)^k\right)$.  From cases 1 and 2 in the previous analysis we get that if $\rho(A_{\binstgraph_n}) \notin [\lambda_1 + \lambda_2 - 2\sqrt{\gr} - \eps, \lambda_1 + \lambda_2 + 2\sqrt{\gr} + \eps]$ there is some constant $\delta$ such that $\rho(B_n) > \sqrt{\gr} + \delta$, which happens with $o_{n \to \infty}(1)$ probability by \pref{thm:C}.
\end{proof}

Also, we note that even though throughout our proof we hide various constant factors, the bounds obtained in \pref{thm:boundadj} and \pref{thm:C} are empirically visible for very small values of $n$ and $c$. To justify this claim we show in \pref{fig:practice} a plot of samples of random instance graphs for different values of $c$ with a fixed small $n$.

\section{Additive products of 2-eigenvalue atoms} \label{sec:wave}

In this section, we let $\calA = (A_1,\dots,A_c)$ be a sequence of $\{\pm 1\}$-\emph{weighted} atoms with the same pair of exactly two distinct eigenvalues, $\lambda_1$ and $\lambda_2$.  We also let $X \coloneqq A_1\addprod\cdots\addprod A_c$ be the additive product graph.  We use $A_X$ to denote the adjacency operator of $X$.  In this section, $\binstgraph_n$ is the instance graph of a random additive $n$-lift of $\calA$ with negations, and we use $A_{\binstgraph_n}$ to denote its adjacency matrix.  Finally, we recall $\gr:=(c-1)(-\lambda_1\lambda_2)$ and define the quantity $r_X := 2\sqrt{\gr}$.

The main results that this section is dedicated to proving are:
\begin{theorem} \label{thm:spectrum-X}
    The following are true about the spectrum of $X$:
    \begin{enumerate}
        \item \label{item:containment} $\spec(A_X)\subseteq[\lambda_1+\lambda_2-r_X,\lambda_1+\lambda_2+r_X]$;
        \item \label{item:hull} $\lambda_1+\lambda_2 - r_X$ and $\lambda_1+\lambda_2+r_X$ are both in $\spec(A_X)$.
    \end{enumerate}
\end{theorem}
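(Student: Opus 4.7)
The plan has two parts corresponding to the two claims.

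\emph{Part (1).} I would extend the Ihara--Bass correspondence of Sections~\pref{sec:ihara}--\pref{sec:bound} to the infinite graph~$X$ and reduce containment to a spectral-radius bound on the nomadic walk operator~$B_X$ of~$X$. The operator identity $L(t) F(t) = (1+\lambda_1 t)(1+\lambda_2 t)\Id$ from \pref{cor:recur} was derived from purely local identities (\pref{fact:diag-square} and \pref{fact:off-diag-square}) and hence remains valid as a bounded-operator identity on $\ell_2(V(X))$ for sufficiently small~$t$. Combined with the nomadic-walk generating-function identity, it forces the following correspondence: if $\nu\in\spec(A_X)$ then there is an approximate eigenvalue $\mu$ of~$B_X$ satisfying the quadratic $\mu^2 - (\nu - \lambda_1-\lambda_2)\mu + \gr = 0$, exactly as in \pref{sec:bound}. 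It then suffices to show $\rho(B_X) \le \sqrt{\gr}$, for $|\nu - \lambda_1 - \lambda_2| > r_X$ would yield a real root with $|\mu| > \sqrt{\gr}$, contradicting this bound. To bound $\rho(B_X)$, I would use the growth-rate lemma from Section~\pref{sec:ihara} (the weighted sum of length-$2t$ out-and-back nomadic walks from any fixed vertex of~$X$ equals exactly $\gr^t$), applied through a trace-type estimate, combined with \pref{fact:specrad-op-norm}.

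\emph{Part (2).} Since $A_X$ is bounded and self-adjoint, \pref{cor:extremal_approx_eigens} together with \pref{fact:rayleigh} reduces the task to constructing, for every $\eps>0$, unit vectors $f_R^\pm \in \ell_2(V(X))$ whose Rayleigh quotients approach $\lambda_1+\lambda_2 \pm r_X$. I would use a Gaussian-wave (radial) construction: fix a base atom $A_0 \subset X$, let $d(v)$ denote the atom-distance from~$v$ to~$A_0$, and set
\[
    f_R^\pm(v) = \psi_R(d(v))\cdot\phi^\pm(v),
\]
where $\psi_R \colon \mathbb{Z}_{\ge 0}\to\R_{\ge 0}$ is a nonnegative envelope supported on $\{0,\dots,R\}$ varying slowly ($|\psi_R(k{+}1)-\psi_R(k)| = O(1/R)$), and $\phi^\pm$ is a generalized eigenfunction of~$A_X$ at eigenvalue $\nu^\pm := \lambda_1+\lambda_2\pm r_X$. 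Since at these boundary values the quadratic from Part~(1) has the double root $\mu^\pm = \pm\sqrt{\gr}$, I take $\phi^\pm(v) = (\pm\sqrt{\gr})^{-d(v)}\cdot\Phi^\pm(v)$ with $\Phi^\pm$ chosen atom-by-atom outward from~$A_0$ so that \pref{fact:diag-square} and \pref{fact:off-diag-square} make the local recurrence at each vertex reduce to the identity matching~$\nu^\pm$. Plugging into the Rayleigh quotient and using the slow variation of $\psi_R$ yields $\|A_X f_R^\pm - \nu^\pm f_R^\pm\|_2 / \|f_R^\pm\|_2 = O(1/R) \to 0$.

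\emph{Main obstacle.} The hardest step is constructing $\Phi^\pm$ in Part~(2) so that (a)~the values prescribed on each atom glue consistently at shared vertices, and (b)~the local recurrence at each vertex reduces exactly to the scalar identity producing~$\nu^\pm$. This is precisely where the two-eigenvalue hypothesis is essential: it forces the ``branching relation'' emerging from \pref{fact:diag-square}/\pref{fact:off-diag-square} to be a \emph{scalar} identity rather than an over-determined matrix one, which is what allows the tree-of-atoms recursion to close uniquely up to a global scaling. A secondary technicality in Part~(1) is making the operator-level Ihara--Bass correspondence rigorous on $\ell_2(V(X))$, where ``non-invertibility'' of $L(t)$ must be interpreted via approximate eigenvectors rather than vanishing determinants; this is standard but requires some functional-analytic care.
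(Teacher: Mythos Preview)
Your approach is correct and closely parallels the paper's, with one notable stylistic difference in Part~(2).

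For Part~(1), the paper does exactly what you describe: it proves $\rho(B_X)\le\sqrt{\gr}$ via an operator-norm bound based on the growth-rate lemma (\pref{lem:infinite-nomadic-bound}, following Angel--Friedman--Hoory), and then establishes a one-sided operator-level Ihara--Bass correspondence (\pref{lem:one-sided-IB}) by explicitly constructing an approximate eigenfunction $g$ of $B_X-t\Id$ from an approximate eigenfunction $f$ of $Q_t$. One subtlety the paper handles that you do not mention: the construction breaks down at $t=-\lambda_1,-\lambda_2$, and one must separately argue that the corresponding value $d_X=c(-\lambda_1\lambda_2)$ cannot be an isolated spectral point of $A_X$ (otherwise it would be a genuine $\ell_2$-eigenvalue, impossible for a $\pm 1$-weighted regular infinite graph).

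For Part~(2), the paper takes a slightly different and simpler route than your Gaussian-wave-with-cutoff. Rather than working at the exact boundary eigenvalue with a compactly supported envelope $\psi_R$, the paper defines, for small $\delta>0$,
\[
    f_v^{(s)}(u) = \left(\tfrac{s(1-\delta)}{\sqrt{\gr}}\right)^{d(u,v)}\prod_{\{i,j\}\in\calP_{uv}}(A_X)_{ij},\qquad s=\pm 1;
\]
the $(1-\delta)$ factor makes this genuinely $\ell_2$ with no cutoff needed, and one computes the Rayleigh quotient in closed form (a geometric series via \pref{lem:growth-rate}) and sends $\delta\to 0$. In particular, your anticipated ``main obstacle'' --- constructing $\Phi^\pm$ recursively atom-by-atom and checking consistency at shared vertices --- dissolves: the correct $\Phi^\pm$ is simply the product of edge signs along the unique nomadic path from $v$ to $u$, and the local recurrence closes automatically via \pref{fact:diag-square} and \pref{fact:off-diag-square}. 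Your cutoff approach would also work, but the paper's exponential damping sidesteps the boundary-error analysis that a radial truncation requires and yields an exact formula for the Rayleigh quotient.
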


\begin{theorem} \label{thm:SDP-sol}
    For every $\eps>0$, for large enough $n$, there are $|V(\binstgraph_n)|\times |V(\binstgraph_n)|$ positive semidefinite matrices $M_+$ and $M_-$ with all-ones diagonals such that
    \begin{align*}
        \langle A_{\binstgraph_n}, M_+\rangle&\ge(\lambda_1+\lambda_2+r_X-\eps)n\\
        \langle A_{\binstgraph_n}, M_-\rangle&\le(\lambda_1+\lambda_2-r_X+\eps)n.
    \end{align*}
    with probability $1-o_n(1)$.
\end{theorem}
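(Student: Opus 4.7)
The plan is a ``Gaussian wave'' construction: transfer approximate eigenfunctions of $A_X$ near the two spectral edges from \pref{thm:spectrum-X} to SDP solutions on $\binstgraph_n$ via local pullback. I focus on $M_+$; the construction of $M_-$ is entirely symmetric, replacing the eigenfunction for $\lambda_+ := \lambda_1+\lambda_2+r_X$ by one for $\lambda_- := \lambda_1+\lambda_2-r_X$.

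First I would produce a finitely supported approximate eigenfunction on $X$ at eigenvalue $\lambda_+$. By the second conclusion of \pref{thm:spectrum-X}, $\lambda_+\in\spec(A_X)$, and since $A_X$ is self-adjoint \pref{cor:extremal_approx_eigens} provides, for every $\delta>0$, a unit $\phi\in\ell_2(V(X))$ with $\|A_X\phi-\lambda_+\phi\|\le\delta$. Since $A_X$ is bounded and finitely supported functions are dense in $\ell_2(V(X))$, I may replace $\phi$ by a finitely supported unit approximation in a ball $B_R(o)$ of some radius $R=R(\delta)$ around a fixed origin $o\in V(X)$, at the cost of a constant-factor blowup in $\delta$; in particular $\langle A_X\phi,\phi\rangle\ge\lambda_+-O(\delta)$.

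Next I would invoke local convergence of random lifts: whp, all but an $o_n(1)$-fraction of variable vertices $v$ of $\binstgraph_n$ have $B_R(v)$ graph-isomorphic (ignoring edge signs) to $B_R(o)\subseteq X$; since these balls are tree-like at the constraint-graph level, the random negations within them can be gauged away by vertex sign flips. For each such \emph{good} $v$, fix an isomorphism $\psi_v:B_R(o)\to B_R(v)$ and a sign function $\sigma_v:B_R(v)\to\{\pm 1\}$ under which the restriction of $A_{\binstgraph_n}$ to $B_R(v)$ agrees with the restriction of $A_X$ to $B_R(o)$, and define $f_v(u):=\sigma_v(u)\phi(\psi_v^{-1}(u))$ for $u\in B_R(v)$ and $f_v(u):=0$ otherwise (with $f_v\equiv 0$ for bad $v$). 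Then $\tilde M := \sum_v f_v f_v^\top$ is manifestly PSD, and for each good $v$ the local identification gives $f_v^\top A_{\binstgraph_n} f_v = \langle A_X\phi,\phi\rangle \ge \lambda_+-O(\delta)$, so
$$\langle A_{\binstgraph_n},\tilde M\rangle \;=\; \sum_v f_v^\top A_{\binstgraph_n} f_v \;\ge\; (\lambda_+ - O(\delta))(1-o_n(1))|V(\binstgraph_n)|.$$

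Finally I would renormalize the diagonal to one. By reindexing, $\tilde M[u,u]=\sum_{v\text{ good}}\phi(\psi_v^{-1}(u))^2$ depends on $u$ only through its variable group (up to the small bad-vertex correction), and equals the $\ell_2$-mass of $\phi$ on the corresponding orbit in $X$. By symmetrizing the construction across orbits --- e.g.\ by choosing $\phi$ with equal $\ell_2$-mass in each variable orbit of $X$, or by averaging $r$ separate constructions each centered in a different variable group --- I can arrange all diagonal entries to be close to a common positive value $C$, whereupon setting $M_+ := D^{-1/2}\tilde M D^{-1/2}$ for $D := \mathrm{diag}(\tilde M)$ produces a PSD matrix with exactly all-ones diagonal and inner product matching the theorem after absorbing the $(1+o(1))$ factor into $\eps$. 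The main technical obstacle is this last uniformization step, since $X$ is only ``variable-group transitive'' rather than vertex transitive; ensuring the diagonal mass of $\tilde M$ is balanced across all orbits requires exploiting the symmetry of the additive product construction. The remaining ingredients --- bounding the bad-vertex contribution, formalizing the gauge argument for random negations in tree-like balls, and propagating the error $\delta$ through the normalization --- are routine by comparison.
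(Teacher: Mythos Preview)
Your overall strategy coincides with the paper's: pull back an approximate top-eigenfunction of $A_X$ from \pref{thm:spectrum-X} to each good vertex of $\binstgraph_n$, sum the rank-one matrices, and use that all but $O(\log n)$ vertices have tree-like $R$-neighborhoods. The treatment of bad vertices and the gauge argument for negations are likewise the same in spirit (the paper packages the latter as a global covering $X_D\to\binstgraph_n$ via a balanced-signing lemma, but this is equivalent to your local sign-flip argument on tree-like balls).

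Where you diverge is the diagonal step, and here you are making life harder than necessary. The paper does \emph{not} renormalize or symmetrize across variable groups. Instead it uses the explicit approximate eigenfunction built in the proof of \pref{lem:witness-vectors}, namely
\[
g_v(u)\;\propto\;(1-\delta)^{d(u,v)}\prod_{\{i,j\}\in\calP_{uv}}\frac{(A_X)_{ij}}{\sqrt{(c-1)(-\lambda_1\lambda_2)}},
\]
truncated outside a ball of radius~$L$. The point is that this expression is \emph{symmetric} in $u$ and $v$: one has $g_v(u)^2=g_u(v)^2$ identically. Hence
\[
(M_+)_{uu}\;=\;\sum_v \wt g_v(u)^2\;=\;\sum_v \wt g_u(v)^2\;=\;\|\wt g_u\|^2\;=\;1
\]
with no averaging, no orbit balancing, and no diagonal rescaling. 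Your abstract $\phi$ from \pref{cor:extremal_approx_eigens} has no such symmetry, which is exactly why you run into what you call the ``main technical obstacle.'' In particular, your assertion that $\tilde M[u,u]$ depends on $u$ only through its variable group is not justified as written: it would require the local isomorphisms $\psi_v$ to be chosen coherently, and even then the counting of preimages $\psi_v^{-1}(u)$ at each $w\in B_R(o)$ need not be constant across $u$. So the fix is not your proposed symmetrization but rather to abandon the abstract eigenfunction and use the explicit path-product one, whereupon the diagonal is automatically one.
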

In this section, when we measure the distance between vertices $u$ and $v$ in an instance graph $\binstgraph_n$, we look at the corresponding vertices in the constraint graph $\bcongraph$, and define $d(u,v) := \frac{d_{\bcalK}(u,v)}{2}$.  We use $\calP_{uv}$ to refer to the collection of edges comprising the shortest path between $u$ and $v$.  We begin with a statement about the `growth rate' of $X$.
\begin{lemma}   \label{lem:growth-rate}
    For all vertices $v$ in $V(X)$, for $t\ge 1$ we have
    \[
        \sum_{u:d(u,v)=t} \prod_{\{i,j\}\in\calP_{uv}} (A_X)_{ij}^2 = c(c-1)^{t-1}(-\lambda_1\lambda_2)^t.
    \]
\end{lemma}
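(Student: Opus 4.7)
\textbf{Proof plan for \pref{lem:growth-rate}.} The plan is induction on $t$, leveraging the tree-like geometry of the constraint graph of $X$ together with \pref{fact:diag-square}.

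First I would record the key geometric fact about $X$. Because $X$ is the additive product, its underlying constraint graph is the infinite $(r,c)$-biregular tree. Consequently, between any two distinct variable vertices $u,v$ there is a \emph{unique} shortest path $\calP_{uv}$, and the atoms visited along this path are all distinct, with consecutive edges in different atoms. This means a vertex $u$ at distance $t$ from $v$ is uniquely specified by an ``atom-first'' description: choose one of the $c$ atoms through $v$, then a non-$v$ vertex $w_1$ inside it, then one of the $c-1$ remaining atoms through $w_1$, and so on for $t$ steps.

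For the base case $t=1$, the neighbors of $v$ in $X$ partition into the $c$ atoms containing $v$ (one variable-slot of $v$ per atom group). Applying \pref{fact:diag-square} to each atom $A_k$ separately gives $\sum_{u \sim A_k,\, u\ne v} A_k[v,u]^2 = -\lambda_1\lambda_2$, so summing over the $c$ atoms through $v$ yields $c(-\lambda_1\lambda_2)$, which matches the claimed formula.

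For the inductive step, I would group vertices $u$ at distance $t$ by their penultimate vertex $w$ on $\calP_{uv}$, which lies at distance $t-1$ from $v$. For each such $w$, the last edge of $\calP_{wv}$ determines a single ``incoming'' atom at $w$; the $u$'s with penultimate vertex $w$ are precisely the non-$w$ vertices in one of the other $c-1$ atoms through $w$. Since the edge-weights along $\calP_{wv}$ are common to all these $u$'s, the sum factors as
\[
  \sum_{u:\,d(u,v)=t} \prod_{e\in\calP_{uv}} (A_X)_e^2
  \;=\;\sum_{w:\,d(w,v)=t-1}\Big(\prod_{e\in\calP_{wv}}(A_X)_e^2\Big)\cdot\!\!\sum_{\substack{u\sim A,\,u\ne w \\ A\ni w,\ A\ne\text{incoming}}}\!\! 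A[w,u]^2,
\]
and \pref{fact:diag-square} applied atom-by-atom to the $c-1$ admissible atoms at $w$ evaluates the inner sum to $(c-1)(-\lambda_1\lambda_2)$. The inductive hypothesis then gives
\[
  c(c-1)^{t-2}(-\lambda_1\lambda_2)^{t-1}\cdot(c-1)(-\lambda_1\lambda_2)\;=\;c(c-1)^{t-1}(-\lambda_1\lambda_2)^t,
\]
completing the induction.

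The only potential subtlety is verifying that the ``atom-first'' decomposition of shortest paths is truly unambiguous: one needs that no atom is revisited along $\calP_{uv}$, so that the ``incoming atom'' at $w$ is well-defined and exactly $c-1$ atoms remain. This is immediate from the fact that the infinite biregular tree has no cycles, so this step is routine rather than a real obstacle; the content of the lemma is essentially the per-atom identity $\sum_u A[w,u]^2 = -\lambda_1\lambda_2$ iterated along a tree.
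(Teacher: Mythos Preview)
Your proposal is correct and matches the paper's proof essentially line for line: both induct on $t$, handle the base case via \pref{fact:diag-square} summed over the $c$ atoms at $v$, and in the inductive step group distance-$t$ vertices by their penultimate vertex $w$, factoring out the path weight to $w$ and evaluating the inner sum over the $c-1$ non-incoming atoms as $(c-1)(-\lambda_1\lambda_2)$.
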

\begin{proof}
    We proceed by induction.  When $t = 1$, the statement immediately follows from \pref{fact:diag-square}.  Suppose the equality is true for some $t=\ell-1$, we will show how statement follows for $t=\ell$. 
    \begin{align*}
        \sum_{u:d(u,v)=\ell} \prod_{\{i,j\}\in\calP_{uv}} (A_X)_{ij}^2 &= \sum_{u:d(u,v)=\ell-1} \left(\prod_{\{i,j\}\in \calP_{uv}}(A_X)_{ij}^2\right)\cdot\left(\sum_{\substack{u'\in N(u)\\ d(u',v)=\ell}}(A_X)_{uu'}^2\right)
    \intertext{From \pref{fact:diag-square}, $\displaystyle \sum_{\substack{u'\sim u\\ d(u',v)=t}}(A_X)_{uu'}^2$ is equal to $(c-1)(-\lambda_1\lambda_2)$, which means the above is equal to}
    &= \sum_{u:d(u,v)=\ell-1} \left(\prod_{\{i,j\}\in \calP_{uv}}(A_X)_{ij}^2\right)(c-1)(-\lambda_1\lambda_2)\\
    &= (c-1)^{\ell-2}c(-\lambda_1\lambda_2)^{\ell-1}(c-1)(-\lambda_1\lambda_2)\\
    &= c(c-1)^{\ell-1}(-\lambda_1\lambda_2)^{\ell}. \qedhere
    \end{align*}
\end{proof}

\begin{corollary}   \label{cor:deg-X}
    Since all the weights of $X$ are $\{\pm 1\}$-valued, the degree of every vertex in $X$ equals $c(-\lambda_1\lambda_2)$.
\end{corollary}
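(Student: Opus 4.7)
The plan is to observe that this corollary follows immediately from the $t=1$ case of \pref{lem:growth-rate}, combined with the hypothesis that all edge weights of $X$ lie in $\{\pm 1\}$. Specifically, instantiating the lemma at $t=1$ gives
\[
    \sum_{u : d(u,v) = 1} (A_X)_{uv}^2 = c(-\lambda_1\lambda_2)
\]
for every vertex $v$, since the product over $\calP_{uv}$ collapses to a single edge weight squared.

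Next I would note that the set $\{u : d(u,v) = 1\}$ is exactly the neighborhood of $v$ in $X$, and that $(A_X)_{uv}^2 = 1$ for each such neighbor, by the $\{\pm 1\}$-weighting hypothesis. Therefore the left-hand side simplifies to $|N(v)|$, which is the degree of $v$. Equating with the right-hand side yields $\deg(v) = c(-\lambda_1\lambda_2)$, as claimed.

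There is no real obstacle here; the corollary is a one-line specialization, and the main work has already been discharged in \pref{lem:growth-rate} (and more fundamentally in \pref{fact:diag-square}, which ensures the local sum of squared weights at any vertex, across a single atom, equals $-\lambda_1\lambda_2$, then summed over the $c$ atoms incident to each variable vertex in $X$). The only thing worth flagging in the write-up is that $(-\lambda_1\lambda_2)$ is automatically a positive integer in this setting: positivity follows because $A_X$ has both positive and negative eigenvalues (the atoms are nontrivial two-eigenvalue graphs with $\{\pm 1\}$ weights so $\lambda_1\lambda_2 < 0$), and integrality is forced because the degree counted on the left is a nonnegative integer.
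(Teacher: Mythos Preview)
Your proposal is correct and is exactly the argument the paper intends: the corollary is stated immediately after \pref{lem:growth-rate} with no separate proof, so the $t=1$ specialization together with the $\{\pm 1\}$-weight hypothesis is precisely what is meant. Your additional remarks about positivity and integrality of $-\lambda_1\lambda_2$ are fine side observations but not needed for the corollary itself.
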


\subsection{Enclosing the spectrum}
Let $B_X$ denote the nomadic walk operator of $X$.  In this section, we show
\[
    \spec(A_X)\subseteq\left[\lambda_1+\lambda_2-r_X,\lambda_1+\lambda_2+r_X\right].
\]
The first part of the proof will involve showing that the spectral radius of $B_X$ is bounded by $\sqrt{\gr}$, and the second part translates this bound to the desired one on $\spec(A_X)$.  Both these components closely follow proofs from the work of Angel et al.; the former after \cite[Theorem 4.2]{AFH15} and the latter after \cite[Theorem 1.5]{AFH15}.
\begin{lemma}   \label{lem:infinite-nomadic-bound}
    $\spec(B_X)\subseteq \left[-\sqrt{\gr}, \sqrt{\gr}\right]$.
\end{lemma}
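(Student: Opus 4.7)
My plan is to follow the strategy used by Angel--Friedman--Hoory in \cite[Theorem 4.2]{AFH15} for the nonbacktracking walk operator on universal covers, adapting it to the atomic structure of the additive product. By \pref{fact:specrad-op-norm} it is enough to prove an operator-norm bound of the form $\|B_X^k\|_{\mathrm{op}}^2 \le p(k)\,\gr^k$ for some polynomial $p$; taking $k$-th roots and sending $k \to \infty$ yields the spectral-radius bound $\specrad(B_X) \le \sqrt{\gr}$, which is the quantitative content of the lemma.

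To produce this power-norm estimate, I would expand $\|B_X^k f\|_2^2$ as a double sum over ordered pairs $(W, W')$ of length-$k$ nomadic walks in $X$ that share an initial directed edge $\vec e$, weighted by $\mathrm{wt}(W)\,\mathrm{wt}(W')\,f(\mathrm{end}(W))\,\overline{f(\mathrm{end}(W'))}$. Because the constraint graph of $X$ (the $\infty$-lift of $K_{r,c}$) is a tree, the projections of $W$ and $W'$ to this constraint tree are nonbacktracking walks that must share some longest common prefix and then diverge into different branches. I would regroup the double sum by the length $j$ of this common prefix, by the atom at which $W$ and $W'$ first diverge, and by the two divergent suffixes.

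The $2$-eigenvalue hypothesis enters when evaluating the atom-local sums. At each atom visited along the shared prefix, summing over the internal vertex through which the two walks pass collapses via \pref{fact:diag-square} to a factor of $-\lambda_1\lambda_2$; at the divergence atom, summing over pairs of vertices through which $W$ and $W'$ exit is controlled by \pref{fact:off-diag-square}. Combined with the $(c-1)$ choices of non-repeating atom at each prefix step, this produces a common-prefix mass bounded by $\gr^{j}$. The two divergent suffixes contribute $\gr^{k-j}$ via the sphere-sum \pref{lem:growth-rate}, and a final Cauchy--Schwarz against the values $f(\mathrm{end}(W))$ and $f(\mathrm{end}(W'))$ then recovers the $\|f\|_2^2$ factor.

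The principal obstacle is executing the regrouping so that the $2$-eigenvalue cancellations actually produce the crucial square-root improvement: a naive triangle-inequality bound on the unsigned weights of pairs $(W, W')$ gives only $\specrad(B_X) \le \gr$, whereas the target is $\sqrt{\gr}$. Extracting the correct exponent requires pairing $W$ and $W'$ atom-by-atom and invoking the exact identities in \pref{fact:diag-square} and \pref{fact:off-diag-square}, rather than the trivial bounds from simply counting nomadic extensions. Additional (but routine) bookkeeping will be needed for the boundary cases where divergence occurs at step~$0$ or step~$k$, and for walks that locally coincide within an atom via a length-$2$ self-crossing.
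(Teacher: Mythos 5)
Your plan correctly identifies the target $\|B_X^k\|_{\mathrm{op}} \lesssim \mathrm{poly}(k)\,\gr^{k/2}$ and cites the right source, but the argument you outline is not the one the paper (or \cite[Theorem 4.2]{AFH15}) actually runs, and the substitution you make opens a gap. The paper does \emph{not} expand $\|B_X^k f\|_2^2$ as a double sum over pairs of walks with a shared starting edge. It bounds $|\langle g, B_X^k f\rangle|$ by a single sum over nomadic walks, after fixing an arbitrary root $r$ and classifying each length-$k$ walk by the unique index $i$ at which it turns around relative to $r$. It then applies $|f(e')g(e)|\le a(e,e')\,f(e')^2 + a(e,e')^{-1}g(e)^2$ with the \emph{root-dependent, asymmetric} weight $a(e,e') = \gr^{k/2-i}$. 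The entire square-root gain comes from the balancing this asymmetry provides: for a fixed end $e'$ there are $\lesssim\gr^i$ walks with turning index $i$, and for a fixed start $e$ there are $\lesssim\gr^{k-i}$, so both $\sum_i a\cdot\gr^i$ and $\sum_i a^{-1}\cdot\gr^{k-i}$ collapse to $\lesssim k\,\gr^{k/2}$. That balancing needs the external reference $r$; there is no analogue of it in your outline.

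Your common-prefix decomposition is symmetric between the two suffixes $S$ and $S'$: both emanate from the end of the common prefix and both have length $k-j$, so any AM--GM weight $a_{W,W'}$ forced to satisfy $a_{W',W}=a_{W,W'}^{-1}$ cannot break this symmetry, and the resulting estimate stays at $\gr^{2k}\|f\|_2^2$, i.e.\ $\specrad(B_X)\le\gr$. You flag exactly this obstacle, but the proposed cure --- pairing atom-by-atom and invoking \pref{fact:diag-square} and \pref{fact:off-diag-square} --- does not overcome it. For $\{\pm 1\}$-weighted atoms, \pref{fact:diag-square} is just the degree count inside an atom (precisely what feeds \pref{lem:growth-rate}, which the paper's proof also uses and which already yields only the linear-in-$\gr$ bound), and \pref{fact:off-diag-square} sums over the \emph{middle} vertex of a length-$2$ path inside one atom, which is not the sum your divergence step exposes (there one would be summing over the two distinct outgoing edges at the divergence vertex, a different object). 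In fact, in the simplest case $r=2$, where each atom is a single signed edge and $B_X$ is the ordinary nonbacktracking operator on the $c$-regular tree, both facts are entirely vacuous, yet the square-root improvement is still needed and is supplied there solely by the rooted classification and the $\gr^{k/2-i}$ weights. That ingredient is what your proposal is missing.
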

\begin{proof}
    Arbitrarily fix a root $r$ of $X$.  Recall that the spectral radius of $B_X$ is equal to $\lim\left(\|B_X^k\|_{\op}\right)^{1/k}$, and hence it suffices to bound $\left|\langle g, B_X^k f \rangle\right|$ for arbitrary $f$ and $g$ with $\|f\|=\|g\|=1$.

    We can decompose every nomadic walk of length $k$ into two segments, a segment of $i$ steps towards $r$ followed by a sequence of $k-i$ steps away from $r$; henceforth, we call length-$k$ nomadic walks with such a decomposition $(i,k)$-nomadic walks.  For every pair of directed edges $e$ and $e'$ such that $e,e_1,\dots,e_{k-1},e'$ is an $(i,k)$-nomadic walk, let $a(e,e') := \gr^{k/2-i}$.  From \pref{lem:growth-rate}, the number of $(i,k)$-nomadic walks starting at a fixed $e$ is at most $\frac{c}{c-1}\gr^{k-i}$.  Similarly, the number of $(i,k)$-nomadic walks ending at fixed $e'$ is at most $\frac{c}{c-1}\gr^{i}$.  Now, we are ready to bound $\left|\langle g, B_X^k f \rangle\right|$ by imitating the proof of \cite[Theorem 4.2]{AFH15}.
    \begin{align*}
        \left|\langle g, B_X^k f \rangle\right| &\le \left|\sum_{e,e_1,\dots,e_{k-1},e'~\text{nomadic}}f(e')g(e)\right|\\
        &\le \sum_{e,e_1,\dots,e_{k-1},e'~\text{nomadic}} |f(e')g(e)|\\
        &\le \sum_{e,e_1,\dots,e_{k-1},e'~\text{nomadic}} a(e,e')f(e')^2+\frac{1}{a(e,e')}g(e)^2\\
        &\le \sup_{e'}\left(\sum_{e,e_1,\dots,e_{k-1},e'~\text{nomadic}}a(e,e')\right)\|f\|_2^2 + \sup_e\left(\sum_{e,e_1,\dots,e_{k-1},e'~\text{nomadic}}\frac{1}{a(e,e')}\right)\|g\|_2^2\\
        &\le \sum_{i=0}^k\sup_{e'}\left(\sum_{(i,k)\text{-nomadic walks ending at $e'$}}a(e,e')\right) + \sup_e \left(\sum_{(i,k)\text{-nomadic walks starting at $e$}}\frac{1}{a(e,e')}\right)\\
        &\le \sum_{i=0}^k \gr^{k/2-i}\cdot\frac{c}{c-1}\gr^i + \sum_{i=0}^k \gr^{i-k/2}\cdot\frac{c}{c-1}\gr^{k-i}\\
        &= \frac{2kc}{c-1}\gr^{k/2}
    \end{align*}
    Thus, we have
    \[
        \|B_X^k\|_{\mathrm{op}} \le \frac{2kc}{c-1}\gr^{k/2}
    \]
    and taking the limit of $\|B_X^k\|_{\mathrm{op}}^{1/k}$ for $k$ approaching infinity yields the desired statement.
\end{proof}

\begin{lemma}   \label{lem:one-sided-IB}
    If $0$ is an approximate eigenvalue of $Q_t \coloneqq (t^2+(c-1)(-\lambda_1\lambda_2))\Id-A_Xt+(\lambda_1 + \lambda_2)\Id t$, then it is also an approximate eigenvalue of $B_X-t\Id$ as long as $t \ne -\lambda_1,-\lambda_2$.
\end{lemma}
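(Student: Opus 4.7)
The plan is to mirror the finite Ihara--Bass argument of Section 3 at the level of approximate eigenvectors. Given $\eps > 0$ and $f \in \ell_2(V(X))$ with $\|f\|=1$ and $\|Q_t f\| \le \eps$, I will explicitly construct $F \in \ell_2(\vec{E}(X))$ satisfying $\|(B_X - t\Id) F\| / \|F\| \le O(\eps)$. The ansatz is
$$F(\vec{e}) := -(t+\lambda_1+\lambda_2)\, f(\mathrm{head}(\vec{e})) + \sum_{w \in \Atom(\vec{e}),\, w \ne \mathrm{head}(\vec{e})} \Wt((\mathrm{head}(\vec{e}),w))\, f(w),$$
so that $F(\vec{e})$ depends only on the head $v$ of $\vec{e}$ and on the atom $A_{\vec{e}}$. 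This form is engineered to exploit the minimal-polynomial identity $M_A^2 = (\lambda_1+\lambda_2) M_A - \lambda_1\lambda_2 \Id$ that each atom adjacency satisfies.

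The first step is to verify the pointwise identity $[(B_X - t\Id) F](\vec{e}) = (Q_t f)(\mathrm{head}(\vec{e}))$. Writing $\vec{e}=(u,v)$ and expanding $B_X F((u,v))$ as a sum over nomadic continuations $(v,w)$, one groups terms by the atom $A' \ne A_v$ containing $v$ and $w$. The squared-weight contribution $\sum_{w \in A'} \Wt((v,w))^2$ equals $-\lambda_1\lambda_2$ by \pref{fact:diag-square}, which after summing over the $c-1$ other atoms contributes $\gr f(v)$. The cross-term $\sum_{w \in A'} \Wt((v,w)) (M_{A'} f)(w)$ equals $(M_{A'}^2 f)(v) = (\lambda_1+\lambda_2)(M_{A'} f)(v) - \lambda_1\lambda_2 f(v)$ via \pref{fact:off-diag-square} combined with the two-eigenvalue identity; summing over $A' \ne A_v$ yields $(\lambda_1+\lambda_2)[(A_X f)(v) - g_{A_v}(v)] + \gr f(v)$, where $g_A(v) := (M_A f)(v)$ denotes the within-atom action at $v$. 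Subtracting $tF((u,v))$, the $g_{A_v}(v)$ terms cancel precisely because of the coefficient $-(t+\lambda_1+\lambda_2)$ in the ansatz, and the residue simplifies to $(Q_t f)(v)$.

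Given this identity, \pref{cor:deg-X} (every vertex has degree $c(-\lambda_1\lambda_2)$ in $X$) yields
$$\|(B_X - t\Id) F\|^2 = c(-\lambda_1\lambda_2)\, \|Q_t f\|^2 \le c(-\lambda_1\lambda_2)\, \eps^2.$$
For the complementary lower bound on $\|F\|$, I would apply Cauchy--Schwarz across the $c$ atoms at each vertex. Letting $F_A(v)$ be the common value of $F$ on edges with head $v$ and atom $A$, and using that each atom-degree at $v$ equals $-\lambda_1\lambda_2$ (by \pref{fact:diag-square} applied to $\pm 1$-weighted atoms), one gets
$$\|F\|^2 \;\ge\; \frac{-\lambda_1\lambda_2}{c}\bigl\|(A_X - c(t+\lambda_1+\lambda_2)\Id)\, f\bigr\|^2.$$
The bound $\|Q_t f\| \le \eps$ forces $A_X f = \tfrac{t^2+\gr+(\lambda_1+\lambda_2)t}{t} f + R$ with $\|R\|\le \eps/|t|$, and the scalar coefficient of $f$ on the right simplifies algebraically to $\tfrac{(1-c)(t+\lambda_1)(t+\lambda_2)}{t}$. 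This is nonzero precisely under the hypothesis $t \ne -\lambda_1, -\lambda_2$ (together with $c>1$), so $\|F\|$ is bounded below by a positive constant times $\|f\|$. Hence $F/\|F\|$ witnesses that $0$ lies in the approximate point spectrum of $B_X - t\Id$.

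The main obstacle I expect is identifying and verifying the right ansatz. The two-eigenvalue algebra within each atom is essential for making the cross-atom sums close on themselves; without it, the term $\sum_{A'} \sum_w \Wt((v,w))(M_{A'} f)(w)$ would have no closed form in $f$ and $A_X f$. The specific coefficient $-(t+\lambda_1+\lambda_2)$ is the unique scalar making the within-atom contribution $g_{A_v}(v)$ cancel, isolating $(Q_t f)(v)$ on the right; and the algebraic factorization $(1-c)(t+\lambda_1)(t+\lambda_2)/t$ in the lower bound on $\|F\|$ is precisely what pinpoints the excluded $t$-values.
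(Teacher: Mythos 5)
Your proof is correct and follows essentially the same strategy as the paper's: you use the same ansatz for $F$ (the paper calls it $g$), verify the same pointwise identity $((B_X - t\Id)F)_{uv} = (Q_t f)_v$ via \pref{fact:diag-square}, \pref{fact:off-diag-square}, and the two-eigenvalue minimal-polynomial relation, and obtain the same norm upper bound $\|(B_X-t\Id)F\|^2 = c(-\lambda_1\lambda_2)\|Q_t f\|^2$.

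The one place you diverge is the lower bound on $\|F\|$. The paper observes that the restriction of the ansatz to edges $(u,v)$ lying in a fixed atom $\wt{A}$ is exactly $(A_{\wt{A}} - (t+\lambda_1+\lambda_2)\Id)f^{(\wt{A})}$, which has eigenvalues $-t-\lambda_1$ and $-t-\lambda_2$; these are nonzero whenever $t \ne -\lambda_1,-\lambda_2$, so each atom contributes at least $\min(|t+\lambda_1|,|t+\lambda_2|)^2\,\|f^{(\wt{A})}\|^2$ to $\|F\|^2/(-\lambda_1\lambda_2)$, with no reference to $f$ being an approximate eigenvector of $Q_t$. You instead apply Cauchy--Schwarz across the $c$ atoms at each vertex, reducing to a bound on $\|(A_X - c(t+\lambda_1+\lambda_2)\Id)f\|$, and then \emph{re-use} the hypothesis $\|Q_t f\|\le\eps$ to replace $A_X f$ by a scalar multiple of $f$ plus a small error, showing that scalar equals $(1-c)(t+\lambda_1)(t+\lambda_2)/t \ne 0$. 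Both arguments are valid and isolate the same excluded $t$-values; the paper's atom-local invertibility observation is slightly more direct and self-contained, whereas yours needs the approximate-eigenvector equation twice and is only meaningful once $\eps$ is small relative to the fixed $t$ (fine, since $t$ is fixed first). Your bookkeeping otherwise matches the paper's.
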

\begin{proof}
    Let $f$ be an $\eps$-approximate eigenfunction of unit norm of $Q_t$, then we construct a $C\eps$-approximate eigenfunction $g$ of $B_X-t\Id$ defined on pairs $uv$ such that $u$ and $v$ are incident to a common atom for an absolute constant $C > 0$ as follows,
    \[
        g_{uv} := \left(\sum_{w:\{v,w\}\in\Atom(\{u,v\})} (A_X)_{vw}f_w\right) - (\lambda_1+\lambda_2+t)f_v
    \]
    for every edge $\{u,v\}$ of $X$.
    \begin{align*}
        \left((B_X-t\Id)g\right)_{uv} &= \left(\sum_{\substack{w:\\ \{v,w\}\notin\Atom(\{u,v\})}} (B_X)_{uv,vw} g_{vw}\right) - t g_{uv}\\
        &= \left(\sum_{\substack{w:\\ \{v,w\}\notin\Atom(\{u,v)\}}} (A_X)_{vw}\left( \sum_{\substack{x:\\ \{w,x\}\in\Atom(\{v,w\})}}(A_X)_{wx}f_x - (\lambda_1+\lambda_2+t)f_w\right)\right) - t g_{uv}\\
        &= \left(\sum_{\substack{w:\\ \{v,w\}\notin\Atom(\{u,v)\}}} \sum_{\substack{x:\\ \{w,x\}\in\Atom(\{v,w\})}} (A_X)_{vw} (A_X)_{wx}f_x\right) -\\ &~~~~~\left(\sum_{\substack{w:\\ \{v,w\}\notin\Atom(\{u,v)\}}} (\lambda_1+\lambda_2+t)(A_X)_{vw}f_w\right) - t g_{uv}
    \intertext{Using \pref{fact:diag-square} and \pref{fact:off-diag-square}, the first term of the three above can be rewritten as}
        &(c-1)(-\lambda_1\lambda_2)f_v+(\lambda_1+\lambda_2)\sum_{w:\{v,w\}\notin\Atom(\{u,v\})} (A_X)_{vw}f_w\\
    \intertext{which lets us continue the chain of equalities}
    &= (c-1)(-\lambda_1\lambda_2)f_v-t\sum_{\substack{w:\\ \{v,w\}\notin\Atom(\{u,v)\}}}(A_X)_{vw}f_w\\
    &- t\left(\sum_{w:\{v,w\}\in\Atom(\{u,v\})} (A_X)_{vw}f_w\right)+t(\lambda_1+\lambda_2+t)f_v\\
    &= (c-1)(-\lambda_1\lambda_2)f_v-t(Af)_v+t(\lambda_1+\lambda_2+t)f_v\\
    &= (Q_tf)_v.
    \end{align*}
    Thus,
    \[
        \|(B_X-t\Id)g\|_2^2 = \sum_{\{u,v\}\in E(X)} ((B_X-t\Id)g)_{uv}^2 + ((B_X-t\Id)g)_{vu}^2 = d\sum_{v\in V}(Q_tf)_v^2 \le d\eps^2
    \]
    It remains to show that the norm of $g$ is bounded from above and below.  Fix a vertex $u$ and an atom $\wt{A}$ incident to $u$.  Consider $g^{(u,\wt{A})}$, the restriction of $g$ to entries ${uv}$ such that the edge $\{u,v\}$ is in $\wt{A}$, and $f^{(\wt{A})}$, the restriction of $f$ to vertices $v$ such that $\wt{A}$ is incident to $v$.  Observe that $g^{(u,\wt{A})} = (A_{\wt{A}}-(\lambda_1+\lambda_2+t)\Id)f^{(\wt{A})}$.  Since the min eigenvalue of $A_{\wt{A}}-(\lambda_1+\lambda_2+t)\Id$ is nonzero as long as $t\ne-\lambda_1,-\lambda_2$, the $\ell_2$ norm of $g$ is bounded from below.  To prove that the $\ell_2$ norm of $g$ is bounded from above, observe that
    \begin{align*}
        \|g\|_2^2 &= \sum_{\wt{A}\in\Atoms(X)} \sum_{(u,v):\{u,v\}\in\wt{A}}\left(\left(\sum_{w:\{v,w\}\in\wt{A}} (A_X)_{vw}f_w\right)-(\lambda_1+\lambda_2+t)f_v\right)^2\\
        &\le 2\sum_{\wt{A}\in\Atoms(X)}\sum_{(u,v):\{u,v\}\in\wt{A}} \left(\sum_{\{v,w\}\in\wt{A}}(A_X)_{vw}^2f_w^2 + (\lambda_1+\lambda_2+t)^2f_v^2\right)\\
    \end{align*}
    There is some coefficient $\alpha$ such that the weight on $f_v^2$ for each $v$ in the above sum is bounded by $\alpha$, thereby giving a bound of
    \[
        2\sum_{v\in V}\alpha f_v^2\le 2\alpha\|f\|_2^2 \le 2\alpha. \qedhere
    \]
\end{proof}

\begin{proof}[Proof of \pref{item:containment} in \pref{thm:spectrum-X}]
    Let $Q_t$ be as defined in the statement of \pref{lem:one-sided-IB}.  It can be verified that $0$ is an approximate eigenvalue of either $Q_{-\lambda_1}$ or $Q_{-\lambda_2}$ if and only if $d_X := c(-\lambda_1\lambda_2)$, which we recall from \pref{cor:deg-X} is the degree of every vertex in $X$, is in the spectrum of $A_X$.  Let $\mu_+ := \lambda_1+\lambda_2+r_X+\eta$ be in spectrum of $A_X$.  If $\mu_+ \ne d_X$, then we can conclude from \pref{lem:one-sided-IB} that
    \[
        \gr + \eta + \sqrt{\eta\gr+\eta^2/4}
    \]
    is an approximate eigenvalue of $B_X$.  Since $\spec(B_X)$ is contained in $[-\sqrt{\gr},\sqrt{gr}]$, $\eta$ cannot be positive.  A similar argument applied to $\mu_- := \lambda_1+\lambda_2-r_X-\eta$ precludes $\eta$ from being positive as long as $\mu_-\ne d_X$.  As a result, we can conclude that $\spec(A_X)$ is contained in $[\mu_-,\mu_+]\cup\{d_X\}$.  If $d_X$ is in the interval $[\mu_-,\mu_+]$, then we are done.  If not, then it remains to show that $d_X$ is not in $\spec(A_X)$.  Since $X$ is $\{\pm 1\}$-weighted and the degree of each vertex is $d_X$, any nonzero $x$ satisfying $A_Xx = d_Xx$ must have the same nonzero magnitude in all its entries.  However, such $x$ has unbounded $\ell_2$ norm, and hence $A_X$ has no eigenvectors with eigenvalue $d_X$ in $\ell_2(V)$.  If $d_X$ is in $\spec(A_X)$, it is an isolated point in the spectrum, and hence, by \pref{thm:isolation}, is an eigenvalue of $A_X$, which means $d_X$ cannot be in $\spec(A_X)$.
\end{proof}

\subsection{Construction of Witness Vectors}

\begin{lemma}[\pref{item:hull} of \pref{thm:spectrum-X} restated] \label{lem:witness-vectors}
    There exists $\lambda_- \le \lambda_1+\lambda_2-r_X$ and $\lambda_+ \ge \lambda_1+\lambda_2+r_X$ in the spectrum of $A_X$.
\end{lemma}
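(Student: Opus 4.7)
The plan is a classical Gaussian wave construction for approximate eigenvectors of $A_X$, adapted to the additive-product structure and the two-eigenvalue atoms. I will build explicit finitely-supported functions $f_R^{\pm}\in\ell_2(V(X))$ whose Rayleigh quotients $\langle f_R^\pm, A_Xf_R^\pm\rangle/\|f_R^\pm\|^2$ tend to $\lambda_1+\lambda_2\pm r_X$ as $R\to\infty$. By \pref{fact:rayleigh} combined with the basic fact that approximate eigenvalues belong to $\spec(A_X)$, this places $\lambda_1+\lambda_2\pm r_X$ in $\spec(A_X)$, proving \pref{lem:witness-vectors}.

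Fix a root $v_0\in V(X)$. Because the constraint graph of the additive product is an infinite $(r,c)$-biregular tree, every vertex $v$ admits a unique shortest atom-path from $v_0$: a sequence of atoms $A_1,\dots,A_t$ (with $t = d(v,v_0)$) and transit vertices $w_0 = v_0,\,w_1\in A_1\cap A_2,\dots,w_{t-1}\in A_{t-1}\cap A_t,\,w_t = v$. Let $W(v) := \prod_{i=1}^{t}(A_i)_{w_{i-1}, w_i}$ and define
\[
  f_R^\pm(v) := \alpha^{d(v,v_0)}\,W(v)\,\mathbf{1}[d(v,v_0)\le R], \qquad \alpha = \pm 1/\sqrt{\gr},
\]
so that $\alpha^{-1}+\gr\alpha = \pm 2\sqrt{\gr} = \pm r_X$. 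The crux is an exact local eigenvalue identity at every interior vertex $v$ with $0 < d(v,v_0) < R$: I partition the $X$-neighbors of $v$ into the grandparent $w_{t-1}$, the siblings $u\in A_t\setminus\{v, w_{t-1}\}$, and the vertices of the $c-1$ child atoms of $v$. The grandparent contribution equals $\alpha^{-1}f_R^\pm(v)$ by telescoping $W(w_{t-1})\cdot(A_t)_{w_{t-1},v} = W(v)$. The sibling contribution, after substituting $f_R^\pm(u) = \alpha^t\,W(w_{t-1})\,(A_t)_{w_{t-1},u}$ and summing over all $u\in V(A_t)$, reduces via \pref{fact:off-diag-square} (applied in $A_t$ at the distinct pair $w_{t-1}, v$) to $(\lambda_1+\lambda_2)f_R^\pm(v)$. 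Each child atom $B_j$ contributes $(-\lambda_1\lambda_2)\alpha f_R^\pm(v)$ by substituting $f_R^\pm(u) = \alpha^{t+1}\,W(v)\,(B_j)_{v,u}$ and applying \pref{fact:diag-square}; summing over the $c-1$ child atoms yields $\gr\alpha f_R^\pm(v)$. Adding,
\[
  (A_Xf_R^\pm)(v) = \bigl(\alpha^{-1} + (\lambda_1+\lambda_2) + \gr\alpha\bigr)f_R^\pm(v) = (\lambda_1+\lambda_2\pm r_X)f_R^\pm(v).
\]
The identity also holds trivially as $0=0$ whenever $W(v) = 0$ (e.g.\ when the atom-path from $v_0$ to $v$ crosses a zero-weight atom edge, as can happen in $\sort_4$), since in that case all three subsums vanish by the same appeal to \pref{fact:diag-square} and \pref{fact:off-diag-square}.

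For the norm, \pref{lem:growth-rate} gives $\sum_{v:d(v,v_0)=t}W(v)^2 = \tfrac{c}{c-1}\gr^t$ for $t\ge 1$, so with $\alpha^2 = 1/\gr$ each shell contributes the constant $\tfrac{c}{c-1}$ to $\|f_R^\pm\|^2$, yielding $\|f_R^\pm\|^2 = \Theta(R)$. The error $(A_Xf_R^\pm)(v) - (\lambda_1+\lambda_2\pm r_X)f_R^\pm(v)$ is nonzero only at (a) the root $v_0$, where there is no parent atom and only child-atom contributions appear; (b) vertices at distance $R$, where the child contribution $\gr\alpha f_R^\pm(v)$ is truncated away; and (c) vertices at distance $R+1$, where $f_R^\pm = 0$ but $A_Xf_R^\pm$ picks up leakage from the unique distance-$R$ neighbor in the parent atom. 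A bookkeeping computation paralleling the norm estimate, again using \pref{lem:growth-rate} and \pref{fact:diag-square}, bounds each of these three error sources in $\|A_Xf_R^\pm - (\lambda_1+\lambda_2\pm r_X)f_R^\pm\|^2$ by a constant independent of $R$. Dividing by $\|f_R^\pm\|^2 = \Theta(R)$ and letting $R\to\infty$ produces the required approximate eigenvectors. The main obstacle is the interior identity: the three neighbor-type contributions must precisely combine through \pref{fact:off-diag-square} and \pref{fact:diag-square} to give exactly $\alpha^{-1}+(\lambda_1+\lambda_2)+\gr\alpha$, and this is the one place where the $2$-eigenvalue hypothesis on the atoms is essential.
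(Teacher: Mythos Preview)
Your argument is correct and uses the same Gaussian-wave test functions as the paper: both you and the authors set $f(u)$ proportional to $\alpha^{d(u,v_0)}\prod_{\{i,j\}\in\calP_{uv_0}}(A_X)_{ij}$ and rely on \pref{fact:diag-square}, \pref{fact:off-diag-square}, and \pref{lem:growth-rate} to evaluate the resulting quadratic form. The one genuine difference is the regularization. The paper takes the \emph{subcritical} choice $\alpha = \pm(1-\delta)/\sqrt{\gr}$ so that $f_v^{(s)}\in\ell_2(V(X))$ without truncation, computes the Rayleigh quotient $\langle f_v^{(s)}, A_Xf_v^{(s)}\rangle/\|f_v^{(s)}\|^2$ exactly as a function of~$\delta$, and sends $\delta\to 0$. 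You take the \emph{critical} choice $\alpha=\pm 1/\sqrt{\gr}$ with a hard cutoff at radius~$R$; this makes the interior identity exact and pushes all error to the two boundary shells and the root, giving an $O(1/\sqrt{R})$ approximate eigenvector. Your route is arguably cleaner (no Rayleigh-quotient algebra, and it directly exhibits $\lambda_1+\lambda_2\pm r_X$ as approximate eigenvalues rather than merely bounding $\lambda_{\max},\lambda_{\min}$), while the paper's route has the advantage that its untruncated $f_v^{(s)}$ is reused in \pref{thm:SDP-sol} to build the SDP solution.
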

\begin{proof}
    Let $\delta > 0$ be a parameter to be chosen later.  First define $\rho$ as
    \[
        \rho(s) := \frac{s(1-\delta)}{\sqrt{(c-1)(-\lambda_1\lambda_2)}}
    \]
    Then, for vertex $v$ and define $f_v^{(s)}$ in the following way.
    \begin{align}
        f_v^{(s)}(u) := \rho(s)^{d(u,v)}\prod\limits_{\{i,j\}\in\calP_{uv}}\left(A_X\right)_{ij}~\text{ where $\calP_{uv}$ is the unique \permissible~walk between $u$ and $v$} \label{eq:}
    \end{align}
    To show the lemma, it suffices to prove the claim that for every $\eps > 0$, there is suitable choice of $\delta$ so that
    \[
        \frac{\langle f_v^{(-1)}, A_Xf_v^{(-1)}\rangle}{\langle f_v^{(-1)}, f_v^{(-1)}\rangle} < \lambda_1+\lambda_2-r_X+\eps
    \]
    and
    \[
        \frac{\langle f_v^{(1)}, A_Xf_v^{(1)}\rangle}{\langle f_v^{(1)}, f_v^{(1)}\rangle}>\lambda_1+\lambda_2+r_X-\eps
    \]

    We proceed by analyzing the expression $\langle f_v^{(s)}, A_X f_v^{(s)}\rangle$.
    \begin{align}
        \langle f_v^{(s)}, A_Xf_v^{(s)}\rangle &= \sum_{u\in V} f_v^{(s)}(u)A_Xf_v^{(s)}(u) \nonumber\\
        &= f^{(s)}_v(v)\sum_{w\in N(v)}\left(A_X\right)_{vw}f^{(s)}_v(w) + \sum_{u\in V,u\ne v} f_v^{(s)}(u)\sum_{w\in N(u)}(A_X)_{uw}f_v^{(s)}(w) \nonumber \\
        &= \sum_{w\in N(v)}(A_X)_{vw}^2\rho(s) + \sum_{u\in V,u\ne v} f_v^{(s)}(u)\sum_{w\in N(u)}(A_X)_{uw}f_v^{(s)}(w) \label{eq:rayleigh}
    \end{align}
    Let $w_0,w_1,\dots w_{T-1},w_T$ be the sequence of vertices from the unique \permissible walk between $u$ and $v$ where $w_0 = u$ and $w_T = v$.  Now, let $u^* = w_{1}$.  Recall the notation $\calP_{u,v}$ used to denote the unique \permissible walk between $u$ and $v$ as a sequence of edges.  Let $W_{u,v} := \rho(s)^{d(u,v)}\prod\limits_{\{i,j\}\in\calP_{u,v}}(A_X)_{ij}$.  Using the notation we just developed, along with applying \pref{fact:diag-square} on the first term of the above, we get
    \begin{align*}
        \pref{eq:rayleigh} &= c(-\lambda_1\lambda_2)\rho(s) + \sum_{u\in V, u\ne v} \rho(s)W_{u^*v}(A_X)_{uu^*}\cdot\\
        &\left((A_X)_{uu^*}W_{u^*v} + \sum_{w\in\mathrm{Atom}(\{u^*,u\})}\rho(s)(A_X)_{u^*w}(A_X)_{wu}W_{u^*v}+\sum_{\substack{w\notin\mathrm{Atom}(\{u,u^*\}) \\ w\in N(u)}}\rho(s)^2(A_X)_{u^*u}(A_X)_{uw}^2W_{u^*v}\right)\\
        &= c(-\lambda_1\lambda_2)\rho(s) + \sum_{u\in V, u\ne v} \rho(s)W_{u^*v}^2(A_X)_{uu^*}^2\cdot\\
        &\left(1 + \frac{\sum\limits_{w\in\mathrm{Atom}(\{u^*,u\})}\rho(s)(A_X)_{u^*w}(A_X)_{wu}}{A_{uu^*}} + \sum_{\substack{w\notin\mathrm{Atom}(\{u,u^*\}) \\ w\in N(u)}} (A_X)_{uw}^2\rho(s)^2\right)
        \intertext{Now we apply \pref{fact:diag-square} and \pref{fact:off-diag-square} and get}
        &= c(-\lambda_1\lambda_2)\rho(s) + \sum_{u\in V, u\ne v} \rho(s)W_{u^*v}^2(A_X)_{uu^*}^2\cdot \left(1 + \rho(s)(\lambda_1+\lambda_2) + (c-1)(-\lambda_1\lambda_2)\rho(s)^2\right)\\
        &= c(-\lambda_1\lambda_2)\rho(s) + \sum_{u\in V, u\ne v} W_{uv}^2\cdot\frac{1+\rho(s)(\lambda_1+\lambda_2)+(c-1)(-\lambda_1\lambda_2)\rho(s)^2}{\rho(s)}\\
        &= c(-\lambda_1\lambda_2)\rho(s) + \left(\|f_v^{(s)}\|^2-1\right)\cdot \frac{1+\rho(s)(\lambda_1+\lambda_2)+(c-1)(-\lambda_1\lambda_2)\rho(s)^2}{\rho(s)}\\
        &= c(-\lambda_1\lambda_2)\rho(s) + \left(\|f_v^{(s)}\|^2-1\right)\cdot \left(\frac{1+s^2(1-\delta)^2}{\rho(s)} + (\lambda_1+\lambda_2)\right)
    \end{align*}
    When $s = \pm 1$, the above quantity is equal to
    \begin{align*}
        c(-\lambda_1\lambda_2)\rho(s) + \left(\|f_v^{(s)}\|^2-1\right)\cdot \left(\frac{1+(1-\delta)^2}{\rho(s)} + (\lambda_1+\lambda_2)\right)
    \end{align*}

    Now, note that
    \begin{align}
        \frac{\langle f_v^{(s)}, A_Xf_v^{(s)}\rangle}{\langle f_v^{(s)}, f_v^{(s)}\rangle} &= \frac{c(-\lambda_1\lambda_2)\rho(s)}{\|f_v^{(s)}\|^2} + \left(1-\frac{1}{\|f_v^{(s)}\|^2}\right)\cdot\left(\frac{1+(1-\delta)^2}{\rho(s)} + (\lambda_1+\lambda_2)\right) \label{eq:other-rayleigh}
    \end{align}

    We now compute $\|f_v^{(s)}\|^2$, and we assume $s$ is either $+1$ or $-1$.
    \begin{align*}
        \|f_v^{(s)}\|^2 &= \sum_{t=0}^{\infty} \rho(s)^{2t}\sum_{u:d(u,v)=t} \prod_{\{i,j\}\in\calP_{uv}} (A_X)_{ij}^2\\
        &= \sum_{t=0}^{\infty}  \rho(s)^{2t} c(c-1)^{t-1} (-\lambda_1\lambda_2)^t &\text{(by \pref{lem:growth-rate})}\\
        &= \frac{c}{c-1}\sum_{t=0}^\infty \left(\frac{(1-\delta)^{2t}}{(c-1)^t(-\lambda_1\lambda_2)^t}\right) (c-1)^t(-\lambda_1\lambda_2)^t\\
        &= \frac{c}{c-1}\sum_{t=0}^{\infty}(1-\delta)^{2t}\\
        &= \frac{c}{c-1}\cdot\frac{1}{\delta(2-\delta)}
    \end{align*}

    Plugging this back in to \pref{eq:other-rayleigh} gives
    \begin{align*}
        \pref{eq:other-rayleigh} &= \delta(2-\delta)(c-1)(-\lambda_1\lambda_2)\rho(s) + \left(\frac{1+(1-\delta)^2}{\rho(s)}+(\lambda_1+\lambda_2)\right)\cdot\left(1-\frac{(c-1)\delta(2-\delta)}{c}\right)\\
        &= \delta(2-\delta)s(1-\delta)\sqrt{(c-1)(-\lambda_1\lambda_2)} +\\ &\left((1+(1-\delta)^2)\sqrt{(c-1)(-\lambda_1\lambda_2)}\frac{1}{s(1-\delta)}+(\lambda_1+\lambda_2)\right)\cdot\left(1-\frac{(c-1)\delta(2-\delta)}{c}\right)
    \end{align*}
    For any $\eps > 0$, we can choose $\delta$ small enough so that the above quantity is at least
    \[\lambda_1+\lambda_2+2\sqrt{(c-1)(-\lambda_1\lambda_2)}-\eps\] when $s = 1$ and at most
    \[\lambda_1+\lambda_2-2\sqrt{(c-1)(-\lambda_1\lambda_2)}+\eps\] when $s = -1$.

\end{proof}

\subsection{SDP solution for random additive lifts}

For $\eps>0$, consider $f_v^{(1)}$ constructed in the proof of \pref{lem:witness-vectors}, for which
\[
    \langle f_v^{(1)}, A_X f_v^{(1)}\rangle \ge (\lambda_1+\lambda_2+r_X-\eps)\|f_v^{(1)}\|^2/
\]
Let $L_\eps$ be an integer chosen such that the total $\ell_2$ mass of $\frac{f_v^{(1)}}{\|f_v^{(1)}\|}$ on vertices at distance greater than $L$ from $v$ is at most $\eps$.  Define $g_v$ as the vector obtained by zeroing out $\frac{f_v^{(1)}}{\|f_v^{(1)}\|}$ on vertices outside $B(v,L)$ and normalizing to make its norm 1, where $B(v,L)$ is the collection of vertices within distance $L$ of $v$.

For any $\eps'>0$, we can choose $\eps$ so that
\begin{align}
    \langle g_v, A_Xg_v\rangle \ge \lambda_1+\lambda_2+r_X-\eps' \label{eq:SDP-val}
\end{align}
$g_v$ enjoys the property of being determined by a constant number of vertices, $L_{\eps'}$.  For any instance graph $G$ such that there is a unique shortest nomadic walk between any pair of vertices $u$ and $v$, we can explicitly define
\[
    g_v(u) =
    \begin{cases}
        0 &\text{if $d(u,v)>L_{\eps'}$}\\
        C\prod\limits_{\{i,j\}\in\calP_{uv}}\frac{(1-\delta)(A_X)_{ij}}{\sqrt{(c-1)(-\lambda_1\lambda_2)}}&\text{$\calP_{uv}$ unique shortest nomadic walk from $u$ to $v$}
    \end{cases}
\]
where $C$ is a constant chosen so that $g_v$ has unit norm.

Recall that $\binstgraph_n$ is a random signed additive $n$-lift obtained from a sequence of atoms $\calA$.

\begin{definition}
    Let $G$ be a graph and let $\phi:E(G)\rightarrow\{\pm 1\}$ be a signing of the edges. We call a signing $\phi$ \textit{balanced} if for any cycle given by sequence of edges $e_1,\ldots,e_k$ in $E(H)$, we have $\phi(e_1)\cdots\phi(e_k)=1$.    
\end{definition}

We use $A_{\phi(G)}$ to denote the adjacency operator of $G$ signed with respect to $\phi$ --- i.e. $(A_{\phi(G)})_{uv}=\phi(\{u,v\})$
if $\{u,v\}$ is an edge and 0 otherwise.

\begin{lemma}\label{lem:achieve_balanced}
Suppose $\phi$ is a balanced signing of $G$. Then there exists a diagonal sign operator $D$ such that $A_{\phi(G)} = DA_GD^{\dagger}$.
\end{lemma}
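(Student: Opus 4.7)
The plan is to construct the diagonal sign operator $D$ vertex-by-vertex, propagating signs along a spanning forest and then verifying consistency on non-tree edges via the balanced hypothesis.

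First I would reduce to the connected case: if $G$ has multiple connected components, define $D$ separately on each component and take the direct sum, so without loss of generality assume $G$ is connected. Fix an arbitrary root vertex $r\in V(G)$ and set $D_{rr}=+1$. For any other vertex $v$, choose an arbitrary path $r=v_0,v_1,\ldots,v_k=v$ in $G$ and define
\[
D_{vv} \;:=\; \prod_{i=1}^{k} \phi(\{v_{i-1},v_i\}).
\]
The main content of the proof is to check this is well-defined, i.e.\ independent of the path chosen. Given two paths $P_1,P_2$ from $r$ to $v$, concatenating $P_1$ with the reverse of $P_2$ yields a closed walk. Decomposing this walk into simple cycles (together with edges traversed an even number of times in opposite directions, which contribute $\phi(e)^2=1$ to the product), the balanced hypothesis says every simple cycle contributes $+1$. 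Hence the two path products agree, and $D_{vv}$ is well-defined.

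Once $D$ is defined, verifying $A_{\phi(G)} = D A_G D^{\dagger}$ is a direct computation. For $\{u,v\}\in E(G)$, take a path $P$ from $r$ to $u$; then $P$ followed by the edge $(u,v)$ is a path from $r$ to $v$, giving $D_{vv} = D_{uu}\cdot\phi(\{u,v\})$, hence $D_{uu}D_{vv} = \phi(\{u,v\})$ since $D_{uu}^2=1$. For non-edges, both sides are zero. Thus $(DA_G D^{\dagger})_{uv} = D_{uu}(A_G)_{uv}D_{vv}$ equals $\phi(\{u,v\})$ on edges and $0$ elsewhere, which is exactly $A_{\phi(G)}$.

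The only mildly subtle step is the well-definedness argument, and even there the obstacle is purely bookkeeping: one must handle the decomposition of an arbitrary closed walk into simple cycles carefully, noting that repeated edges traversed in opposite directions contribute $\phi(e)^2 = 1$ and so do not affect the sign product. Everything else is a one-line verification, and the lemma follows.
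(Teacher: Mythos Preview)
Your proof is correct and follows essentially the same approach as the paper. The only cosmetic difference is that the paper fixes a spanning tree and defines $D_{xx}$ via the unique tree path from the root (so well-definedness is automatic, and the balanced hypothesis is invoked only when verifying $A_{\phi(G)}=DA_GD^{\dagger}$ on non-tree edges), whereas you allow arbitrary paths and spend the balanced hypothesis on the well-definedness step instead; the two arrangements are interchangeable.
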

\begin{proof}
Without loss of generality, assume $G$ is connected.  Take a spanning tree of $G$ and root it at some arbitrary vertex $r$.  Let $D_{rr} = 1$ and for $P_x$ a path from $r$ to $x$ let $D_{xx} = \prod_{e\in P_x}\phi(e)$.

It remains to verify that $DA_GD^{\dagger}=A_{\phi(G)}$.  Let $P$ be the path between $x$ and $y$ in the spanning tree.  By virtue of $\phi$ being balanced, we have $\phi(\{x,y\})\prod_{e\in P}\phi(e)=1$, which means $\phi(\{x,y\})=\prod_{e\in P}\phi(e)$.  Also, note that $\prod_{e\in P}\phi(e)$ is equal to $\prod_{e\in P_x}\phi(e)\prod_{e\in P_y}\phi(e)$, which is equal to $D_{xx}D_{yy}$.  Thus,
\[
    (A_{\phi(G)})_{ij} = \phi(\{i,j\})(A_G)_{ij} = D_{ii}D_{jj} (A_G)_{ij} = \left(DA_GD^{\dagger}\right)_{ij}
\]
which proves the claim.
\end{proof}

\begin{lemma}
    Let $X_D$ be the graph with the adjacency operator $DA_XD^{\dagger}$ where $D$ is a diagonal sign matrix.  There exists $D$ such that $X_D$ covers $\binstgraph_n$.
\end{lemma}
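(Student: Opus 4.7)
My plan is to construct $D$ by pulling back the random negation signs of $\binstgraph_n$ along the natural covering map $\pi : X \to \binstgraph_n$, and then invoking \pref{lem:achieve_balanced}. The $\infty$-lifted constraint graph underlying $X$ is the universal cover of the finite constraint graph underlying $\binstgraph_n$, and this covering of constraint graphs extends to a covering $\pi$ of the instance graphs at the level of vertices and edges, since the atom placement is identical at corresponding constraint vertices on both sides. A priori, however, $\pi$ does not preserve edge weights, because of the random negations present in $\binstgraph_n$ but not in $X$.

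Next I would define a signing $\phi$ on the edges of $X$ by pulling back the negation pattern: if $e = \{u,v\}$ is an edge of $X$ lying in the atom copy at constraint vertex $f$, with image $\pi(f)$ in $\binstgraph_n$ labeled by signs $\xi^{(\pi(f))}_1, \dots, \xi^{(\pi(f))}_r$, and if $u$ and $v$ play the roles of variables $i$ and $i'$ within that atom, then set
\[
    \phi(e) \;:=\; \xi^{(\pi(f))}_i \, \xi^{(\pi(f))}_{i'}.
\]
Within any single atom copy this $\phi$ is balanced on every cycle, because the product along $i_1 \to i_2 \to \cdots \to i_k \to i_1$ telescopes to $\prod_j (\xi^{(\pi(f))}_{i_j})^2 = 1$.

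The main step, and what I expect to be the only real obstacle, is upgrading this per-atom balancedness to global balancedness of $\phi$ on $X$. I claim that every simple cycle in $X$ lies inside a single atom copy. Since the constraint graph underlying $X$ is a tree, any two distinct constraint vertices in it share at most one common variable neighbor --- two common neighbors would immediately yield a 4-cycle in that tree. Hence any two atom copies of $X$ meet in at most one vertex, so $X$ has the structure of a ``tree of atoms'' in which every block is contained within a single atom copy. Every simple cycle is therefore confined to one block, hence to one atom, and combined with the per-atom argument above, $\phi$ is globally balanced on $X$.

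Finally, \pref{lem:achieve_balanced} applied to the underlying unsigned graph of $X$ with signing $\phi$ produces a diagonal sign matrix $D$ satisfying $D_{uu} D_{vv} = \phi(\{u,v\})$ on every edge. Then for each edge of $X$,
\[
    (D A_X D^{\dagger})_{uv} \;=\; D_{uu}\, D_{vv}\, A_X[u,v] \;=\; \xi^{(\pi(f))}_i\, \xi^{(\pi(f))}_{i'}\, A_X[u,v] \;=\; (A_{\binstgraph_n})_{\pi(u)\pi(v)},
\]
which shows that $\pi$ is a weight-preserving covering of $X_D$ onto $\binstgraph_n$, completing the proof.
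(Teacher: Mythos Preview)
Your proposal is correct and follows essentially the same approach as the paper: pull back the random negation signs along the covering $\pi$ from the infinite additive product to the finite instance, observe that this yields a balanced signing of $X$, and invoke \pref{lem:achieve_balanced} to produce~$D$. The paper's proof is considerably terser---it simply asserts that the pulled-back signing is balanced---whereas you supply the missing justification by showing that every simple cycle of $X$ lies inside a single atom copy (using that the underlying constraint graph is a tree, so distinct atom copies meet in at most one vertex and hence are separated by cut vertices). This extra detail is a genuine improvement in rigor over the paper's one-line claim.
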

\begin{proof}
    When $\binstgraph_n$ is generated, (i) the sequence of atoms $\calA$ first undergoes an additive $n$-lift, and then, (ii) the atoms in the lifted graph are given a random balanced signing.  The intermediate graph $\widetilde{\binstgraph}_n$ between (i) and (ii) is covered by $X$ via a map $\pi:V(X)\to V(\widetilde{\binstgraph}_n)$.  Once (ii) is performed, construct $X'$ by taking $X$ and setting the signs on all edges in $\pi^{-1}(e)$ to the sign on $e$ for each $e\in E(\binstgraph_n)$.  $X'$ can be seen as a balanced signing applied on $X$, and hence there exists such a $D$ by \pref{lem:achieve_balanced}.
\end{proof}

\begin{definition}  \label{def:L-bad}
    Let $\pi$ be a covering map from appropriate $X_D$ to $\binstgraph_n$.  Call a vertex $v\in V(\binstgraph_n)$ \emph{$L$-bad} if $B(v, L)$ is not isomorphic to $B(v^*, L)$ where $v^*\in V(X_D)$ is such that $\pi(v^*) = v$.
\end{definition}

\begin{remark}  \label{rem:bad-equivalence}
    The condition of a vertex $v$ in $V(\binstgraph_n)$ being $L$-bad according to \pref{def:L-bad} is equivalent to the corresponding variable $v'$ in the constraint graph having a cycle in its distance $2L$-neighborhood.
\end{remark}

With the observation of \pref{rem:bad-equivalence} in hand, we can extract the following as a consequence of \cite{deshpande2018threshold}.
\begin{lemma}   \label{lem:few-bad}
    The number of $K$-bad vertices in graph $\binstgraph_n$ for constant $K$ is bounded by $O(\log n)$ with probability $1-o_n(1)$.
\end{lemma}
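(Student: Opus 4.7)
}

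The plan is to reduce the counting of $K$-bad vertices to counting short cycles in the random constraint graph $\bcongraph$, and then apply a standard first-moment argument on random lifts. By \pref{rem:bad-equivalence}, a vertex $v \in V(\binstgraph_n)$ is $K$-bad precisely when the distance-$2K$ ball around the corresponding variable vertex $v'$ in $\bcongraph$ contains a cycle. Since the constraint graph is the $n$-lift of the fixed bipartite graph $K_{r,c}$, every vertex has bounded degree (at most $\max(r,c)$), so only cycles of length at most some constant $\ell_0 = \ell_0(K,r,c)$ (essentially $4K + O(1)$) can sit inside such a ball. Consequently, it suffices to show that the total number of vertices of $\bcongraph$ lying within distance $2K$ of some cycle of length at most $\ell_0$ is $O(\log n)$ whp.

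Next, I would estimate the expected number of short cycles in $\bcongraph$ via the standard random-lift computation. For each fixed cycle ``shape'' of length $\ell \le \ell_0$ in the base graph $K_{r,c}$ (there are $O(1)$ such shapes), and each choice of $\ell$ fiber-vertices above its vertices, the probability that the corresponding edges all align to form an actual cycle in the lift is $O(n^{-\ell})$, while the number of such choices is $O(n^{\ell})$. Summing over $\ell \le \ell_0$ yields an expected number of short cycles of $O(1)$. Since each short cycle has at most $O(\ell_0 \cdot \max(r,c)^{2K}) = O(1)$ vertices within distance $2K$, the expected total number of $K$-bad vertices is $O(1)$.

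Finally, Markov's inequality upgrades this first-moment bound to the desired high-probability statement: the probability that the number of $K$-bad vertices exceeds $\log n$ is at most $O(1)/\log n = o_n(1)$. This is precisely the statement extracted from \cite{deshpande2018threshold}, where the same random-lift analysis is carried out in the analogous $(b,c)$-biregular setting for $\naethree$; the only change is that here the base graph is $K_{r,c}$ rather than $K_{3,c}$, which affects only hidden constants.

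The only mild subtlety is making sure one counts \emph{vertices} near a short cycle rather than short cycles themselves (so as to give each bad vertex the same weight in expectation), and making the constant $\ell_0$ explicit enough that the $O(1)$ expectation depends only on $K$, $r$, $c$, and the atoms; neither presents a real obstacle, as both follow from the bounded degree of $\bcongraph$. No deeper concentration machinery is required, since Markov already yields $o_n(1)$ failure probability against the threshold $\log n$.
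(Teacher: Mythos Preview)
Your proposal is correct and matches the paper's treatment: the paper does not give an independent proof of \pref{lem:few-bad} but simply invokes \pref{rem:bad-equivalence} and cites \cite{deshpande2018threshold}, and your sketch is exactly the standard first-moment-plus-Markov argument underlying that citation, adapted from the $K_{3,c}$ base graph to the general $K_{r,c}$ case. The only thing worth tightening is the bound on $\ell_0$: rather than ``$4K+O(1)$'', the cleanest justification is that $B(v',2K)$ has at most $(\max(r,c))^{2K+1}$ vertices by bounded degree, so any cycle contained in it has length at most that constant; everything else in your outline goes through as written.
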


Construct a vector $\widetilde{g}_v$ for each vertex $v$ of $\binstgraph_n$.
\[
    \widetilde{g}_v =
    \begin{cases}
        e_v &\text{if $v$ is $L_{\eps'}$-bad}\\
        g_v &\text{otherwise}
    \end{cases}
\]

We are finally ready to prove \pref{thm:SDP-sol}.
\begin{proof}[Proof of \pref{thm:SDP-sol}]
    Let
    \[
        M_+ := \sum_{v\in V(\binstgraph_n)} \wt{g}_v\wt{g}_v^{\dagger}
    \]
    Writing out $(M_+)_{uu}$ for arbitrary $u$
    \begin{align*}
        (M_+)_{uu} &= \sum_{v\in V(\binstgraph_n)} \wt{g}_v(u)\wt{g}_v(u)\\
        &= \sum_{v\in V(\binstgraph_n)} \wt{g}_u(v)^2\\
        &= \|\wt{g}_u\|^2 = 1
    \end{align*}
    and writing out $\langle A_{\binstgraph_n}, M_+\rangle$ gives the following with probability $1-o_n(1)$.
    \begin{align*}
        \langle A_{\binstgraph_n}, M_+\rangle &= \sum_{v\in V(\binstgraph_n)} \langle \wt{g}_v, A_{\binstgraph_n} \wt{g}_v\rangle\\
        &= \sum_{\substack{v\in V(\binstgraph_n)\\ \text{$v$ is not $(L_{\eps}+1)$-bad}}} \langle \wt{g}_v, A_{\binstgraph_n} \wt{g}_v \rangle + \sum_{\substack{v\in V(\binstgraph_n)\\ \text{$v$ is $(L_{\eps}+1)$-bad}}} \langle \wt{g}_v, A_{\binstgraph_n} \wt{g}_v\rangle \\
        &\ge \sum_{\substack{v\in V(\binstgraph_n)\\ \text{$v$ is not $(L_{\eps}+1)$-bad}}} \lambda_1+\lambda_2+r_X-\eps' + \sum_{\substack{v\in V(\binstgraph_n)\\ \text{$v$ is $(L_{\eps}+1)$-bad}}} c(\lambda_1\lambda_2) &\text{(by \pref{eq:SDP-val})}\\
        &\ge (n-O(\log n))(\lambda_1+\lambda_2+r_X-\eps') - O(\log n) &\text{(by \pref{lem:few-bad})}\\
        &= (1-o_n(1))(\lambda_1+\lambda_2+r_X-\eps')n
    \end{align*}
    The desired inequality on $\langle A_{\binstgraph_n}, M_+\rangle$ can be obtained by choosing $\eps'$ small enough and $n$ large enough.  The inequality on $\langle A_{\binstgraph_n}, M_-\rangle$ can be proved by repeating the whole section and proof by constructing vectors $\wt{g}_v$ from $f_v^{(-1)}$.
\end{proof}

\section{Friedman/Bordenave for additive lifts} \label{sec:bordenave}

\begin{theorem}                                     \label{thm:bordenave1}
    Let $\atomlist = (A_1, \dots, A_c)$ be a sequence of $r$-vertex atoms with edges weights~$\pm 1$.  Let $|\instgraph_1|$  denote the instance graph $\atomlist(K_{r,c})$ associated to the base constraint graph when the edge-signs are deleted (i.e., converted to~$+1$), and let $|B_1|$ denote the associated nomadic walk matrix.  Also, let $\bcongraph_n$ denote a random $n$-lifted constraint graph and $\binstgraph_n = \atomlist(\bcongraph_n)$ an associated instance graph with $1$-wise uniform negations~$(\bxi^f_{ii'})$.  Finally, let $\bB_n$ denote the nomadic walk matrix for~$\binstgraph_n$.  Then for every constant $\eps > 0$,
    \[
        \Pr[\rho(\bB_n) \geq \sqrt{\rho(|B_1|)} + \eps] \leq \delta,
    \]
    where $\delta = \delta(n)$ is $o_{n \to \infty}(1)$. 
\end{theorem}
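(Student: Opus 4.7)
The plan is to adapt Bordenave's moment-method proof of Friedman's Theorem to our nomadic-walk setting. Starting from the spectral-radius bound
\[
    \rho(\bB_n)^{2k} \;\leq\; \mathrm{tr}\bigl(\bB_n^{k} (\bB_n^*)^{k}\bigr),
\]
I would choose $k = k(n)$ growing like a small power of $\log n$, and try to show that the expectation is bounded by $n \cdot \rho(|B_1|)^k \cdot k^{O(1)}$. A Markov inequality together with taking $2k$-th roots would then give $\rho(\bB_n) \leq \sqrt{\rho(|B_1|)} + o_n(1)$ with probability $1 - o_n(1)$, as required.

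Expanding the trace above gives a sum over pairs $(\gamma_1,\gamma_2)$ of length-$k$ nomadic walks in $\binstgraph_n$ sharing their initial and final directed edges, weighted by the product of the edge-weights along both walks. Each such walk projects to a nonbacktracking walk in the random $n$-lift constraint graph $\bcongraph_n$, and the $1$-wise uniform negations enter as $\pm 1$ monomials in the $\bxi^{(f)}_i$. Taking expectations kills every pair whose combined negation-monomial is nontrivial, leaving only those $(\gamma_1,\gamma_2)$ for which each variable-coordinate of each visited constraint $f$ is incident to an even number of traversed atom-edges. Conditionally on $\bcongraph_n$, this parity restriction dramatically prunes the collection of contributing walk-pairs.

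The next step is Bordenave's \emph{tangle-free} decomposition. Call a nomadic walk tangle-free if its projection to $\bcongraph_n$ visits a subgraph containing at most one cycle. Since a uniformly random $n$-lift of $K_{r,c}$ has girth $\Omega(\log n)$ whp, for $k$ not too large essentially every contributing walk-pair is tangle-free; one writes the telescoping identity
\[
    \bB_n^{k} \;=\; \bB_n^{(k)} \;+\; \sum_{\ell=0}^{k-1} \bB_n^{(k-\ell-1)} \, \bR_n \, \bB_n^{\ell},
\]
where $\bB_n^{(k)}$ counts only tangle-free walks of length $k$ and $\bR_n$ records the first tangle. The main term $\|\bB_n^{(k)}\|_{\mathrm{op}}^2$ is bounded by a moment computation over tangle-free pairs, each of which essentially lifts to a pair of nomadic walks in the additive product $X = A_1 \addprod \cdots \addprod A_c$; the growth-rate calculation in \pref{lem:growth-rate} combined with the operator bound in \pref{lem:infinite-nomadic-bound} yields the expected $\rho(|B_1|)^k$-scale estimate once the atomic algebra is handled. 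The remainder is controlled by a union bound over the few possible tangle positions, using a trivial operator-norm estimate on each $\bB_n^{\ell}$ factor.

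The main obstacle, and the reason Bordenave's original analysis cannot be invoked as a black box, is that here the universal cover $X$ is \emph{not} a tree; it contains nonbacktracking cycles within each atom, so even in $X$ itself the combinatorics of length-$k$ nomadic walks is nontrivial. This is where the two-eigenvalue hypothesis plays a decisive role: via \pref{fact:diag-square} and \pref{fact:off-diag-square}, within-atom substructures collapse into explicit algebraic contributions, mirroring exactly the recurrence underlying the nomadic polynomials $p_k$ from \pref{sec:ihara}. Extending Bordenave's delicate path-surgery and word-encoding arguments to this lifted and atomically enriched setting, while simultaneously tracking constraint-graph combinatorics and the cancellations coming from the two-eigenvalue identities, is where the technical heart of the proof will live.
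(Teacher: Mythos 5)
Your proposal has the right broad ingredients (a Friedman/Bordenave trace method, killing walk-pairs via the $1$-wise uniform negations, tangle-freeness, and a path encoding at the end), but the trace setup you start from has a scale conflict that the actual argument is built to avoid. You propose $\rho(\bB_n)^{2k}\leq \tr(\bB_n^{k}(\bB_n^*)^{k})$ with a single length parameter $k=\mathrm{polylog}(n)$. To beat the $n$-factor that comes from both the sum over starting edges and from Markov's inequality, you need $k=\omega(\log n)$; but the lifted constraint graph is only tangle-free at radius $O(\log n)$, so length-$k$ walks of this size wander past many cycles, and the telescoping identity you write (which is stated at the single length scale $k$) does not tame the remainder with a ``trivial operator-norm estimate.'' The paper resolves this by working with \emph{two} scales: $\ell=\kappa\log n$ (within the tangle-free radius) and a slowly growing $k=\omega_n(1)$, and considering $\tr\bigl((\bB_n^{\ell}{\bB_n^*}^{\ell})^k\bigr)$, a total length $2k\ell=\omega(\log n)$ assembled from $2k$ segments of length $\ell$ each. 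Moreover the paper's route is simpler than the Bordenave telescoping you sketch: it shows that whp \emph{every} vertex of $\bcongraph_n$ is $\ell$-tangle-free (\pref{prop:tanglefree}), so on that event $\bB_n^{\ell}$ already equals the selective operator $\bB_n^{(\ell)}$ and there is no remainder to bound; nonnegativity for Markov is obtained by rewriting the restricted sum as $\tr\bigl((\bB_n^{(\ell)}(\bB_n^{(\ell)})^*)^k\bigr)$ (\pref{claim:S-nonneg}).

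The second genuine gap is your claim that the surviving tangle-free pairs ``essentially lift to pairs of nomadic walks in the additive product $X$.'' This is not the right picture and would not produce the $\sqrt{\rho(|B_1|)}$ rate. Singleton-freeness forces the contributing walks to re-traverse nearly every constraint vertex, so the traced subgraph has roughly $k\ell$ (not $2k\ell$) vertices, and tangle-free neighborhoods may still contain a cycle; such objects do not lift into the cycle-free, non-repeating $X$. Correspondingly, the constants you cite (\pref{lem:growth-rate}, \pref{lem:infinite-nomadic-bound}, both about the infinite $X$) play no role in the paper's proof of this theorem. The number $\rho(|B_1|)$ enters instead via counting nomadic walks in the \emph{finite base instance graph} $|\calI_1|$: the paper bounds the number of length-$s$ nomadic walks by $\|(|B_1|)^s\|_1\lesssim\hat\rho^{\,s}$ for any $\hat\rho>\rho(|B_1|)$, and this bound, applied inside a careful encoding of linkages by fresh/boundary/stale blocks (\pref{lem:fb}) together with the near-duplicativity from \pref{cor:dupe}, is what produces the final $\hat\rho^{f/2}$ factor and hence the square root. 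The encoding is not a detail one can defer; it is where the quantitative statement is actually proved.
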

\begin{remark}
    It might seem that our bound involving $|B_1|$ may be poor, given that it ignores sign information from the atoms.  However, it is in fact sharp, and the reason is that the main contribution to $\rho(\bB_n)$ when using the Trace Method is from walks in which almost all edges are traversed twice.  And if an edge is traversed twice, it of course does not matter if its sign is~$-1$ or~$+1$. 
\end{remark}
\begin{remark}
    In fact, it is evident from the theorem statement that without loss of generality we may assume that the atoms are unweighted --- i.e., that all weights are~$+1$.  The reason is that for each constraint~$f$ in group~$j$, if we multiply $\bxi^f_{ii'}$ by the fixed value $A_j[i,i']$, the resulting signs remain $1$-wise uniform --- and this has the effect of eliminating all signs from the atoms.  Thus henceforth we will indeed assume that the original atoms are all unweighted.
\end{remark}

\paragraph{The idea of Friedman/Bordenave proofs.} The standard method for trying to prove a theorem such as \pref{thm:bordenave1} involves applying the Trace Method to~$\bB_n$.  Since $\bB_n$ is not a self-adjoint operator, a natural way to do this is to consider $\tr(\bB_n^\ell {\bB_n^*}^\ell)$ for some large~$\ell$.  Roughly speaking, this counts the number of closed walks that walk nomadically in~$\binstgraph_n$ for the first~$\ell$ steps, and then walk nomadically in the \emph{reverse} of~$\binstgraph_n$ for the next~$\ell$ steps.  A major difficulty is the following: the Trace Method naturally incurs an ``extra'' factor of~$n$, and to overcome this one wants to choose $\ell \gg \log n$.  However, $\Theta(\log n)$ is precisely the radius at which random constraint graphs become dramatically non-tree-like; i.e., they are likely to encounter nontrivial cycles.  Based on Friedman's work, Bordenave overcomes this difficulty as follows:  First, $\ell$ is set to $c \log n$ for some small positive constant $c > 0$.  Nomadic walks of this length may well encounter cycles, but one can show that with high probability, they will not encounter \emph{tangles} --- meaning, \emph{more than one} cycle in a radius of~$\ell$.  (This crucial concept of ``tangles'' was isolated by Friedman and refined by Bordenave.)  Now we set $k = \omega_n(1)$ to be a slowly growing quantity and consider length-$2k \ell$ walks formed by doing $\ell$ nomadic steps, then $\ell$ nomadic reverse-steps, all $k$ times in succession.  In other words, we consider $\tr((\bB_n^\ell {\bB_n^*}^\ell)^k)$.  On one hand, since $2k\ell \gg \log n$, bounding this quantity will be sufficient to overcome the $n$-factor inherent in the Trace Method.  On the other hand, using tangle-freeness at radius~$\ell$ along with very careful combinatorial counting allows us to bound the number of closed length-$2k\ell$ walks.

Our proof follows this methodology and draws ideas from Bordenave's original proof from \cite{bordenave2015new} as well as \cite{deshpande2018threshold} and \cite{brito2018spectral}. However, our main technical lemma, \pref{lem:fb}, uses a new tool that takes advantage of the random negations our model employs that simplifies the equivalent proofs in the three mentioned papers and also allows us to generalize it to our model.

\subsection{Trace Method setup, and getting rid of tangles}
To begin carrying out this proof strategy, we first define tangle-freeness.
\begin{definition}[Tangles-free]
    Let $G$ be an undirected graph.  A vertex $v$ is said to be \emph{$\ell$-tangle-free within $G$} if the subgraph of~$G$ induced by~$v$'s distance-$4\ell$ neighborhood contains at most one cycle.\footnote{We chose the factor $4$ here for ``safety''. For quantitative aspects of our theorem, constant factors on~$\ell$ will be essentially costless.}
\end{definition}
It is straightforward to show that random lifts have all vertices $\Theta(\log n)$-tangle-free; we can quote the relevant result directly from Bordenave~(Lemma 27 from \cite{bordenave2015new}):
\begin{proposition}                                     \label{prop:tanglefree}
    There is a universal constant $\kappa > 0$ depending only on $r$, $c$ such that, for $\ell  = \kappa \log n$, a random $n$-lift $\bcongraph$ of $K_{r,c}$ has all vertices $\ell$-tangle free, except with probability~$O(1/n^{.99})$.
\end{proposition}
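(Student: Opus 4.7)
The plan is to carry out the standard first-moment argument for random lifts, essentially reducing the statement to counting expected pairs of short cycles incident to a common vertex. Fix a vertex $v$ of $\bcongraph$, and call a subgraph an \emph{excess-$\geq 2$ configuration at $v$} if it is connected, contained in the radius-$4\ell$ ball around $v$, and has at least two independent cycles (equivalently, at least two more edges than its spanning tree). A vertex $v$ fails to be $\ell$-tangle-free precisely when such a configuration exists at $v$, so it suffices to upper bound the expected number of such configurations and apply Markov/union bound.

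First I would enumerate possible configurations by their ``skeleton'': a connected multigraph $H$ with $m$ vertices and $m+1$ edges (two independent cycles), together with an embedding into $\bcongraph$ that places $H$ inside $B(v, 4\ell)$. Since the base graph $K_{r,c}$ has maximum degree $D := \max(r,c)$, the number of rooted walks of length $L$ in $K_{r,c}$ is at most $D^L$, and therefore the number of skeleton types $H$ with at most $m = O(\ell)$ vertices, together with a valid labelling by variable/constraint groups, is bounded by $\operatorname{poly}(\ell) \cdot D^{O(\ell)}$.

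Next I would use the standard fact about random $n$-lifts: the probability that a fixed set of $k$ base-edge copies are all simultaneously realized in $\bcongraph$, as a specified embedded subgraph on $m$ lifted vertices, is at most $(1/n)^{k-m}$ (each ``surplus'' edge beyond a spanning tree costs an independent factor of $1/n$ in expectation, by the standard permutation-model analysis). An excess-$\geq 2$ configuration on $m$ lifted vertices uses at least $m+1$ edges, so the probability of any particular one being realized is at most $n^{-2}$. Multiplying by the number of configurations gives expectation at most $n \cdot \operatorname{poly}(\ell) \cdot D^{O(\ell)} \cdot n^{-2} = \operatorname{poly}(\log n) \cdot n^{O(\kappa) - 1}$, where I use $\ell = \kappa \log n$. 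Choosing $\kappa$ small enough so that the constant hidden in $O(\kappa)$ is below $0.01$ yields a bound of $O(n^{-0.99})$ on the expected number of bad vertices, which by Markov's inequality gives the probability bound.

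The main obstacle is getting the constants in the $D^{O(\ell)}$ and the exponent $O(\kappa)$ exactly right; this requires a careful combinatorial enumeration of skeletons (distinguishing the single-cycle case, the ``theta'' case of two cycles sharing a path, and the ``figure-eight'' case of two cycles sharing a single vertex or being joined by a path), together with the standard moment computation in the permutation model of random lifts. Since this content is proved verbatim in \cite[Lemma 27]{bordenave2015new}, I would simply cite it after sketching the outline above rather than redo the bookkeeping.
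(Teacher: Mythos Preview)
Your proposal is correct and matches the paper's approach: the paper does not prove this proposition at all but simply quotes it from Bordenave~\cite[Lemma~27]{bordenave2015new}, exactly as you end up doing. Your additional first-moment sketch is the standard argument behind that lemma; note only a small bookkeeping slip (with a fixed root the exponent should be $k-(m-1)$ rather than $k-m$, which is what actually yields the $n^{-2}$ you then use), but this does not affect the conclusion.
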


We now begin the application of the Trace Method. We have:
\begin{gather}
\begin{aligned}
    \tr((\bB_n^\ell {\bB_n^*}^\ell)^k)
    &= \sum_{\vec{e}_0, \dots, \vec{e}_{2k\ell-1}, \vec{e}_{2k\ell} = \vec{e}_0} \bB_n[\vec{e}_0, \vec{e}_1] \cdots 
    \bB_n[\vec{e}_{\ell-1}, \vec{e}_{\ell}] \bB_n^*[\vec{e}_{\ell}, \vec{e}_{\ell+1}] \cdots \bB_n^*[\vec{e}_{2\ell-1}, \vec{e}_{2\ell}] \cdots \bB_n^*[\vec{e}_{2k\ell-1}, \vec{e}_{2k\ell}] \\
    &= \sum_{\vec{e}_0, \dots, \vec{e}_{2k\ell-1}, \vec{e}_{2k\ell} = \vec{e}_0} \bB_n[\vec{e}_0, \vec{e}_1] \cdots
    \bB_n[\vec{e}_{\ell-1}, \vec{e}_{\ell}] \bB_n[\vec{e}_{\ell+1}, \vec{e}_{\ell}] \cdots \bB_n[\vec{e}_{2\ell}, \vec{e}_{2\ell-1}] \cdots \bB_n[\vec{e}_{2k\ell}, \vec{e}_{2k\ell-1}]
\end{aligned} \nonumber\\
= \sum \Wt(e_1) N_{\vec{e}_0, \vec{e}_1} \cdots
    \Wt(e_\ell) N_{\vec{e}_{\ell-1}, \vec{e}_{\ell}} \Wt(e_\ell) N_{\vec{e}_{\ell}^{-1}, \vec{e}_{\ell+1}^{-1}} \cdots \Wt(e_{2\ell-1}) N_{\vec{e}_{2\ell-1}^{-1}, \vec{e}_{2\ell}^{-1}} \cdots \Wt(e_{2k\ell-1})N_{\vec{e}_{2k\ell-1}^{-1}, \vec{e}_{2k\ell}^{-1}}, \label{eqn:trace0}
\end{gather}
where $\Wt(e)$ is the sign on edge~$e$ coming from the random $1$-wise negations (it is the same for both directed versions of the edge), and where $N_{\vec{e}, \vec{f}}$ is an indicator that $(\vec{e}, \vec{f})$ forms a length-$2$ nomadic walk.  Roughly speaking, this quantity counts (with some $\pm 1$ sign) closed walks in $\binstgraph_n$ consisting of $2k$ consecutive nomadic walks of length~$\ell$.  However, there is some funny business concerning the joints between these nomadic walks.  To be more precise, in each of the $2k$ segments we have a nomadic walk of $\ell+1$ edges; and, the last edge in each segment must be the reverse of the first edge in the subsequent segment. We will call these necessarily-duplicated edges ``spurs''. Furthermore, when computing the sign with which the closed walk is counted, spurs' signs are counted either zero times or twice, depending on the parity of the segment.  Hence they are effectively discounted, since $(-1)^2 = (-1)^0 = +1$.  Let us make some definitions encapsulating all of this.
\begin{definition}[Nomadic linkages, and spurs] \label{def:nomlink}
    In an instance graph, a \emph{$(2k \times \ell)$-nomadic linkage}~$\calL$ is the concatenation of $2k$ many nomadic walks (``segments''), each of length~$\ell+1$, in which the last directed edge of each walk is the reverse of first directed edge of the subsequent walk (including wrapping around from the $2k$th segment to the $1$st). These $2k$ directed edges which are necessarily the reverse of the preceding directed edge are termed \emph{spurs}.  The \emph{weight} of $\calL$, denoted $\Wt(\calL)$, is the product of the signs of the non-spur edges in~$\calL$.
\end{definition}
\begin{definition}[Nonbacktracking $\calA$-linkages]
    Recall that, strictly speaking, the nomadic property requires ``remembering'' which atom each edge comes from.  Thus the $\calL$ above is really associated to what we will call a \emph{$(2k \times 2\ell)$-nonbacktracking $\atomlist$-linkage} --- call it $\calC$ --- in the underlying constraint graph.  Formally:
    \begin{itemize}
        \item (``linkage'') $\calC$ is a closed concatenation of $2k$ walks (called ``segments'') in the constraint graph, each consisting of $\ell+1$  length-$2$ variable-constraint-variable subpaths.   The last such length-$2$ subpath in each segment (``spur'') is equal to (the reverse of) the first length-$2$ subpath in the subsequent segment (including wraparound from the $2k$th segment to the $1$st).
        \item (``$\atomlist$-linkage'') For each length-$2$ subpath $(v, f, v')$ in~$\calC$, where $v$ is in variable group~$i$, $f$ is in constraint group~$j$, and $v'$ is in variable group~$i'$, it holds that $\{i,i'\}$ is an edge in~$A_j$.
        \item (``nonbacktracking'') Each of the $2k$ segments is a nonbacktracking walk of length $2(\ell+1)$ in the constraint graph.
    \end{itemize}
    We write $\Wt(\calC) \in \{\pm 1\}$ for the weight of the associated nomadic linkage in the instance graph.
\end{definition}
\noindent Given these definitions, \pref{eqn:trace0} tells us:
\begin{equation}    \label{eqn:traceee}
    \tr((\bB_n^\ell {\bB_n^*}^\ell)^k) = \sum_{\substack{(2k \times 2\ell)\textnormal{-nonbacktracking} \\ \atomlist\textnormal{-}\textnormal{linkages }\calC \textnormal{ in } \bcongraph_n}} \Wt(\calC).
\end{equation}
Next, we make the observation that \emph{if} $\bcongraph_n$ proves to have all vertices $\ell$-tangle-free, then we would get the same result if we only summed over ``externally tangle-free'' linkages.
\begin{definition}[Externally tangle-free linkages]  \label{rem:2k2l-tangle}
    We say that a $(2k \times 2\ell)$-nonbacktracking linkage in a constraint graph $\congraph_n$ is \emph{externally $\ell$-tangle-free} if every vertex it touches is $\ell$-tangle-free within~$\congraph_n$.  (The ``externally'' adjective emphasizes that we are concerned with cycles not just within the linkage's edges, but also among nearby edges of~$\congraph_n$.)
\end{definition}
Thus in light of \pref{prop:tanglefree} we have:
\begin{lemma} \label{lem:trace3}
    Provided $\ell \leq \kappa \log n$ for a certain universal $\kappa > 0$, we get that $\tr((\bB_n^\ell {\bB_n^*}^\ell)^k)  = \bS$ holds except with probability $O(1/n^{.99})$ , where
    \[
        \bS \coloneqq  \sum_{\substack{(2k \times 2\ell)\textnormal{-nonbacktracking} \\ \textnormal{externally $\ell$-tangle-free} \\ \atomlist\textnormal{-}\textnormal{linkages } \calC \textnormal{ in } \bcongraph_n}} \Wt(\calC).
    \]
\end{lemma}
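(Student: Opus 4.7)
The plan is to deduce the lemma almost immediately by combining \pref{eqn:traceee} with \pref{prop:tanglefree}. The identity \pref{eqn:traceee} already furnishes an \emph{unconditional} expansion
\[
\tr((\bB_n^\ell {\bB_n^*}^\ell)^k) \;=\; \sum_{\calC} \Wt(\calC),
\]
where the sum ranges over all $(2k\times 2\ell)$-nonbacktracking $\atomlist$-linkages $\calC$ in $\bcongraph_n$, with no tangle-freeness restriction imposed. The only remaining task is therefore to show that, with probability $1 - O(1/n^{0.99})$, restricting to externally $\ell$-tangle-free linkages does not change the value of this sum.

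The argument is a one-line conditioning step. Let $\calE$ be the event that every vertex of $\bcongraph_n$ is $\ell$-tangle-free within $\bcongraph_n$. Choosing $\ell \le \kappa \log n$ for the universal constant $\kappa$ supplied by \pref{prop:tanglefree}, we have $\Pr[\calE^c] = O(1/n^{0.99})$. On $\calE$, the defining property of an externally $\ell$-tangle-free linkage (\pref{rem:2k2l-tangle}) --- namely, that every vertex it touches is $\ell$-tangle-free within $\bcongraph_n$ --- is satisfied automatically by every linkage, since \emph{every} vertex of $\bcongraph_n$ is. Hence on $\calE$ the unrestricted sum equals $\bS$, which gives $\tr((\bB_n^\ell {\bB_n^*}^\ell)^k) = \bS$. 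A union bound with $\Pr[\calE^c]$ then yields the stated probability.

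There is no real obstacle to this lemma; all the combinatorial work is hidden inside \pref{prop:tanglefree} (which is quoted from Bordenave \cite{bordenave2015new}). The genuine challenge of the overall proof lies further on, in bounding $\bS$ itself via the Trace Method while exploiting $\ell$-tangle-freeness and the $1$-wise uniform negations $(\bxi^f_{ii'})$. That is where one pays the cost for having introduced the external-tangle-free constraint in the first place: the restriction on $\calC$ here is what makes the subsequent combinatorial counting of linkages tractable.
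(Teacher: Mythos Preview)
Your proof is correct and is exactly the argument the paper intends: the lemma is stated immediately after ``Thus in light of \pref{prop:tanglefree} we have:'' with no further proof, and your conditioning on the global tangle-free event~$\calE$ together with \pref{eqn:traceee} is precisely what is meant. There is nothing to add.
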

In order to apply Markov's inequality later, we will need the following technical claim:
\begin{claim}                                       \label{claim:S-nonneg}
    $\bS$ is a nonnegative random variable.
\end{claim}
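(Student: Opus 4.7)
My plan is to realize $\bS$ as the trace of a power of a positive semidefinite operator, from which $\bS \ge 0$ is immediate.  First I will introduce the \emph{externally $\ell$-tangle-free restriction} of the matrix power $\bB_n^\ell$: call it $\wt{\bB}_n^{(\ell)}$, and define its $[\vec e,\vec f]$ entry to be the contribution to $\bB_n^\ell[\vec e,\vec f]$ coming from those length-$\ell$ nomadic walks $\vec e = \vec e_0, \vec e_1, \ldots, \vec e_\ell = \vec f$ all of whose touched vertices are $\ell$-tangle-free in $\bcongraph_n$ (all other contributions being set to $0$).  This depends on the random instance but is fully determined by it.

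Next I will observe that $\wt{\bB}_n^{(\ell)}\bigl(\wt{\bB}_n^{(\ell)}\bigr)^{\!*}$ has the form $AA^*$ and is therefore positive semidefinite; its $k$-th power is then also PSD, so
\[
    \tr\!\Bigl(\bigl(\wt{\bB}_n^{(\ell)}(\wt{\bB}_n^{(\ell)})^{\!*}\bigr)^{\!k}\Bigr)\;\ge\;0.
\]

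Finally I will identify this trace with $\bS$.  Running the very same trace-method expansion that produced \pref{eqn:traceee}, but with every $\bB_n^\ell$-factor replaced by $\wt{\bB}_n^{(\ell)}$, produces the same sum over $(2k\times 2\ell)$-nonbacktracking $\atomlist$-linkages and weights $\Wt(\calC)$ (with the same spur cancellations), now further restricted so that each of the $2k$ segments must correspond to a nonzero entry of $\wt{\bB}_n^{(\ell)}$, i.e.\ to an externally $\ell$-tangle-free segment-walk.  Since the touched vertex set of a linkage is the union of the touched vertex sets of its segments, requiring every segment to be externally $\ell$-tangle-free is the same as requiring the whole linkage to be externally $\ell$-tangle-free (\pref{rem:2k2l-tangle}).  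Thus the trace equals $\sum_{\calC}\Wt(\calC)$ restricted to externally tangle-free linkages $\calC$, which is exactly $\bS$.

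The main obstacle is the bookkeeping in this last identification: I must verify that swapping $\bB_n^\ell \mapsto \wt{\bB}_n^{(\ell)}$ segment-by-segment in the expansion of \pref{eqn:traceee} cleanly selects the externally tangle-free linkages — in particular that the junction double-counting of spur weights continues to yield the ``discounted-spur'' weighting $\Wt(\calC)$ used in \pref{def:nomlink}, and that the sum indeed ranges over each externally tangle-free linkage exactly once.  Once this combinatorial matching is checked, nonnegativity follows from the PSD observation above.
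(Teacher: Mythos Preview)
Your proposal is correct and is essentially identical to the paper's own proof: the paper defines the operator $\bB_n^{(\ell)}$ exactly as your $\wt{\bB}_n^{(\ell)}$ (summing only over externally $\ell$-tangle-free length-$\ell$ nomadic walks), observes that $\bS = \tr\bigl((\bB_n^{(\ell)}(\bB_n^{(\ell)})^*)^k\bigr)$ by the same expansion that yielded \pref{eqn:traceee}, and concludes nonnegativity from positive semidefiniteness. Your additional bookkeeping discussion about spur cancellations and the segment-wise versus linkage-wise tangle-free condition is correct and more explicit than what the paper writes.
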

\begin{proof}
    Given $\binstgraph_n$, recall that
    \[
        \bB_n^\ell[\vec{e}, \vec{f}] = \sum_{\substack{\text{nomadic walks} \\ \vec{e} = \vec{e}_0, \vec{e}_1, \dots, \vec{e}_\ell = \vec{f} \text{ in $\binstgraph_n$}}} \Wt(e_1) \Wt(e_2) \cdots \Wt(e_\ell).
    \]
    Using a key idea of Bordenave (based on the ``selective trace'' of Friedman), define the related operator $\bB_n^{(\ell)}$ via
    \[
        \bB_n^{(\ell)}[\vec{e}, \vec{f}] = \sum_{\substack{\text{\emph{externally $\ell$-tangle-free} nomadic walks} \\ \vec{e} = \vec{e}_0, \vec{e}_1, \dots, \vec{e}_\ell = \vec{f} \text{ in $\binstgraph_n$}}} \Wt(e_1) \Wt(e_2) \cdots \Wt(e_\ell),
    \]
    where again the walk is said to be ``externally $\ell$-tangle-free'' if every vertex it touches is $\ell$-tangle-free with $\bcongraph_n$.  Then very similar to the analysis that gave us \pref{eqn:trace0} and \pref{eqn:traceee}, we get that
    \[
        \bS = \tr((\bB_n^{(\ell)} (\bB_n^{(\ell)})^*)^k).
    \]
    Thus $\bS$ is visibly always nonnegative, being the trace of the $k$th power of the positive semidefinite matrix $\bB_n^{(\ell)} (\bB_n^{(\ell)})^*$.
\end{proof}
With these results in place, we can proceed to the main goal of the Trace Method: bounding $\E[\bS]$.  Such a bound can be used in the following lemma:
\begin{lemma}                                       \label{lem:finish-trace}
    Assume that $\ell \leq \kappa \log n$ and $k \ell = \omega(\log n)$.  Then from $\E[\bS] \leq R$ we may conclude that
    $
        \specrad(\bB_n) \leq (1+o_n(1)) \cdot R^{\frac{1}{2k\ell}}
    $ holds, except with probability $O(1/n^{.99})$.
\end{lemma}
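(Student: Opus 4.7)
The plan is to combine three ingredients already in place: the nonnegativity of $\bS$ (\pref{claim:S-nonneg}), the equality $\tr((\bB_n^\ell {\bB_n^*}^\ell)^k) = \bS$ that holds with probability $1 - O(1/n^{.99})$ (\pref{lem:trace3}), and a standard inequality relating the trace of a high power of a PSD operator to its spectral radius.

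First, I would apply Markov's inequality to the nonnegative random variable $\bS$: under $\E[\bS] \leq R$, we get $\Pr[\bS \geq n^{0.001}\cdot R] \leq n^{-0.001}$. A union bound with \pref{lem:trace3} and \pref{prop:tanglefree} then gives
\[
    \tr\!\left((\bB_n^\ell {\bB_n^*}^\ell)^k\right) \;\leq\; n^{0.001} \cdot R
\]
except with probability $O(1/n^{.99})$.

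Next, I would lower bound the trace by a power of $\specrad(\bB_n)$. The matrix $M \coloneqq \bB_n^\ell {\bB_n^*}^\ell$ is positive semidefinite, so all eigenvalues of $M^k$ are nonnegative and hence $\tr(M^k) \geq \specrad(M)^k = \|\bB_n^\ell\|_{\mathrm{op}}^{2k}$. By \pref{fact:specrad-op-norm} (or directly, since $\specrad \leq \|\cdot\|_{\mathrm{op}}$ and $\specrad(\bB_n^\ell) = \specrad(\bB_n)^\ell$), we have $\|\bB_n^\ell\|_{\mathrm{op}} \geq \specrad(\bB_n^\ell) = \specrad(\bB_n)^\ell$. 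Chaining these yields
\[
    \specrad(\bB_n)^{2k\ell} \;\leq\; \tr\!\left((\bB_n^\ell {\bB_n^*}^\ell)^k\right) \;\leq\; n^{0.001}\cdot R
\]
on the same high-probability event.

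Finally, I would take $(2k\ell)$-th roots to obtain
\[
    \specrad(\bB_n) \;\leq\; n^{0.001/(2k\ell)} \cdot R^{1/(2k\ell)}.
\]
The hypothesis $k\ell = \omega(\log n)$ gives $n^{0.001/(2k\ell)} = \exp(O(\log n / k\ell)) = 1 + o_n(1)$, which is exactly the claimed $(1+o_n(1))$ factor. There is no real obstacle here — the work is entirely in producing the bound $\E[\bS] \leq R$ (which this lemma takes as a hypothesis); the present step is just the standard Trace-Method payoff, and the only mild subtlety is using $n^{0.001}$ (rather than a constant) inside Markov so the failure probability is absorbed into the $O(1/n^{.99})$ bound while the correction $n^{0.001/(2k\ell)}$ remains $1+o_n(1)$.
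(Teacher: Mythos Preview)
Your approach is exactly the paper's, but you have the Markov threshold backwards. From $\E[\bS]\le R$, Markov gives $\Pr[\bS \ge t R]\le 1/t$, so choosing $t = n^{0.001}$ yields failure probability $n^{-0.001}$, which is \emph{not} $O(1/n^{.99})$ and dominates the error from \pref{lem:trace3}. You need a \emph{large} threshold to get a small failure probability: take $t = n^{0.99}$ (the paper simply uses $t=n$), obtaining $\Pr[\bS \ge n^{0.99} R]\le n^{-0.99}$, and then the correction factor after taking $(2k\ell)$-th roots is $n^{0.99/(2k\ell)} = \exp(O(\log n / k\ell)) = 1+o_n(1)$ by the hypothesis $k\ell=\omega(\log n)$. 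With this swap the argument is correct and identical to the paper's proof.
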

\begin{proof}
    Let $\bT = \tr((\bB_n^\ell {\bB_n^*}^\ell)^k)$.  On one hand, with $\lambda$ denoting eigenvalues and $\sigma$ denoting singular values, we have
    \[
        \bT \geq \lambda_{\text{max}}((\bB_n^\ell {\bB_n^*}^\ell)^k) = \lambda_{\text{max}}\parens*{\sqrt{\bB_n^\ell {\bB_n^*}^\ell}}^{2k} = \sigma_{\text{max}}(\bB_n^{\ell})^{2k} \geq \specrad(\bB_n^\ell)^{2k} = \specrad(\bB_n)^{2k\ell}.
    \]
    On the other hand, since $\bS$ is a nonnegative random variable (\pref{claim:S-nonneg}), we can apply Markov's Inequality to deduce that $\bS \leq n\cdot R$ except with probability at most~$1/n$.  Now from \pref{lem:trace3} we may infer that except with probability $O(1/n^{.99})$,
    \[
        \bT = \bS \leq n \cdot R \quad\implies\quad \specrad(\bB_n)^{2k\ell} \leq n \cdot R.
    \]
     The result now follows by taking $2k\ell$-th roots.
\end{proof}

\subsection{Eliminating singletons, and reduction to counting}
Our next step toward bounding $\E[\bS]$ is typical of the Trace Method:  Rather than first choosing~$\bcongraph_n$ randomly and then summing over the linkages therein, we instead sum over all \emph{potentially-appearing} linkages and insert an indicator that they actually appear in the realized random constraint graph.  Defining
\[
    \calK_n = \text{the ``complete'' constraint graph with $cn$ constraint vertices and $rn$ variable vertices},
\]
this means that
\begin{equation}                                        \label{eqn:trace2}
    \bS =  \sum_{\substack{(2k \times 2\ell)\text{-nonbacktracking} \\ \atomlist\textnormal{-}\text{linkages } \calC \text{ in } \calK_n}} 1[\calC \text{ is in } \bcongraph_n] \cdot 1[\calC \text{ is externally $\ell$-tangle-free within } \bcongraph_n] \cdot \Wt_{\binstgraph_n}(\calC).
\end{equation}
Here we wrote $\Wt_{\binstgraph_n}(\calC)$ to emphasize that even once $\calC$ is in $\bcongraph_n$ and is externally $\ell$-tangle-free, its weight is still a random variable arising from the $1$-wise uniform negations.  These negations will create another simplification (one not available to Friedman/Bordenave).  For this we will need another definition:
\begin{definition}[Singleton-free $\calC$'s]
    Let  $\calC$ be a  $(2k \times 2\ell)$-nonbacktracking circuit in $\calK_n$.  If there is an atom vertex that is passed through exactly once, we call it a \emph{singleton}.  If $\calC$ contains no singleton, we call it \emph{singleton-free}.
\end{definition}
Referring to \pref{eqn:trace2}, consider $\E[\bS]$.  If $\calC$ contains any singleton, then it will contribute~$0$ to this expectation.  The reason is that, provided $\calC$ appears in $\bcongraph_n$ and is externally $\ell$-tangle-free therein, the $1$-wise uniform negations will assign a uniformly random $\pm 1$ sign to the edge engendered by $\calC$'s singleton, and this sign will be independent of all other signs that go into $\Wt_{\binstgraph_n}(\calC)$.  On the other hand, when $\calC$ is singleton-free, we will simply upper-bound the (conditional) expectation of $\Wt_{\binstgraph_n}(\calC)$ by~$+1$.  We conclude that
\begin{equation}    \label{eqn:restrict}
    \E[\bS] \leq\sum_{\substack{(2k \times 2\ell)\text{-nonbacktracking} \\ \textit{singleton-free} \\ \atomlist\textnormal{-}\text{linkages } \calC \text{ in } \calK_n}} \Pr[\calC \text{ is in } \bcongraph_n \text{ and is externally $\ell$-tangle-free therein}].
\end{equation}
Let us now begin to simplify the probability calculation.
\begin{definition}[$E(\calC)$, $V(\calC)$, $G(\calC)$]
    Let $\calC$ be a $(2k \times 2\ell)$-nonbacktracking $\atomlist$-linkage in $\calK_n$. Write $E(\calC)$ for the set of undirected edges in $\calK_n$ formed by ``undirecting'' all the directed edges in~$\calC$ (this includes reducing from a multiset to a set, if necessary). Then let $G(\calC)$ denote the undirected subgraph of $\calK_n$ induced by $E(\calC)$, and write $V(\calC)$ for its vertices.
\end{definition}
Let's simplify the ``tangle-freeness'' situation.
\begin{definition}[Internal tangle-free linkages]
    We say that a $(2k \times 2\ell)$-nonbacktracking linkage $\calC$ in $\calK_n$ is \emph{internally $\ell$-tangle-free} if every vertex it touches is $\ell$-tangle-free \emph{within $G(\calC)$}.
\end{definition}
We certainly have:
\begin{multline*}
    \text{linkage $\calC$ not even internally $\ell$-tangle-free} \\ \implies\quad \Pr[\calC \text{ is in } \bcongraph_n \text{ and is externally $\ell$-tangle-free therein}] = 0.
\end{multline*}
Thus we can restrict the sum in \pref{eqn:restrict} to internally $\ell$-tangle-free linkages.  Having done that, we will upper bound the sum by dropping this insistence on \emph{external} tangle-freeness.  Thus
\begin{equation}    \label{eqn:restrict2}
    \E[\bS] \leq\sum_{\substack{(2k \times 2\ell)\text{-nonbacktracking} \\ \text{\emph{interally $\ell$-tangle-free}, singleton-free} \\ \atomlist\textnormal{-}\text{linkages } \calC \text{ in } \calK_n}} \Pr[\calC \text{ is in } \bcongraph_n].
\end{equation}

We will now bound $\Pr[\calC \text{ is in } \bcongraph_n]$, so as to reduce all our remaining problems to counting.  Towards this, recall that $\bcongraph_n$ is a random $n$-lift of the complete graph $K_{r,c}$.  One thing this implies is that every group-$i$ variable-vertex in $\bcongraph_n$ will have exactly one edge to each of $c$~groups of constraint-vertices, and vice versa.  Let us codify the $\calC$'s that don't flagrantly violate this property:
\begin{definition}[Valid $\calC$'s]
    We say a $(2k \times 2\ell)$-nonbacktracking $\atomlist$-linkage $\calC$ in $\calK_n$ is \emph{valid} if $G(\calC)$ has the property that every variable-vertex in it is connected to at most~$1$ constraint-vertex from each of the $c$~groups, and each constraint-vertex is connected to at most~$1$ variable-vertex from each of the $r$~groups.
\end{definition}
Evidently,  $\Pr[\calC \text{ is in } \bcongraph_n] = 0$ if $\calC$ is invalid.  Thus from \pref{eqn:restrict2} we can deduce:
\begin{equation} \label{eqn:restrict3}
    \E[\bS] \leq \sum_{\substack{(2k \times 2\ell)\text{-nonbacktracking} \\ \text{\emph{valid}, internally $\ell$-tangle-free, singleton-free} \\ \atomlist\textnormal{-}\text{linkages } \calC \text{ in } \calK_n}} \Pr[\calC \text{ is in } \bcongraph_n].
\end{equation}
Next, it is straightforward to show the following lemma (see Proposition A.8 of \cite{deshpande2018threshold} for essentially the same observation):
\begin{lemma}                                       \label{lem:validprob}
    If $\calC$ is a valid $(2k \times 2\ell)$-nonbacktracking $\atomlist$-linkage in $\calK_n$, and $k\ell = o(\sqrt{n})$, then \[\Pr[\calC \textnormal{ is in } \bcongraph_n] = (1+o_n(1)) \cdot n^{-|E(\calC)|}.\]
\end{lemma}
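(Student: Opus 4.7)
The plan is to exploit the standard ``permutation'' description of a uniformly random $n$-lift of $K_{r,c}$. Namely, for each edge $\{i,j\}$ of the base graph $K_{r,c}$ (where $i\in[r]$ is a variable group and $j\in[c]$ is a constraint group), one picks an independent uniformly random permutation $\bpi_{ij}:[n]\to[n]$, and then the variable vertex $(i,a)$ is adjacent to the constraint vertex $(j,b)$ in $\bcongraph_n$ precisely when $\bpi_{ij}(a) = b$. Thus the event ``a specific edge $e = \{(i,a),(j,b)\}\in E(\calK_n)$ lies in $\bcongraph_n$'' is controlled entirely by the single permutation $\bpi_{ij}$, and the permutations for different pairs $(i,j)$ are independent.

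The first step is to decompose the edges of $\calC$ according to which $(i,j)$-block they belong to: for each $(i,j)\in[r]\times[c]$, let $k_{ij}$ denote the number of undirected edges in $E(\calC)$ between a group-$i$ variable vertex and a group-$j$ constraint vertex, so that $\sum_{(i,j)} k_{ij} = |E(\calC)|$. Because $\calC$ is \emph{valid}, each variable vertex in $G(\calC)$ has at most one neighbor in each constraint group (and vice versa); consequently the $k_{ij}$ edges in block $(i,j)$ use $k_{ij}$ distinct variable-side labels and $k_{ij}$ distinct constraint-side labels. Therefore the event ``all edges of $\calC$ lying in block $(i,j)$ appear in $\bcongraph_n$'' is exactly the event that $\bpi_{ij}$ sends a prescribed list of $k_{ij}$ distinct inputs to a prescribed list of $k_{ij}$ distinct outputs, which has probability $(n-k_{ij})!/n!$. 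By independence across blocks,
\[
    \Pr[\calC\text{ is in }\bcongraph_n] \;=\; \prod_{(i,j)\in [r]\times[c]} \frac{(n-k_{ij})!}{n!}.
\]

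The second step is purely asymptotic. Since each segment of a $(2k\times 2\ell)$-nonbacktracking $\atomlist$-linkage contributes at most $2(\ell+1)$ edges to $E(\calC)$, we have $|E(\calC)| \le 4k(\ell+1) = O(k\ell)$, and in particular $k_{ij}\le |E(\calC)| = o(\sqrt{n})$ by hypothesis. For any integer $k\ge 0$ with $k=o(\sqrt n)$,
\[
    \frac{(n-k)!}{n!} \;=\; n^{-k}\prod_{i=0}^{k-1}\Bigl(1-\tfrac{i}{n}\Bigr)^{-1} \;=\; n^{-k}\exp\!\Bigl(O(k^2/n)\Bigr) \;=\; n^{-k}\bigl(1+O(k^2/n)\bigr).
\]
Multiplying over the (constantly many) blocks,
\[
    \Pr[\calC\text{ is in }\bcongraph_n] \;=\; n^{-|E(\calC)|}\cdot \exp\!\Bigl(O\bigl(\textstyle\sum_{i,j} k_{ij}^2/n\bigr)\Bigr),
\]
and since $\sum_{i,j}k_{ij}^2 \le \bigl(\sum_{i,j}k_{ij}\bigr)^2 = O((k\ell)^2) = o(n)$, the exponential factor is $1+o_n(1)$. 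This delivers the claimed estimate.

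There is no real obstacle here; the only points one must be careful about are (i) invoking validity to ensure the ``both sides distinct'' condition that turns a permutation constraint into a clean falling-factorial probability, and (ii) verifying that the hypothesis $k\ell=o(\sqrt n)$ is strong enough to absorb all the $1+O(k^2/n)$ error terms uniformly across the $rc=O(1)$ blocks.
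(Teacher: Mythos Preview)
Your proof is correct and follows essentially the same approach as the paper's: both exploit the permutation description of a random $n$-lift together with validity to reduce the probability to a product of falling-factorial terms, then use $k\ell = o(\sqrt{n})$ to show the ratio to $n^{-|E(\calC)|}$ is $1+o_n(1)$. Your block-by-block decomposition (using independence across the $rc$ permutations) is slightly cleaner than the paper's edge-by-edge conditional-probability sketch, but the content is the same.
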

\begin{proof}
    (Sketch.) Proceed through the edges in $E(\calC)$ in an arbitrary order.  Each has approximately a $1/n$ chance of appearing in $\bcongraph_n$, even conditioned on the appearance of the preceding edges.  For example, this is exactly true for the first edge.  For subsequent edges $e = \{u,v\}$, validity ensures that no preceding edge already connects~$u$ to a vertex in $v$'s part, or vice versa.  Thus the conditional probability of~$e$  appearing in $\bcongraph_n$ is essentially the probability that a particular edge appears in a random matching on $n+n$ vertices (which is $1/n$), except that a ``small'' number of vertex pairs may already have been matched.  This ``small'' quantity is at most $|E(\calC)| \leq 4k\ell$, so the $1/n$ probability becomes $1/(n-4k\ell)$ at worst.  Multiplying these conditional probabilities across all $|E(\calC)|$ edges yields a quantity that is off from $n^{-|E(\calC)|}$ by a factor of at most $(1+O(k\ell)/n)^{4k\ell} \leq 1+o_n(1)$, the inequality using $(k\ell)^2 = o(n)$.
\end{proof}

Combining this lemma with \pref{eqn:restrict3} and \pref{lem:finish-trace}, we are able to reduce bounding $\specrad(\bB_n)$ to a counting problem:
\begin{lemma}                                       \label{lem:finish-trace2}
    Assume that $\ell \leq \kappa \log n$ and $\omega(\log n) < k\ell < o(\sqrt{n})$.  Then except with
    probability~$O(1/n^{.99})$,
    \[
        \specrad(\bB_n) \leq (1+o_n(1)) \cdot R^{\frac{1}{2k\ell}}, \quad \text{where } R \coloneqq \sum_{\substack{(2k \times 2\ell)\textnormal{-nonbacktracking} \\ \textnormal{valid, internally $\ell$-tangle-free, singleton-free} \\ \atomlist\textnormal{-linkages } \calC \textnormal{ in } \calK_n}} n^{-|E(\calC)|}.
    \]
\end{lemma}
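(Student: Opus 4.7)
The plan is to assemble the proof by chaining together the three preceding intermediate results, with essentially no new content of its own. First I would invoke \pref{lem:finish-trace}, which reduces the task of bounding $\specrad(\bB_n)$ to producing a quantity $R'$ with $\E[\bS] \leq R'$; then the conclusion $\specrad(\bB_n) \leq (1+o_n(1)) \cdot (R')^{1/(2k\ell)}$ holds except with probability $O(1/n^{.99})$. The hypotheses $\ell \leq \kappa \log n$ and $k\ell = \omega(\log n)$ are precisely what \pref{lem:finish-trace} needs, so this step is immediate.

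Next I would unwind the chain of upper bounds on $\E[\bS]$ developed in \pref{eqn:restrict}--\pref{eqn:restrict3}. Starting from \pref{eqn:trace2}, singleton-freeness drops all $\calC$ containing an atom-vertex touched only once (such $\calC$ contribute~$0$ in expectation because of the independent $1$-wise uniform random sign on the singleton's engendered edge, using the observation in the text). Restricting to internally $\ell$-tangle-free $\calC$ is legal since otherwise $\calC$ cannot appear in an externally $\ell$-tangle-free way in $\bcongraph_n$. Dropping the external tangle-free requirement only weakens the bound, and restricting to valid $\calC$ costs nothing since invalid $\calC$ appear with probability~$0$ in a random $n$-lift of $K_{r,c}$.

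Now I would apply \pref{lem:validprob}, whose hypothesis $k\ell = o(\sqrt{n})$ is supplied by the assumption of the present lemma, to replace $\Pr[\calC \in \bcongraph_n]$ by $(1+o_n(1)) n^{-|E(\calC)|}$ uniformly across the valid $\calC$'s in the sum. This yields
\[
    \E[\bS] \leq (1+o_n(1)) \sum_{\substack{(2k \times 2\ell)\text{-nonbacktracking} \\ \text{valid, internally $\ell$-tangle-free, singleton-free} \\ \atomlist\text{-linkages } \calC \text{ in } \calK_n}} n^{-|E(\calC)|} = (1+o_n(1)) \cdot R,
\]
so we may take $R' = (1+o_n(1)) \cdot R$ in the application of \pref{lem:finish-trace}.

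Finally I would absorb the constant-factor losses into the $(1+o_n(1))$ prefactor, observing that $((1+o_n(1)) \cdot R)^{1/(2k\ell)} = (1+o_n(1))^{1/(2k\ell)} \cdot R^{1/(2k\ell)} = (1+o_n(1)) \cdot R^{1/(2k\ell)}$ because $k\ell \to \infty$ makes any $(1 \pm o_n(1))^{1/(2k\ell)}$ factor tend to~$1$. Combining this with the conclusion of \pref{lem:finish-trace} gives exactly the statement of \pref{lem:finish-trace2}, with the overall failure probability still $O(1/n^{.99})$ coming from \pref{lem:finish-trace}. I do not anticipate any real obstacle: all the substance lives in the lemmas we have already proved (the singleton cancellation, the tangle-free trimming, and the random-lift edge probability), so this final lemma is purely a bookkeeping composition.
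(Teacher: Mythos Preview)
Your proposal is correct and follows exactly the paper's approach: the paper simply states that \pref{lem:finish-trace2} follows by combining \pref{lem:validprob} with \pref{eqn:restrict3} and \pref{lem:finish-trace}, which is precisely the chain you spell out. Your explicit handling of the $(1+o_n(1))^{1/(2k\ell)}$ absorption is a nice bit of bookkeeping the paper leaves implicit.
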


\subsection{Tangle-free, singleton-free linkages are nearly duplicative}
Our goal in this subsection is to show that each linkage $\calC$ we sum over in \pref{lem:finish-trace2} is ``nearly duplicative'': the number of variable-vertices is at most $(1+o(1))k\ell$, and the same is true of constraint-vertices --- even though the obvious a priori upper bound for each of them is~$2k\ell$.  This factor-$\frac12$ savings is precisely the source of the square-root in \pref{thm:bordenave1}.  We begin with a graph-theoretic lemma and then deduce the nearly-duplicative property.
\begin{lemma}                                       \label{lem:deg3}
    Let $\calC$ be a $(2k \times 2\ell)$-nonbacktracking, internally $\ell$-tangle-free linkage in $\calK_n$. Assume $\log(k\ell) = o(\ell)$.  Then $G(\calC)$ has at most $O(k \log(k\ell))$ vertices of degree exceeding~$2$.
\end{lemma}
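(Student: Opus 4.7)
The plan is to reduce bounding $|V_{\ge 3}(G(\calC))|$ to bounding the cyclomatic complexity $\beta_1(G(\calC)) := |E(\calC)| - |V(\calC)| + 1$, and then control $\beta_1$ using internal $\ell$-tangle-freeness. Applying the handshake identity $\sum_v \deg_{G(\calC)}(v) = 2|E(\calC)|$ and splitting by degree class gives
\[
  |V_{\ge 3}(G(\calC))| \;\le\; \sum_{v:\,\deg(v)\ge 3}\big(\deg_{G(\calC)}(v) - 2\big) \;=\; 2\beta_1(G(\calC)) \;-\; 2 \;+\; |V_1(G(\calC))| \;+\; 2|V_0(G(\calC))|.
\]
Because $\calC$ is a connected closed walk, $|V_0(G(\calC))| = 0$; and because the nonbacktracking rule prevents a non-joint vertex from having only one distinct incident edge in $E(\calC)$, every degree-one vertex of $G(\calC)$ must be a spur tip that no other segment ever revisits, giving $|V_1(G(\calC))| \le 2k$. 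Hence $|V_{\ge 3}(G(\calC))| \le 2\beta_1(G(\calC)) + 2k$, and it suffices to prove $\beta_1(G(\calC)) = O(k\log(k\ell))$.

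For each of the $2k$ segments $s$, all vertices of its trace $G_s \subseteq G(\calC)$ lie within a ball of radius $\ell+1$ around its starting vertex, hence inside the $4\ell$-neighborhood of that vertex; by internal $\ell$-tangle-freeness this neighborhood contains at most one cycle, so $\beta_1(G_s) \le 1$ for every segment. The complication is that $\beta_1$ is not subadditive under unions: shared vertices between two segment traces can force additional independent cycles in their union. For edge-disjoint $H,H'$ the discrepancy obeys
\[
  \beta_1(H \cup H') - \beta_1(H) - \beta_1(H') \;=\; |V(H) \cap V(H')| - \big(c(H) + c(H') - c(H \cup H')\big),
\]
where $c(\cdot)$ counts connected components, so the remaining task is to control how often distinct segment traces meet at already-visited vertices.

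Cross-segment collisions are bounded by a covering argument. Pick a $4\ell$-net of $V(\calC)$ and cover $G(\calC)$ with the $4\ell$-balls around its points; by tangle-freeness each such ball contributes at most one cycle to $\beta_1(G(\calC))$, and since the linkage has total length $4k(\ell+1)$ a net of size $O(k)$ suffices to pin down all ``short'' cycles (length $\le 4\ell$, contained in some ball). For longer cycles that span multiple balls, a dyadic decomposition on cycle length into $O(\log(k\ell))$ scale classes, combined with the local tangle-free bound at each scale, yields the claimed $O(k\log(k\ell))$ bound on $\beta_1(G(\calC))$.

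The main obstacle is making the dyadic step rigorous: tangle-freeness is a local condition (one cycle per $4\ell$-neighborhood), while $G(\calC)$ can host cycles of length up to $O(k\ell)$, so one must show each long cycle can be pinned to a bounded number of net-balls without double-counting across scales. I expect this to mirror the tangle-peeling technique used by Bordenave and by Deshpande et al.; the logarithm should come purely from the dyadic decomposition rather than any intrinsic obstruction, so the qualitative conclusion $|V_{\ge 3}(G(\calC))| = O(k\log(k\ell))$ should follow without further essential ideas.
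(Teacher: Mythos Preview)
Your reduction to the cyclomatic number is clean and correct: the handshake identity does give $|V_{\ge 3}| \le 2\beta_1(G(\calC)) + |V_1|$ with $|V_1|\le 2k$, so the whole lemma would follow from $\beta_1(G(\calC)) = O(k\log(k\ell))$. That target bound is even true. The gap is in how you propose to prove it. Tangle-freeness is a single-scale hypothesis: it says each $4\ell$-ball contains at most one cycle, full stop. There is no ``local tangle-free bound at each scale'' to feed into a dyadic decomposition, and at scales larger than $4\ell$ you have no control whatsoever. So the sketch ``cover by $O(k)$ balls, then dyadically handle long cycles'' does not close: the covering gives $O(k)$ short independent cycles, but says nothing about how many additional independent cycles of length $\gg \ell$ the union supports, and $\beta_1$ counts exactly those too. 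You flag this yourself as the main obstacle, but it is not a detail to be filled in---it is the entire content of the lemma.

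The paper's route is quite different and supplies the missing idea. First it bounds the number $t$ of cycles of length at most $\ell$ by $t \le |V(\calC)|/(2\ell) + 1 = O(k)$, using tangle-freeness to argue that distinct short cycles must be $\Omega(\ell)$-far apart and hence each can be charged $\Omega(\ell)$ private vertices. Deleting one edge from each short cycle yields a graph of girth exceeding $\ell$ on $w \le 4k\ell$ vertices, and now the Alon--Hoory--Linial irregular Moore bound kicks in: any such graph has average degree at most $2 + O(\log w/\ell)$. Restoring the $t$ deleted edges and reading off $w_{3^+}$ from the average-degree inequality gives $w_{3^+} \le O(w\log w/\ell) + w_1 + 2t = O(k\log(k\ell))$. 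The logarithm thus comes from the Moore bound, not from any multi-scale or dyadic structure; this is also where the hypothesis $\log(k\ell)=o(\ell)$ is actually used. If you want to rescue your $\beta_1$-based outline, the cleanest fix is to invoke Alon--Hoory--Linial on the girth-$\ell$ graph to get $|E| - |V| \le O(w\log w/\ell) + t$ directly.
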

\begin{proof}
    For brevity, let us write $G = G(\calC)$, $w = |V(\calC)|$, and note that we have a trivial upper bound of $w \leq 4k\ell$.  Let $t$ denote the number of  cycles of length at most~$\ell$ in~$G$.  By deleting at most~$t$ edges, we can form a graph~$\wt{G}$ with girth at least~$\ell$. A theorem of Alon, Hoory, and Linial \cite{alon2002moore} implies that any (possibly irregular) graph with $w$ vertices and girth at least~$\ell$ must have average degree at most $2 + O(\log(w)/\ell)$ (this uses $\log(w) = o(\ell)$).  Thus $\wt{G}$ has such a bound on its average degree.  After restoring the deleted edges, we can still conclude that the average degree in~$G$ is at most $2 + O(\log(w)/\ell) + \frac{2t}{w}$.  Writing $w_1, w_2, w_{3^+}$ for the number of vertices in~$G$ of degree~$1$, $2$, and $3$-or-more respectively, this means
    \begin{gather*}
         2 + O(\log(w)/\ell) + \frac{2t}{w} \geq \frac{w_1 + 2w_2 + 3w_{3^+}}{w} = \frac{w_1 + 2(w - w_1 - w_{3^+}) + 3w_{3^+}}{w} = 2 - \frac{w_1}{w} + \frac{w_{3^+}}{w} \\
         \implies \quad w_{3^+} \leq O(w \log(w)/\ell) + w_1 + 2t.
    \end{gather*}
    The first term here is $O(k \log(k\ell))$ as desired, since $w \leq 4k\ell$.  We will also show the next two terms are~$O(k)$.  Regarding $w_1$, degree-$1$ vertices in $G$ can only arise from the spurs of~$\calC$, and hence $w_1 \leq 2k$.  Finally, $2t \leq O(k)$ follows from the below claim combined with $w \leq 4k\ell$:
    \begin{equation}    \label{eqn:t}
        t \leq \frac{w}{2\ell} + 1.
    \end{equation}
    We establish \pref{eqn:t} using the tangle-free property of~$\calC$.  Recall that $t$ is the number of ``short'' cycles in~$G$, meaning cycles of length at most~$\ell$.  By the $\ell$-tangle-free property of~$\calC$ (recalling the factor~$4$ in its definition), every $v \in V$ has at most one short cycle within distance $3\ell$ of it.  Thus if we choose paths in~$G$ that connect all short cycles (recall $G$ is connected), then to each short cycle we can uniquely charge at least $3\ell-1 \geq 2\ell$ vertices from these paths.  It follows that $w = |V| \geq 2\ell(t-1)$, establishing \pref{eqn:t}.
\end{proof}
\begin{corollary}                                       \label{cor:dupe}
    In the setting of \pref{lem:deg3}, assume also that $\calC$ is singleton-free and valid.  Then the number of variable-vertices $\calC$ visits is at most $k\ell + O(k \log(k\ell))$, and the same is true of constraint-vertices.
\end{corollary}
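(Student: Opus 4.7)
The plan is to bound $V_{\mathrm{con}}$ and $V_{\mathrm{var}}$ separately. For $V_{\mathrm{con}}$ this is straightforward: each of the $2k$ segments of $\calC$ traverses $\ell+1$ instance-graph edges, and each such edge passes through exactly one constraint-vertex, so the total number of constraint-vertex visits is $2k(\ell+1)$. Singleton-freeness forces each visited constraint-vertex to be hit at least twice, giving $V_{\mathrm{con}} \leq k(\ell+1) = k\ell + O(k)$, which already beats the claimed bound.

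For $V_{\mathrm{var}}$, a parallel count --- accounting for the single variable-vertex shared between consecutive segments at each spur --- shows that the total number of variable-vertex visits also equals $2k(\ell+2) - 2k = 2k(\ell+1)$. Letting $V_1$ denote the number of variable-vertices visited exactly once, the elementary inequality $\sum_v t_v \geq V_1 + 2(V_{\mathrm{var}}-V_1)$ immediately yields
\[
    V_{\mathrm{var}} \leq k(\ell+1) + \tfrac{1}{2}V_1,
\]
so the corollary reduces to showing $V_1 = O(k\log(k\ell))$.

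The plan for bounding $V_1$ is to classify each variable singleton $v$ by where and how it appears. Spur endpoints account for at most $2k$. Otherwise $v$ lies strictly in the interior of some segment, in which case nonbacktracking within that segment forces $v$ to have degree $\geq 2$ in $G(\calC)$; if this degree is $\geq 3$, then $v$ is one of the $O(k\log(k\ell))$ vertices allowed by \pref{lem:deg3}. The remaining case is $v$ interior with degree exactly $2$ in $G(\calC)$: here each visit to $v$ uses both incident edges, so $T_{e_1} = T_{e_2} = t_v = 1$. The key observation is that any such $T_e = 1$ edge $e = \{v,f\}$ must have its constraint-endpoint $f$ of degree $\geq 3$ in $G(\calC)$: otherwise $\deg(f) = 2$ (singleton-freeness rules out $\deg(f) = 1$ since each visit to $f$ uses two distinct incident edges), and then each of $f$'s $c_f \geq 2$ visits uses both its incident edges, forcing $T_e = c_f \geq 2$ --- a contradiction. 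Since each degree-$\geq 3$ constraint is incident to at most $r = O(1)$ edges and \pref{lem:deg3} bounds the number of such constraints by $O(k\log(k\ell))$, at most $O(k\log(k\ell))$ edges satisfy $T_e = 1$; and since each degree-$2$ interior singleton claims two disjoint $T_e = 1$ edges, the number of such singletons is again $O(k\log(k\ell))$. Summing across the three cases gives $V_1 = O(k\log(k\ell))$, completing the proof. The main subtlety --- linking $T_e = 1$ edges to degree-$\geq 3$ constraint-vertices via singleton-freeness combined with nonbacktracking --- is exactly the step just described, and once it is in place the accounting is routine.
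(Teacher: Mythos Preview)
Your proof is correct. It differs from the paper's in organization, though both rest on the same pillar: linking the ``exceptional'' vertices to the $O(k\log(k\ell))$ high-degree constraint-vertices supplied by \pref{lem:deg3}.

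The paper treats variable- and constraint-vertices uniformly via ``two-steps'' (variable--constraint--variable subpaths). It argues that if a two-step through constraint-vertex~$f$ has no duplicate, then $f$ has degree exceeding~$2$ in $G(\calC)$; \pref{lem:deg3} then bounds unduplicated two-steps, and since each two-step reveals at most one new variable-vertex and one new constraint-vertex, both counts follow at once.

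Your route splits the two counts. For constraint-vertices you observe that singleton-freeness alone already forces each to be visited twice, giving $V_{\mathrm{con}} \leq k(\ell+1)$ with no appeal to \pref{lem:deg3} or validity --- a cleaner bound than the paper's for that half. For variable-vertices you instead bound the number $V_1$ of variable-singletons, and the key step --- that a once-traversed edge must have a degree-$\geq 3$ constraint endpoint --- is the edge-level analogue of the paper's two-step duplication argument. One small wording point: the reason $\deg(f) \neq 1$ is purely the nonbacktracking property (each pass through $f$ uses two distinct incident edges), not singleton-freeness; singleton-freeness enters only to give $c_f \geq 2$. This does not affect correctness.

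So: same core idea, different bookkeeping. Your decomposition is arguably more transparent for the constraint-vertex half, while the paper's is more symmetric.
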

\begin{proof}
    Think of $\calC$ as a succession of $2k(\ell + 1)$ ``two-steps'', where a two-step is a length-$2$ directed path going from a variable-vertex, to a constraint-vertex, to a (distinct) variable-vertex.  Call two such two-steps ``duplicates'' if they use the same three variables (possibly going in the opposite direction).  We claim that ``almost all'' two-steps have at least one duplicate.  To see this, consider the constraint-vertex in some two-step~$a$.  Since $\calC$ is singleton-free, at least one other two-step~$b$ must pass through the constraint-vertex of~$a$.  If $b$~is not a duplicate of~$a$, then this constraint-vertex will have degree exceeding~$2$ in $G(\calC)$.  By \pref{lem:deg3} there are at most $O(k \log(k\ell))$ such constraint-vertices.  Further, by validity each constraint-vertex can support at most $\binom{r}{2} = O(1)$ unduplicated two-steps.  Thus at most $O(k \log(k\ell))$ of the $2k(\ell + 1)$ two-steps are unduplicated.

    Now imagine we walk through the two-steps of~$\calC$ in succession.  Each two-step can visit at most one ``new'' variable-vertex and one ``new'' constraint-vertex.  However each two-step which is a duplicate of a previously-performed two-step visits no new vertices.  Among the $2k(\ell + 1)$ two-steps, at most $O(k \log(k\ell))$ are unduplicated.  Thus at least $(2k(\ell + 1) - O(k \log(k\ell)))/2 = k(\ell + 1) - O(k \log(k\ell))$ two-steps are duplicates of previously-performed two-steps.  It follows that at most $k(\ell + 1)  + O(k \log(k\ell))$ two-steps visit any new vertex.  This completes the proof.
\end{proof}

\subsection{The final countdown}
We now wish to count the objects summed in the definition of~$R$ from \pref{lem:finish-trace2}. The remainder of this section will be devoted to proving:
\begin{theorem}     \label{thm:C}
    For every $\eps > 0$, except with probability~$O(1/n^{.99})$,
    
    \[
        \rho(\bB_n)\le (1+o_n(1))\cdot (1+\eps) \cdot\sqrt{\rho(|B_1|)}.
    \]
\end{theorem}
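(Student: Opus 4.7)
Setting $\ell = \kappa \log n$ (for $\kappa$ from \pref{prop:tanglefree}) and $k = k(n)$ with $\omega(\log n) < k\ell < o(\sqrt n)$, \pref{lem:finish-trace2} reduces the theorem to establishing
\[
R \;=\; \sum_\calC n^{-|E(\calC)|} \;\le\; n \cdot \rho(|B_1|)^{k\ell} \cdot (1+\eps)^{2k\ell}
\]
with probability $1 - O(n^{-0.99})$, where the sum ranges over valid, internally $\ell$-tangle-free, singleton-free $(2k\times 2\ell)$-nonbacktracking $\calA$-linkages $\calC$ in $\calK_n$. Once established, extracting $(2k\ell)$-th roots and using $n^{1/(2k\ell)} = 1 + o_n(1)$ (which holds since $k\ell = \omega(\log n)$) yields $\rho(\bB_n) \le (1+o_n(1))(1+\eps)\sqrt{\rho(|B_1|)}$.

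To bound $R$, I stratify linkages by their \emph{shape}---the equivalence class under relabeling of vertices within each of the $r+c$ vertex-groups of $\calK_n$. A shape with $v$ variable vertices and $f$ constraint vertices has at most $\prod_i (n)_{v_i}\prod_j (n)_{f_j} \le n^{v+f}$ realizations in $\calK_n$, so contributes at most $n^{v+f-|E(\calC)|} = n^{1-s}$ to $R$, where $s := |E(\calC)|-(v+f)+1 \ge 0$ is the excess (nonnegative by connectedness of $G(\calC)$). Letting $N_s$ denote the number of shapes of excess $s$, I have $R \le \sum_{s\ge 0} n^{1-s} N_s$, and the plan is to show (i) the tree contribution $n N_0 \le n\cdot\rho(|B_1|)^{k\ell}(1+\eps)^{2k\ell}$ and (ii) the higher-excess contributions are geometrically smaller.

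For (i), each tree-shape linkage lifts canonically to a closed $(2k\times 2\ell)$-nonbacktracking linkage walk in the additive-product universal cover $X = A_1 \addprod \cdots \addprod A_c$ of $|\calI_1|$ starting at a distinguished directed edge, so $N_0 \le |\vec E(|\calI_1|)| \cdot \widetilde N_0$ where $\widetilde N_0$ is the number of such walks in $X$ from a fixed edge $\vec e_0$. Using the edge-reversal involution $J$ (which satisfies $J B_X J = B_X^*$ for the nomadic walk operator $B_X$ of $X$), the linkage structure unfolds as
\[
\widetilde N_0 \;=\; \bigl(B_X^\ell (B_X^*)^\ell\bigr)^k[\vec e_0, \vec e_0] \;\le\; \bigl\| B_X^\ell (B_X^*)^\ell\bigr\|_{\mathrm{op}}^k \;=\; \|B_X^\ell\|_{\mathrm{op}}^{2k}.
\]
By \pref{fact:specrad-op-norm}, $\|B_X^\ell\|_{\mathrm{op}}^{1/\ell}\to\rho(B_X)$, so for $\ell$ large and any fixed $\eta > 0$, $\|B_X^\ell\|_{\mathrm{op}} \le (\rho(B_X)+\eta)^\ell$. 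Combining with $\rho(B_X)\le\sqrt{\gr}$ (from \pref{lem:infinite-nomadic-bound}) and the identity $\rho(|B_1|) = \gr$ (since $|B_1|$ is a nonnegative matrix with every row summing to $\gr$ by \pref{fact:diag-square}), I obtain $\widetilde N_0 \le (\sqrt{\rho(|B_1|)}+\eta)^{2k\ell}$, yielding $nN_0 \le n\cdot O(1)\cdot\rho(|B_1|)^{k\ell}(1+\eps)^{2k\ell}$ upon taking $\eta$ small. For (ii), a standard encoding-by-excess argument (as in~\cite{bordenave2015new,brito2018spectral,deshpande2018threshold}) shows that each additional unit of excess multiplies $N_s$ by at most $(k\ell)^{O(1)}$ while dividing the weight $n^{1-s}$ by $n$; since $k\ell = O(\log n) = o(n)$ and tangle-freeness (via \pref{lem:deg3}) caps the relevant excess at $O(k)$, the resulting geometric sum over $s\ge 1$ is $o(nN_0)$.

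The main obstacle is verifying the precise combinatorial encoding for (ii) in the nomadic-walk setting, where ``edges'' are length-two subpaths through atoms, tangles may arise from within-atom cycles as well as global cycles in $\calK_n$, and the near-duplicative property (\pref{cor:dupe}) together with $\calA$-linkage compatibility must be carefully tracked. This parallels the Friedman/Bordenave encoding in~\cite{bordenave2015new, deshpande2018threshold} but requires additional atom-level bookkeeping (and a careful treatment of the singleton-free constraint, which was what enabled the reduction to \pref{eqn:restrict} in the first place). With the bound on $R$ in hand, \pref{lem:finish-trace2} immediately yields \pref{thm:C}.
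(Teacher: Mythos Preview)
Your tree-case bound (i) is correct and takes a genuinely different route from the paper: you bound $N_0$ via the spectral radius of the nomadic operator on the additive product~$X$, whereas the paper encodes fresh steps as nomadic walks in the finite base graph $|\calI_1|$ and separately caps their total number via the near-duplicative property (\pref{cor:dupe}). Two caveats on your (i): \pref{lem:infinite-nomadic-bound} is stated for the \emph{signed} $B_X$, but you need it for $|B_X|$ (to \emph{count} walks rather than sum signed weights) --- the proof there only uses absolute values so it does go through, but this should be said explicitly. More significantly, both $\rho(|B_X|)\le\sqrt{\gr}$ and $\rho(|B_1|)=\gr$ rely on the two-eigenvalue hypothesis (via \pref{fact:diag-square}), while \pref{thm:C} is stated and proved for arbitrary $\pm 1$-weighted atoms; the paper's argument never touches~$X$ or~$\gr$, bounding fresh-block encodings directly by $\|(|B_1|)^s\|_1\le\hat\rho^{\,s}$.

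The real gap is (ii). What you call a ``standard encoding-by-excess argument'' is exactly the content of \pref{lem:fb}, and it is where essentially all the technical work in the paper's proof lives: one must encode each canonical linkage into fresh, boundary, and stale blocks, invoking tangle-freeness to bound the stale encodings (at most two nonbacktracking paths of given length in a unicyclic graph) and \pref{cor:dupe} to cap the fresh count at $2k\ell+o(k\ell)$. Once that encoding is carried out it already subsumes the $s=0$ case, so your part~(i) becomes redundant rather than a simplification --- there is no clean way to reduce an excess-$s$ linkage to a tree linkage plus $(k\ell)^{O(s)}$ choices without re-doing the stale-block analysis. Two minor corrections in your (ii) sketch: you have $k\ell=\omega(\log n)$ by your own setup, not $O(\log n)$; and the excess is not capped at $O(k)$ --- \pref{lem:deg3} together with the handshake identity caps it at $O(k\log(k\ell))$, but in any case the paper simply sums the geometric series over all $b\ge 0$ without needing a cap.
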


\noindent The bulk of the technical matter in the proof of \pref{thm:C} will involve analyzing
    \begin{equation}    \label{eqn:C}
        (2k \times 2\ell)\textnormal{-nonbacktracking, valid, internally $\ell$-tangle-free, singleton-free, } \atomlist\textnormal{-linkages } \calC
    \end{equation}
in $\calK_n$.

\begin{definition}[Steps: stale, fresh, and boundary]   \label{def:steptype}
    We call each of the $4k(\ell + 1)$ directed edges from which~$\calC$ is composed a \emph{step}.  If we imagine traversing these steps in order, they ``reveal'' vertices and edges of~$G(\calC)$ as we go along.  We call a step \emph{stale} if the edge it traverses was previously traversed in~$\calC$ (in some direction).  Note that both endpoints of the edge must also have been previously visited.  Otherwise, if the step traverses a ``new'' edge, it will be designated either ``fresh'' or ``boundary''.  It is designated \emph{fresh} if the vertex it reaches was never previously visited in~$\calC$.  Otherwise, the step is \emph{boundary}; i.e., the step goes between two previously-visited vertices, but along a new edge.  For the purposes of defining fresh/boundary, we specify that the initial vertex of~$\calC$ is always considered to be ``previously visited''.
\end{definition}
The following facts are immediate:
\begin{fact}                                        \label{fact:f-count}
    The number of fresh steps in~$\calC$ is $|V(\calC)|-1$.  (The $-1$ accounts for the fact that the initial vertex is considered ``previously visited''.)  Since the number of fresh and boundary steps together is $|E(\calC)|$, it follows that the number of boundary steps is $|E(\calC)| - |V(\calC)| + 1$.
\end{fact}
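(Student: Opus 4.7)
The plan is to establish both identities by a straightforward step-by-step accounting argument; no technical machinery is needed beyond the trichotomy in \pref{def:steptype}. I would process the $4k(\ell+1)$ steps of $\calC$ in the order they are performed, maintaining a running ``visited'' set of vertices (initialized to $\{v_0\}$, the starting vertex of $\calC$, which by the convention in \pref{def:steptype} counts as previously-visited) and a running ``traversed'' set of undirected edges (initially empty).

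For the first identity, I would observe that by definition a step is fresh precisely when its terminal vertex is not yet in the visited set; when this occurs the visited set grows by exactly one element. Conversely, every vertex of $V(\calC)\setminus\{v_0\}$ is eventually added to the visited set, and it is added by the first step that reaches it, which is necessarily fresh (since its terminal vertex was previously unvisited). This establishes a bijection between fresh steps and $V(\calC)\setminus\{v_0\}$, yielding exactly $|V(\calC)|-1$ fresh steps.

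For the boundary count, I would apply an analogous accounting on the edge side. By \pref{def:steptype}, a step is non-stale (i.e.\ fresh or boundary) exactly when its underlying undirected edge has not yet been traversed, and when this happens that edge enters the traversed set. This yields a bijection between non-stale steps and $E(\calC)$, so the fresh and boundary steps together number $|E(\calC)|$. Subtracting the $|V(\calC)|-1$ fresh steps from this total gives $|E(\calC)|-|V(\calC)|+1$ boundary steps, as claimed.

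There is no real obstacle to this argument: the two identities are simply the combinatorial skeleton of ``$|V|-1$ tree edges and $|E|-|V|+1$ independent cycles'' made explicit for the traversal order of~$\calC$, and the proof uses none of the structural hypotheses on~$\calC$ (nonbacktracking, valid, internally $\ell$-tangle-free, singleton-free) beyond the step trichotomy itself.
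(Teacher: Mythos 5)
Your proof is correct: the bijections between fresh steps and the non-initial vertices of $V(\calC)$, and between non-stale steps and the edges of $E(\calC)$, are exactly the accounting the paper has in mind when it declares this fact ``immediate'' after \pref{def:steptype} (the paper gives no explicit proof). Your observation that the first step reaching a new vertex must also traverse a new edge (since a previously traversed edge forces both endpoints to be previously visited) is the one point worth spelling out, and you handled it correctly.
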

\newcommand{\lkgs}{\mathrm{Lkgs}} \newcommand{\Lkgs}{\mathrm{Lkgs}}
\begin{definition}
    We write $\Lkgs(f,b)$ for the collection of linkages as in \pref{eqn:C} having exactly $f$~fresh edges and $b$~boundary edges.
\end{definition}
Our goal is to show:
\begin{lemma}                                   \label{lem:fb}
    For every $\hat{\rho} > \rho(|B_1|)$ we have:
    
    $$|\Lkgs(f,b)| \leq \poly(k,\ell)^{b+k} \cdot n^{f+1} \cdot \hat{\rho}^{f/2}$$
    
    \noindent where the constants in the $\poly$ factor depend on $\hat{\rho}$.
\end{lemma}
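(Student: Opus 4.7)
The approach is an injective encoding argument in the style of Friedman and Bordenave, adapted to the nomadic walk setting. The plan is to describe, for each linkage $\calC \in \Lkgs(f,b)$, a combinatorial encoding requiring $O(n^{f+1}\,\hat{\rho}^{f/2}\,\poly(k,\ell)^{b+k})$ many symbol choices; injectivity of the encoding then yields the desired bound on $|\Lkgs(f,b)|$.

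First I would traverse the $4k(\ell+1)$ constraint-graph steps of $\calC$ in order and classify each step as fresh, boundary, or stale (per \pref{def:steptype}), recording at each step the minimum data needed to recover the next directed edge. Second, I would account for the lift data placing $\calC$ inside $\calK_n$. Each of the $f$ fresh steps reaches a new vertex lying in a prescribed group of size $n$, contributing a factor of $n$ per fresh step, hence $n^f$ in total; combined with the $O(n)$ choices for the initial vertex this yields $n^{f+1}$. Each of the $b$ boundary steps requires specifying a previously-visited vertex, of which there are $|V(\calC)| = f+1 \leq O(k\ell)$ by \pref{cor:dupe}; this contributes $\poly(k,\ell)^{b}$. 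Stale steps require no additional lift data---once the shape is fixed and the boundary destinations are recorded, they are forced.

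Third---and this is the delicate step---I would bound the ``shape'' information, namely the projection of $\calC$ to $|\instgraph_1|$. Using the spectral bound $\||B_1|^t\|_{\mathrm{op}} \leq \hat{\rho}^t$, the number of nomadic walks in $|\instgraph_1|$ of a given length is controlled by powers of $\hat{\rho}$. The singleton-free property inherited from the random negations is crucial here: it forces each constraint-vertex of $\calC$ to be passed through by at least two distinct two-steps. Combined with internal $\ell$-tangle-freeness, this guarantees that the effective ``new-territory'' walk length in the instance graph is only $f/2$, because the fresh constraint-graph steps pair up into $f/2$ instance-graph steps that extend the walk into unexplored portions of $|\instgraph_1|$. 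The remaining shape data---corrections at boundary events, at transitions between the $2k$ segments, and at spurs---is captured by $\poly(k,\ell)^{b+k}$ choices.

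The main obstacle is the precise shape count. A naive direct bound ignoring singleton-freeness yields $\hat{\rho}^{2k(\ell+1)}$ or at best $\hat{\rho}^{f}$, neither of which suffices. The square-root improvement to $\hat{\rho}^{f/2}$ is exactly what renders the Trace Method bound consistent with the target $\sqrt{\rho(|B_1|)}$ (see \pref{lem:finish-trace2}), and it hinges on the pairing of fresh constraint-graph steps induced by singleton-freeness. Establishing this pairing rigorously, while handling the corrections for boundary and stale steps uniformly across all $2k$ segments, is the main technical content of the proof and the place where the random-negation ``new tool'' alluded to earlier does the essential work.
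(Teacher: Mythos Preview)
Your encoding framework matches the paper's, but your explanation contains a real misattribution that would derail an actual proof, plus one omission.

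The exponent $f/2$ in $\hat{\rho}^{f/2}$ has nothing to do with singleton-freeness. It arises purely from the bipartite structure of the constraint graph: a fresh block of $q$ constraint-graph steps projects to a nomadic walk in the base instance graph $|\calI_1|$ of length $\lfloor q/2\rfloor$ (since two constraint-graph steps, variable--constraint--variable, equal one instance-graph step), and the number of such nomadic walks is at most a constant times $\hat{\rho}^{\lfloor q/2\rfloor}$ via the spectral-radius bound on $|B_1|$. Summing over fresh blocks yields $\hat{\rho}^{f/2}$ directly. Singleton-freeness is not invoked anywhere in the proof of \pref{lem:fb}; it enters only later, through \pref{cor:dupe}, to cap the range of~$f$ when summing in the proof of \pref{thm:C}. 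So your concern that ``a naive direct bound ignoring singleton-freeness yields $\hat{\rho}^{f}$'' is misplaced: one gets $\hat{\rho}^{f/2}$ immediately once one encodes only the base-graph projection of the \emph{fresh} blocks and handles stale and boundary blocks by other means.

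That last point is your omission: stale steps are \emph{not} forced. At a given vertex of $G(\calC)$ there may be several previously-traversed edges available, so a stale block is not determined by its endpoints and length alone. The paper encodes each stale block by (i)~its terminal vertex, chosen among the $f+1$ already-visited vertices, and (ii)~an index selecting among all nonbacktracking walks of the required length between the given endpoints within the subgraph revealed so far. Internal $\ell$-tangle-freeness---used \emph{here}, not in the fresh-step count---guarantees that each segment lies in a subgraph with at most one cycle, so there are at most~$2$ such walks per stale block. Since the number of stale blocks is at most $b+2k$ (each is preceded by a boundary step or a segment start), this costs $(2(f+1))^{b+2k} \leq \poly(k,\ell)^{b+k}$. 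Without this tangle-free argument, the stale contribution is uncontrolled.
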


Before proving this lemma, observe that many linkages are the same modulo the labels between $1$ and $n$ that are defined by the lifting. To make this formal we first introduce some notation and follow by using it to aid in the proof of \pref{lem:fb}.  

Given a linkage $\calC$ we write $\calC = ((v_1, i_1), (v_2, i_2), \ldots, (v_{4k(\ell + 1)}, i_{4k(\ell + 1)}))$, where $(v_j, i_j)$ are vertices from~$\calK_n$ and $v_j$ indicates the base vertex (from $K_{r, c}$) and $i_j$ is an integer (between~$1$ and $n$) that indicates the lifted copy. This notation means that $\calC$ traverses this sequence of vertices in this order.

\begin{definition}[Isomorphism of linkages]  \label{def:unnum}
    Given two linkages~$\calC$ and $\calC'$ that visit $|V(\calC)| = |V(\calC')|$ vertices, we say they are \emph{isomorphic} if are the same modulo the labels between $1$ and $n$ that are defined by the lifting. Formally, letting $\calC = ((v_1, i_1), \ldots, (v_{4k(\ell + 1)}, i_{4k(\ell + 1)}))$ and $\calC' = ((v_1', i_1'), \ldots, (v_{4k(\ell + 1)}', i_{4k(\ell + 1)}'))$, there exist permutations $\pi_v$ on $[n]$ for each $v \in V(K_{r, c})$ 
    such that for all $j$ we have $v'_j = v_j$ and $i'_j = \pi_{v_j}(i_j)$.
\end{definition}

This isomorphism relation induces equivalence classes for which we want to assign representative elements. We do so as follows.

\begin{definition}[Canonical linkages]  \label{def:canlink}
    A linkage~$\calC$ is said to be \emph{canonical} if for every vertex $v \in K_{r, c}$, if $\calC$ visits $j$ distinct lifted copies of $v$ then it first visits $(v, 1)$, then $(v, 2), \ldots$, and finally $(v, j)$.  We write $\Lkgs^c(f,b)$ for the collection of \emph{canonical} linkages as in \pref{eqn:C} having exactly $f$~fresh steps and $b$~boundary steps.
\end{definition}

\begin{proposition}                                     \label{prop:canlkgs-bound}
    $\displaystyle |\Lkgs(f,b)| \leq n^{f+1}|\Lkgs^c(f,b)|.$
\end{proposition}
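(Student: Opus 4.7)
The plan is to establish this by a direct counting argument based on the equivalence relation induced by isomorphism of linkages. The isomorphism relation from \pref{def:unnum} partitions $\Lkgs(f,b)$ into equivalence classes, and I would first check that each class contains exactly one canonical linkage. Given any $\calC \in \Lkgs(f,b)$, define $\pi_v$ for each $v \in V(K_{r,c})$ to be the permutation that sends the $k$th distinct copy of $v$ visited by $\calC$ to the integer $k$ (and acts arbitrarily on unvisited copies). Applying these $\pi_v$'s yields a canonical linkage in the same equivalence class, and the canonical labeling condition pins down the permutations on visited copies uniquely, so canonicity forces uniqueness of the representative.

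Next, I would upper-bound the size of the equivalence class of any canonical linkage $\calC$. For each base vertex $v \in V(K_{r,c})$, let $j_v$ denote the number of distinct lifted copies of $v$ that $\calC$ visits; by canonicity, these copies are labeled $1, 2, \ldots, j_v$. A linkage isomorphic to $\calC$ is specified by, for each $v$, an injection $\pi_v : \{1, \ldots, j_v\} \to [n]$ (how the canonical labels get remapped among the $n$ lifted copies). The number of such injections is $n(n-1)\cdots(n-j_v+1) \leq n^{j_v}$. Multiplying over all base vertices and using $\sum_{v} j_v = |V(\calC)|$, we get that the equivalence class of $\calC$ has size at most
\[
    \prod_{v \in V(K_{r,c})} n^{j_v} = n^{|V(\calC)|} = n^{f+1},
\]
where the last equality uses \pref{fact:f-count}, which says that the number of fresh steps equals $|V(\calC)| - 1$.

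Summing over canonical representatives then yields
\[
    |\Lkgs(f,b)| = \sum_{\calC \in \Lkgs^c(f,b)} |[\calC]| \leq n^{f+1} \cdot |\Lkgs^c(f,b)|,
\]
as desired. This is essentially a bookkeeping proposition rather than a substantive lemma, so I do not anticipate any real obstacle; the only thing to be a bit careful about is stating the bijection between canonical linkages and equivalence classes, and verifying that the injections $\pi_v$ fully parametrize the class (once the base-vertex sequence $v_1, \ldots, v_{4k(\ell+1)}$ is fixed, which it is by the canonical linkage).
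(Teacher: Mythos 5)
Your proof is correct and takes essentially the same approach as the paper: both fix a canonical representative of each isomorphism class and bound the class size by $n^{f+1}$ by choosing a new lift-label for each of the $|V(\calC)| = f+1$ distinct visited vertices. The one small refinement you add --- explicitly verifying that each class contains a \emph{unique} canonical representative --- is correct and is silently assumed in the paper's argument.
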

\begin{proof}
    It suffices to show that for every canonical linkage $\calC \in \Lkgs^c(f,b)$, it has at most $n^{f+1}$ isomomorphic linkages $\calC' \in \Lkgs(f,b)$.  By \pref{fact:f-count}, $\calC$ visits exactly $f+1$ distinct vertices, call them $\{(v^{(1)}, i^{(1)}), \dots, (v^{(f+1)}, i^{(f+1)})\}$.  Every isomorphic $\calC'$ may be obtained by taking a list of numbers $(i'_1, \dots, i'_{f+1}) \in [n]^{f+1}$ and replacing all appearances of $(v^{(j)}, i^{(j)})$ in $\calC$ with $(v^{(j)}, i'_j)$. (Not all such lists lead to isomorphic $\calC'$, but we don't mind overcounting.)  This completes the proof, as there are $n^{f+1}$ such lists.
\end{proof}

We now have all the tools to prove the desired lemma.

\begin{proof}[Proof of \pref{lem:fb}]
    With \pref{prop:canlkgs-bound} in place, 
    it suffices to bound the number of canonical linkages as follows:
    $$|\Lkgs^c(f,b)| \leq \poly(k,\ell)^{b+k} \cdot \hat{\rho}^{f/2}.$$

    Our strategy is to give an encoding of linkages in $\Lkgs^c(f,b)$, and then bound the number of possible encodings.  Let $\calC$ be an arbitrary linkage in $\Lkgs^c(f,b)$.  To encode $\calC$, we first partition it into $2k$ many ``$2(\ell + 1)$-segments'', each of which corresponds to nonbacktracking walks between spurs,
    and specify how to encode each $2(\ell + 1)$-segment.  We then partition each $2(\ell + 1)$-segment into maximal contiguous blocks of the same type of step (``type'' as in \pref{def:steptype}) and store an encoding of information about the steps therein.  Ultimately, it will be possible to uniquely decipher $\calC$ from its constructed encoding.

    Towards describing our encoding, we first define the sequence $S_{\visited}$, constructed from the $f+1$ vertices in $V(\calC)$ sorted in increasing order of first-visit time.

    \paragraph{Encoding positions of blocks.}  We define $P_{\fresh}$, $P_{\boundary}$ and $P_{\stale}$, which are sequences noting the starting positions and ending positions of fresh, boundary, and stale blocks respectively, in the order visited in $\calC$.

    \paragraph{Encoding fresh steps.}  Let $S_{\fresh}$ be the sequence obtained by replacing each vertex of $S_{\visited}$ with its corresponding base vertex in $K_{r,c}$.

    \paragraph{Encoding boundary steps.}  Let $\beta$ be a block of boundary steps $(v_0,v_1),\dots,(v_{|\beta|-1},v_{|\beta|})$.  Let $t_i$ be such that $v_i$ is the $t_i$-th vertex in $S_{\visited}$.  We define $\Enc_b(\beta)$ as the sequence $(t_0,t_1),\dots,(t_{|\beta|-1},t_{|\beta|})$.  Let $\beta_1,\dots,\beta_{T}$ be the blocks of boundary steps in the order in which they appear in $\calC$.  We store the concatenation of $\Enc_b(\beta_1),\dots,\Enc_b(\beta_{T})$, which we call $S_{\boundary}$.

    \paragraph{Encoding stale steps.}  For each block $\beta$ of stale steps, let $u$ be the first vertex and $v$ be the last vertex of $\beta$, and let $p({\beta})$ be the position in $\calC$ where the block $\beta$ starts.  Let $\calS_{p({\beta}),uv,|\beta|}$ denote the list (in, say, lexicographic order) of all possible nonbacktracking walks from $u$ to $v$ of length $|\beta|$ that only use edges visited by $\calC$ before position $p({\beta})$; note that $\beta$ occurs in $\calS_{p({\beta}),uv,|\beta|}$.  We let $\Enc_s(\beta) = (t, m)$ such that the $t$-th vertex in $S_{\visited}$ is the last vertex visited in $\beta$ (that is $v$), and $m$ is the position of $\beta$ in $\calS_{p({\beta}),uv,|\beta|}$.  Let $\beta_1,\dots,\beta_{T}$ be the blocks of stale steps in the order they appear in $\calC$.  We store the concatenation of $\Enc_s(\beta_1),\dots,\Enc_s(\beta_T)$, which we call $S_{\stale}$.

    \medskip

    We refer to the constructed $(P_{\fresh},P_{\boundary},P_{\stale},S_{\fresh},S_{\boundary},S_{\stale})$ as the \emph{encoding} of $\calC$.

    \paragraph{Unique reconstruction of linkage.}  In this part of the proof, we show that we can uniquely recover $\calC$ from its encoding.  First, since $\calC$ is a canonical linkage we can correctly reconstruct $S_{\visited}$ from $S_{\fresh}$ because the labels are visited in canonical (increasing) order.  From $P_{\fresh},P_{\boundary}$ and $P_{\stale}$, we can infer a partition of $[4k(\ell+1)]$ into blocks in order $\beta_1,\dots,\beta_T$ and the type of each block.  We sketch an inductive proof that shows how $\calC$ can be uniquely recovered from its encoding.  As our base case, the first block is a fresh block and hence all the steps that comprise it can be recovered from $S_{\visited}$.  Towards our inductive step, suppose we know the edges in $\calC$ from blocks $\beta_1,\dots,\beta_i$, we show how to recover the edges in $\beta_{i+1}$ from the encoding of $\calC$.  If $\beta_{i+1}$ is a fresh or boundary block, its recovery is straightforward.  Suppose $\beta_{i+1}$ is a stale block.  Then from $P_{\stale}$ and $S_{\stale}$, we can infer the last vertex $v$ visited by $\beta_{i+1}$ and the length of the block $|\beta_{i+1}|$.  We know the first vertex $u$ in $\beta_{i+1}$ and can reconstruct $\calS_{p({\beta_{i+1}}),uv,|\beta_{i+1}|}$ since we have complete information about the steps in $\calC$ prior to $\beta_{i+1}$.  We can then infer $\beta_{i+1}$ from $\calS_{p({\beta_{i+1}}),uv, |\beta_{i+1}|}$ and $S_{\stale}$.

    \paragraph{Bounding the number of metadata encodings.}  A fresh block must either be followed by a boundary step, or must occur at the end of a $2(\ell + 1)$-segment; analogously, a stale block must either be preceded by a boundary step, or must occur at the start of a $2(\ell + 1)$-segment.  Thus, the number of fresh blocks and stale blocks are each bounded by $b+2k$.  Further, the number of boundary blocks is clearly bounded by $b$.  Since there are at most $(4k(\ell + 1))^2$ distinct combinations of starting and ending positions of a block, the number of distinct possibilities that the triple $(P_{\fresh}, P_{\stale}, P_{\boundary})$ can be bounded by $(4k(\ell + 1))^{6b+8k}$.

    \paragraph{Bounding number of fresh step encodings.}
    For a fixed $P_{\fresh}$, we give an upper bound on the number of possibilities for $S_{\fresh}$.  Fixing $P_{\fresh}$ fixes a number $T$ as well as $q_1,\dots,q_T$ such that there are $T$ fresh blocks in $\calC$ and such that the $i$-th block has length $q_i$.  Let us focus on a single fresh block~$\beta$.  The sequence of vertices in $S_{\fresh}$ corresponding to $\beta$ give a nonbacktracking walk $W_{\beta}$ in the base constraint graph $K_{r,c}$.  Additionally, for a consecutive triple $(i,j,i')$ in this nonbacktracking walk, $\{i,i'\}$ must be an edge in the corresponding base instance graph $\calI_1$ due $\calC$ being an $\calA$-linkage.  Let $\wt{W}_{\beta}$ be the maximal subwalk of $W_{\beta}$ that starts and ends with a variable vertex.  Note that $\wt{W}_{\beta}$ corresponds exactly to a nomadic walk in $\calI_1$ whose length is at most $|\beta|/2$.  Now regarding $W_\beta$, either $W_{\beta}$ is equal to $\wt{W}_{\beta}$ (there is~$1$ way in which this can happen), or both the first and last steps of $W_{\beta}$ are not in $\wt{W}_{\beta}$ (there are $c^2$ ways in which this can happen), or exactly one of the first and last steps of $W_{\beta}$ is not in $\wt{W}_{\beta}$ (there are $2c$ ways in which this can happen).  This tells us that the number of distinct possibilities for $W_{\beta}$ is bounded by $(c+1)^2\delta_{\lfloor |\beta|/2\rfloor}$, where $\delta_{s}$ denotes the number of nomadic walks of length $s$ in $\calI_1$.  Thus, we obtain an upper bound of $(c+1)^{2T}\prod_{i=1}^T \delta_{\lfloor q_i/2\rfloor}$ on the number of possibilities for $S_{\fresh}$, which is bounded by $(c+1)^{2b+4k}\prod_{i=1}^T \delta_{\lfloor q_i/2\rfloor}$.  Towards simplifying the expression, we bound $\delta_s$.  Observe that for a given edge $e \in E(|\calI_1|)$, the number of nomadic walks of length $s$ starting with $e$ is given by $\|(|B_1|)^s \bm{1}_{e}\|_1$. This implies that $\delta_{s} \le \|(|B_1|)^s\|_1$, where $\|(|B_1|)^s\|_1 = \sup \{\|(|B_1|)^sx\| : \|x\|_1 = 1 \}$.

    To bound the above, first observe that we have a simple bound $\|(|B_1|)^s\|_1 \leq \kappa^s$ provided $\kappa$ is a large enough constant (for example, the maximum degree of $\calI_1$ is a possible such value).  Next, it is known that
    \[\lim_{s \to \infty} \left(\|(|B_1|)^s\|\right)^{1/s} = \rho(|B_1|),\]
    and hence  for any $\hat{\rho} > \rho(|B_1|)$, there is a constant $\ell_0$ such that $\|(|B|)^s\|_1 \leq (\hat{\rho})^s$ for all $s \geq \ell_0$. Putting these two bounds together we get that for any $s\ge\ell_0$, 
    \[
        \delta_s \leq \|(|B_1|)^s\|_1 \leq (\hat{\rho})^{s - \ell_0} \kappa^{\ell_0}.
    \]
    Thus the number of possibilities for $S_{\fresh}$ is bounded by $(c+1)^{2b + 4k} \prod_{i=1}^T (\hat{\rho})^{\lfloor q_i / 2\rfloor - \ell_0} \kappa^{\ell_0}$, which can, in turn, be bounded by $\left((c+1)^2\kappa^{\ell_0}\hat{\rho}^{-\ell_0}\right)^{b+2k} (\hat{\rho})^{f/2}$.

    \paragraph{Bounding number of stale step encodings.}
    For any stale block $\beta$, let $u$ and $v$ be the first and last visited vertices respectively.  $S_{\stale}$ specifies a number in $[f+1]$ to encode $v$, and a number between $1$ and $M$ where $M$ is the total number of nonbacktracking walks from $u$ to $v$ of length $|\beta|$.  Since the number of stale blocks is bounded by $b+2k$, the number of possibilities for what $S_{\stale}$ can be is at most $(M(f+1))^{b+2k}$.  We show that $M\le 2$, and hence translate our upper bound to $(2(f+1))^{b+2k}$.

    Since all blocks are contained within $2(\ell + 1)$-segments and the $\calA$-linkage being encoded is $4\ell$-tangle-free, the steps traversed by $\beta$ are in a connected subgraph $H$ with at most one cycle.  Our goal is to show that there are at most $2$ nonbacktracking walks of a given length $L$ between any pair of vertices $x,y$.  There is at most one nonbacktracking walk between $x$ and $y$ that does not visit vertices on $C$, the single cycle in $H$, and if such a walk exists, it is the unique shortest path.  Any nonbacktracking walk between $x$ and $y$ that visits vertices of $C$ can be broken down into 3 phases --- (i) a nonbacktracking walk from $x$ to $v_x$, the closest vertex in $C$ to $x$, (ii) a nonbacktracking walk from $v_x$ to $v_y$, the closest vertex in $C$ to $y$, (iii) a nonbacktracking walk from $v_y$ to $y$.  Phases (i) and (iii) are always of fixed length, whose sum is some $L'$.  Thus, it suffices to show that there are at most $2$ nonbacktracking walks from $v_x$ to $v_y$ of length $L-L'$.  Any nonbacktracking walk takes $r$ rotations in $C$ and then takes an acyclic path from $v_x$ to $v_y$, whose length is observed to be strictly less than $|C|$, for $r\ge 0$.  The steps in a nonbacktracking walk from $v_x$ to $v_y$ are either all in a clockwise direction, or all in an anticlockwise direction, and hence for any $r$ there are at most $2$ nonbacktracking walks from $v_x$ to $v_y$ of length strictly between $(r-1)|C|$ and $r|C|+1$.  In particular, there are at most $2$ nonbacktracking walks between $v_x$ and $v_y$ of length equal to $L-L'$.

    \paragraph{Bounding number of boundary step encodings.}  $S_{\boundary}$ is a sequence of $b$ tuples in $[f+1]^2$, and hence there are at most $(f+1)^{2b}$ distinct sequences that $S_{\boundary}$ can be.

    \paragraph{Final bound:}  The above gives us a final bound of:
    \begin{equation}
        (4k(\ell + 1))^{6b+8k} ((c+1)^2\kappa^{\ell_0}(\hat{\rho})^{-\ell_0})^{b+2k}(\hat{\rho})^{f/2}2^{b+2k}(f+1)^{3b+2k}
    \end{equation}
    which, when combined with \pref{prop:canlkgs-bound} gives the desired claim.
\end{proof}

We wrap everything up by combining the results of \pref{lem:fb} with \pref{lem:finish-trace2} to prove \pref{thm:C}.

\begin{proof}[Proof of \pref{thm:C}]
    Let $\ell = \kappa\log n$, where $\kappa$ is the  universal constant  from \pref{prop:tanglefree}, let $k$ be chosen so that $k\ell = \omega(\log n)$, let $R$ be as in \pref{lem:finish-trace2}, and let $\hat{\rho}$ be any constant greater than $\rho(|B_1|)$.  Then we have
    \begin{align*}
        R &= \sum_{\substack{(2k \times 2\ell)\textnormal{-nonbacktracking} \\ \textnormal{valid, internally $\ell$-tangle-free, singleton-free} \\ \atomlist\textnormal{-linkages } \calC \textnormal{ in } \calK_n}} n^{-|E(\calC)|}\\
        &= \sum_{f=0}^{\infty} \sum_{b=0}^{\infty} |\Lkgs(f,b)|n^{-(f+b)}\\
        &= \sum_{f=0}^{2k\ell+O(k\log(k\ell))} \sum_{b=0}^{\infty} |\Lkgs(f,b)|n^{-(f+b)} &\text{(by \pref{cor:dupe})}\\
        &\le \sum_{f=0}^{2k\ell+O(k\log(k\ell))} \sum_{b=0}^{\infty} \frac{\poly(k,\ell)^{b}\cdot\poly(k,\ell)^k\cdot (\hat{\rho})^{f/2}\cdot n}{n^b} &\text{(by \pref{lem:fb})}\\
        &= \sum_{f=0}^{2k\ell+O(k\log(k\ell))} n\cdot\poly(k,\ell)^k\cdot(\hat{\rho})^{f/2}\sum_{b=0}^{\infty}\left(\frac{\poly(k,\ell)}{n}\right)^b\\
        &= \sum_{f=0}^{2k\ell+O(k\log(k\ell))} n\cdot\poly(k,\ell)^k\cdot(\hat{\rho})^{f/2}\cdot\left(\frac{1}{1-\frac{\poly(k,\ell)}{n}}\right)\\
        &\le 2n\cdot\poly(k,\ell)^k(2k\ell + O(k\log(k\ell)))(\hat{\rho})^{k\ell + O(k\log(k\ell))}
    \end{align*}
    For the choice of $k$ and $\ell$ in the theorem statement, we can use \pref{lem:finish-trace2} to conclude that
    \[
        \rho(\bB_n) \le (1+o_n(1)) \cdot \sqrt{\hat{\rho}}.
    \]
    with probability $1-O(n^{.99})$.  Since the above bound holds for any $\hat{\rho}>\rho(|B_1|)$, for any $\eps > 0$, it can be rewritten as
    \[
        \rho(\bB_n) \le (1+o_n(1)) \cdot (1+\eps) \cdot \sqrt{\rho(|B_1|)}. \qedhere
    \]
\end{proof}

\section{The SDP value for random two-eigenvalue CSPs} \label{sec:wrapup}

In this section, we put all the ingredients together to conclude our main theorem.  We start with an elementary and well known fact and include a short proof for self containment.

\begin{fact}    \label{fact:sdp-to-spec}
    Let $A$ be a real $n\times n$ symmetric matrix.  Then
    \begin{align*}
        \frac{1}{n}\max_{X\psdge 0, X_{ii}=1}\langle A, X\rangle &\le \lambda_{\max}(A)\\
        \frac{1}{n}\min_{X\psdge 0, X_{ii}=1}\langle A, X\rangle &\ge \lambda_{\min}(A)
    \end{align*}
\end{fact}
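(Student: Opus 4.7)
The plan is to use the elementary PSD-inner-product inequality: if $M_1, M_2 \succeq 0$ then $\langle M_1, M_2 \rangle \ge 0$. Combined with the constraint $X_{ii} = 1$, this gives $\mathrm{tr}(X) = n$, which produces the right normalization.

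For the upper bound, I would first note that $\lambda_{\max}(A) \cdot \mathrm{Id} - A \succeq 0$ by the spectral theorem (its eigenvalues are $\lambda_{\max}(A) - \lambda_i(A) \ge 0$). Then for any feasible $X$ in the SDP (so $X \succeq 0$ with unit diagonal), the Frobenius inner product of two PSD matrices is non-negative, giving
\[
    0 \le \langle \lambda_{\max}(A) \cdot \mathrm{Id} - A, X\rangle = \lambda_{\max}(A) \cdot \mathrm{tr}(X) - \langle A, X\rangle = \lambda_{\max}(A) \cdot n - \langle A, X\rangle,
\]
where the last equality uses $\mathrm{tr}(X) = \sum_i X_{ii} = n$. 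Rearranging and dividing by $n$ yields the first inequality.

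The lower bound is symmetric: the matrix $A - \lambda_{\min}(A) \cdot \mathrm{Id}$ is PSD (its eigenvalues are $\lambda_i(A) - \lambda_{\min}(A) \ge 0$), so for any feasible $X$ we similarly obtain
\[
    0 \le \langle A - \lambda_{\min}(A) \cdot \mathrm{Id}, X\rangle = \langle A, X\rangle - \lambda_{\min}(A) \cdot n,
\]
which rearranges to the second inequality.

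There is no real obstacle here; the only thing worth double-checking is the fact that $\langle M_1, M_2\rangle \ge 0$ for PSD $M_1, M_2$, which follows by writing $M_2 = \sum_k w_k w_k^{\top}$ and noting $\langle M_1, w_k w_k^{\top}\rangle = w_k^{\top} M_1 w_k \ge 0$.
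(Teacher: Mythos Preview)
Your proof is correct. It is close in spirit to the paper's argument but takes a slightly different route: the paper relaxes the constraint $X_{ii}=1$ to $\tr(X)=n$, rescales to $\tr(X)=1$, and then invokes the standard variational identity $\max_{X\succeq 0,\ \tr(X)=1}\langle A,X\rangle=\lambda_{\max}(A)$. You instead bypass that identity by pairing the feasible $X$ directly against the PSD matrix $\lambda_{\max}(A)\,\mathrm{Id}-A$ and using nonnegativity of the PSD inner product. Both arguments ultimately hinge on the same fact that the diagonal constraint forces $\tr(X)=n$; yours is marginally more self-contained since it does not appeal to the density-matrix characterization of $\lambda_{\max}$, while the paper's version makes the relaxation step (unit diagonal $\to$ fixed trace) more explicit.
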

\begin{proof}
    We prove the upper bound below.  The proof of the lower bound is identical.
    \begin{align*}
        \frac{1}{n}\max_{X\psdge 0, X_{ii}=1}\langle A, X\rangle &\le \frac{1}{n}\max_{X\psdge 0, \tr(X)=n}\langle A, X\rangle\\
        &= \max_{X\psdge 0, \tr(X)=1}\langle A, X\rangle\\
        &= \lambda_{\max}(A).
    \end{align*}
\end{proof}

Recall $\gr:=(c-1)(-\lambda_1\lambda_2)$ and $r_X := 2\sqrt{\gr}$.

\begin{theorem}
    Let $\calA = (A_1,\dots, A_c)$ be a sequence of $r$-vertex atoms with edge weights $\pm 1$.  Let $\bcongraph_n$ denote a random $n$-lifted constraint graph and $\binstgraph_n = \atomlist(\bcongraph_n)$ an associated instance graph with $1$-wise uniform negations~$(\bxi^f_{ii'})$.  Let $\bA_n$ be the adjacency matrix of $\binstgraph_n$.  Then, with probability $1-o_n(1)$,
    \begin{align*}
        \max_{X\psdge 0, X_{ii}=1} \langle\bA_n, X\rangle &= (\lambda_1+\lambda_2+r_X\pm\eps)n\\
        \min_{X\psdge 0, X_{ii}=1} \langle\bA_n, X\rangle &= (\lambda_1+\lambda_2-r_x\pm\eps)n.
    \end{align*}
\end{theorem}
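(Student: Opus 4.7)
The plan is to show that the SDP max and the SDP min are each sandwiched between matching upper and lower bounds, using the three ingredients that have already been established: the elementary SDP-to-spectrum inequality (\pref{fact:sdp-to-spec}), the two-sided spectral enclosure (\pref{thm:boundadj}), and the explicit primal witness construction (\pref{thm:SDP-sol}). None of these steps requires new ideas beyond assembling what the paper has already proved; the only ``work'' is verifying that the finite-instance statements hold simultaneously in the same high-probability event.

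First I would handle the maximization direction. For the upper bound, \pref{fact:sdp-to-spec} gives
\[
    \max_{X\psdge 0,\ X_{ii}=1} \langle \bA_n, X\rangle \;\le\; n\cdot\lambda_{\max}(\bA_n),
\]
and then \pref{thm:boundadj} (read as a two-sided enclosure on $\spec(\bA_n)$, which is what its proof actually delivers via \pref{thm:C}) yields $\lambda_{\max}(\bA_n)\le \lambda_1+\lambda_2+r_X+\eps$ with probability $1-o_n(1)$. For the matching lower bound, I would invoke the matrix $M_+$ produced in \pref{thm:SDP-sol}: it is PSD with all-ones diagonal, so it is feasible for the SDP, and it achieves $\langle \bA_n, M_+\rangle\ge (\lambda_1+\lambda_2+r_X-\eps)n$ whp. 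Combining, the SDP max lies in the claimed interval with probability $1-o_n(1)$.

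Next I would repeat the symmetric argument for the minimization direction. \pref{fact:sdp-to-spec} gives $\min \langle \bA_n, X\rangle \ge n\lambda_{\min}(\bA_n)$, and the left edge of the spectral enclosure in \pref{thm:boundadj} gives $\lambda_{\min}(\bA_n)\ge \lambda_1+\lambda_2-r_X-\eps$ whp. The matching upper bound on the SDP minimum is supplied by the matrix $M_-$ in \pref{thm:SDP-sol}, which is constructed exactly in parallel to $M_+$ starting from the witness $f_v^{(-1)}$ rather than $f_v^{(1)}$.

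Finally, I would take the intersection of the four events (two spectral bounds from \pref{thm:boundadj}, and the two SDP witness events from \pref{thm:SDP-sol}), which still has probability $1-o_n(1)$ by a union bound, to conclude both displayed equalities simultaneously. The only mild obstacle is a bookkeeping one: one must be sure that the $\eps$ appearing in \pref{thm:SDP-sol} and the $\eps$ appearing in \pref{thm:boundadj} (and \pref{thm:C}) can be chosen to match, and that the probability losses add up to $o_n(1)$; both are immediate since each constituent theorem allows $\eps$ to be an arbitrary positive constant and each failure probability is $o_n(1)$. No additional estimates or auxiliary constructions appear to be needed.
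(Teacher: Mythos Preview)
Your proposal is correct and follows essentially the same approach as the paper's own proof: combine \pref{thm:SDP-sol} for the inner bounds and \pref{fact:sdp-to-spec} together with \pref{thm:boundadj} for the outer bounds, then union over the events. The paper's write-up is terser (it cites only \pref{fact:sdp-to-spec} for the upper bound on the max, leaving the invocation of \pref{thm:boundadj} implicit), but the logical structure is identical to yours.
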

\begin{proof}
    $\max_{X\psdge 0, X_{ii}=1}\langle \bA_n, X\rangle \ge (\lambda_1+\lambda_2+r_X-\eps)n$ follows from \pref{thm:SDP-sol} and $\max_{X\psdge 0, X_{ii}=1}\langle \bA_n, X\rangle \le (\lambda_1+\lambda_2+r_X+\eps)n$ follows from \pref{fact:sdp-to-spec}.  The upper and lower bounds on $\min_{X\psdge 0, X_{ii}=1}\langle\bA_n, X\rangle$ can be determined identically.
\end{proof}

\section*{Acknowledgments}  \label{sec:ack}
We thank Yuval Peled for emphasizing the bipartite graph view of additive lifts, and Tselil Schramm for helpful discussions surrounding the trace method on graphs.  S.M.\ would like to thank Jess Banks and Prasad Raghavendra for plenty of helpful discussions on orthogonal polynomials and nonbacktracking walks.  Finally, we are grateful to Xinyu Wu for bringing the relevance of~\cite{bordenave2018eigenvalues} to our attention and helping us to understand the issues discussed in \pref{sec:bc}.

\bibliographystyle{alpha}
\bibliography{sort_4}

\end{document}